\newcommand\NoDo{\renewcommand\algorithmicdo{}}
\newcommand\NoThen{\renewcommand\algorithmicthen{}}
\let\oldnl\nl
\newcommand{\nonl}{\renewcommand{\nl}{\let\nl\oldnl}}
\newcommand{\probft}[0]{\textsc{ProBFT}\xspace}
\newcommand{\vrf}[0]{VRF\xspace}
\newcommand{\type}[1]{\textsc{#1}}
\newcommand{\msg}[3]{\ensuremath{\langle \type{#1}, #2 \rangle_{#3}}\xspace}
\theoremstyle{definition}
\newtheorem{definition}{Definition}
\newtheorem{theorem}{Theorem}
\newtheorem{lemma}{Lemma}
\newtheorem{corollary}{Corollary}
\def\@ACM@checkaffil{
    \if@ACM@instpresent\else
    \ClassWarningNoLine{\@classname}{No institution present for an affiliation}%
    \fi
    \if@ACM@citypresent\else
    \ClassWarningNoLine{\@classname}{No city present for an affiliation}%
    \fi
    \if@ACM@countrypresent\else
        \ClassWarningNoLine{\@classname}{No country present for an affiliation}%
    \fi
}
\author{Diogo Avelãs}
\affiliation{
  \institution{LASIGE, Faculdade de Ciências, Universidade de Lisboa}
  \city{}
  \country{Portugal}
  }
\email{dinoroba@proton.me}
\author{Hasan Heydari}
\affiliation{
  \institution{LASIGE, Faculdade de Ciências, Universidade de Lisboa}
  \city{}
  \country{Portugal}
  }
\email{hheydari@ciencias.ulisboa.pt}
\author{Eduardo Alchieri}
\affiliation{
  \institution{Universidade de Brasilia}
  \city{}
  \country{Brasil}
  }
\email{alchieri@unb.br}
\author{Tobias Distler}
\affiliation{
  \institution{Friedrich-Alexander-Universität Erlangen-Nürnberg}
  \city{}
  \country{Germany}
  }
\email{distler@cs.fau.de}
\author{Alysson Bessani}
\affiliation{
  \institution{LASIGE, Faculdade de Ciências, Universidade de Lisboa}
  \city{}
  \country{Portugal}
  }
\email{anbessani@ciencias.ulisboa.pt}
\begin{document}

\ifthenelse{\boolean{extendedVersion}}{
    \title{Probabilistic Byzantine Fault Tolerance (Extended Version)}
}{
    \title{Probabilistic Byzantine Fault Tolerance}
}

\begin{abstract}
Consensus is a fundamental building block for constructing reliable and fault-tolerant distributed services. 
Many Byzantine fault-tolerant consensus protocols designed for partially synchronous systems adopt a pessimistic approach when dealing with adversaries, ensuring safety even under the worst-case scenarios that adversaries can create.
Following this approach typically results in either an increase in the message complexity (e.g., PBFT) or an increase in the number of communication steps (e.g., HotStuff).
In practice, however, adversaries are not as powerful as the ones assumed by these protocols.
Furthermore, it might suffice to ensure safety and liveness properties with high probability.
To accommodate more realistic and optimistic adversaries and improve the scalability of BFT consensus, we propose \probft (Probabilistic Byzantine Fault Tolerance).
\probft is a leader-based probabilistic consensus protocol with a message complexity of $O(n\sqrt{n})$ and an optimal number of communication steps that tolerates Byzantine faults in permissioned partially synchronous systems.
It is built on top of well-known primitives, such as probabilistic Byzantine quorums and verifiable random functions.
\probft guarantees safety and liveness with high probability even with faulty leaders, as long as a supermajority of replicas is correct and using only a fraction (e.g., $20\%$) of messages exchanged in PBFT.
We provide a detailed description of \probft's protocol and its analysis.
\end{abstract}

\begin{CCSXML}
<ccs2012>
   <concept>
       <concept_id>10010147.10010919.10010172</concept_id>
       <concept_desc>Computing methodologies~Distributed algorithms</concept_desc>
       <concept_significance>500</concept_significance>
   </concept>
   <concept>
       <concept_id>10010520.10010575</concept_id>
       <concept_desc>Computer systems organization~Dependable and fault-tolerant systems and networks</concept_desc>
       <concept_significance>500</concept_significance>
   </concept>
 </ccs2012>
\end{CCSXML}

\ccsdesc[500]{Computing methodologies~Distributed algorithms}
\ccsdesc[500]{Computer systems organization~Dependable and fault-tolerant systems and networks}

\keywords{Byzantine fault-tolerance, Consensus, Probabilistic protocols, Byzantine quorum systems}

\maketitle

\section{Introduction}

\paragraph{Context}
Consensus is a fundamental building block for constructing reliable and fault-tolerant distributed services, where participants agree on a common value from the initially proposed values.
This problem is primarily used to implement state machine replication~\cite{schneider_1990,modsmart,distler21byzantine} and atomic broadcast~\cite{luan1990fault,modular_broadcast,pedone1998}, and attracted considerable attention in the last few years, mainly due to its significant role in blockchains~\cite{vukolic_2015,nakamoto_2008,Bes20} and decentralized payment systems (e.g.,~\cite{lokhava_2019}).
Due to its importance and widespread applicability, consensus has been extensively studied in diverse system models, considering various synchrony assumptions and a spectrum of failure models, ranging from fail-stop to Byzantine, across permissioned and permissionless settings~\cite{pass_2017,li2023quorum,cachin_2022,vassantlal_2023}.

Many Byzantine fault-tolerant (BFT) consensus protocols (e.g., PBFT~\cite{pbft} and HotStuff~\cite{yin19hotstuff}\footnote{Strictly speaking, PBFT~\cite{pbft} and HotStuff~\cite{yin19hotstuff} are state machine replication protocols, not consensus protocols.
In this paper, when we refer to PBFT and HotStuff, we specifically discuss the single-shot versions of these protocols presented in~\cite{makingByzConsLive}, which address the consensus problem.}) adopt a pessimistic approach when dealing with Byzantine participants and adversaries, ensuring the safety of protocols even when Byzantine participants behave completely arbitrarily under the worst-case scenarios that corruption and scheduling adversaries can create.
That is, they typically consider adversaries that choose a corruption strategy and manipulate the delivery time of messages based on the entire history of the system, including the past and current states of replicas, as well as their exchanged messages.

Protocols following the pessimistic approach are built on the idea of 
\begin{enumerate*}[label=(\arabic*)]
\item making non-revocable decisions by considering the opinions of a quorum of replicas, as opposed to relying solely on a single replica, and
\item requiring the quorums used for making decisions to intersect in at least one correct replica.
\end{enumerate*}
Although effective, ensuring deterministic quorum overlaps poses inherent challenges in achieving both resource efficiency and high performance.
Some protocols~(e.g.,~PBFT) approach this conflict by opting for low latency and applying message-exchange patterns with quadratic message complexity. 
However, this can be prohibitively expensive, especially for BFT systems with a large number of replicas. 
Other protocols~(e.g.,~HotStuff) aim at reducing message complexity at the cost of adding extra communication steps. 
Unfortunately, this approach leads to increased end-to-end response times.

In practice, however, implementing punishment mechanisms, like those used in existing Proof-of-Stake blockchains (e.g.,~\cite{eth,cason2021design}), can make it costly for Byzantine participants to deviate from its specification, knowing that their actions may lead to detection and subsequent punishment.
Besides, adversaries are not as powerful as the ones assumed by protocols following the pessimistic approach.
Accordingly, in many real-world applications, it is sufficient to assume a static corruption adversary, which chooses a corruption strategy at the beginning of the execution of a consensus instance, as well as an adversarial scheduling that manipulates the delivery time of messages independently of the sender's identifier and if it is faulty or not.
Given this, ensuring safety and liveness with a high probability might be acceptable in many practical scenarios. 
This paper describes a new protocol for these less pessimistic practical scenarios that requires less message exchanges and still keeps optimal latency.
This protocol is called \probft (Probabilistic Byzantine Fault Tolerance).  

\paragraph{Overview of \probft}
\probft is a BFT leader-based consensus protocol that operates in permissioned partially synchronous systems and probabilistically ensures liveness and safety properties.
It achieves the optimal \emph{good-case latency} of three communication steps~\cite{goodcaselatency}, just like PBFT, albeit with a message complexity of $O(n\sqrt{n})$.
Figure~\ref{fig:4replicas:pbft:probft:hotsfuff} compares \probft, PBFT, and HotStuff in terms of the number of communication steps and exchanged messages for different numbers of replicas.
\probft's resource efficiency and scalability improvements are enabled by a unique combination of building blocks that are usually not employed in traditional BFT protocols, including probabilistic quorum systems~\cite{malkhi01probabilistic}, a mechanism to configure the degree of communication redundancy, and a verifiable random function (\vrf)~\cite{verifiable_random}.

\ifthenelse{\boolean{extendedVersion}}{
\begin{figure}[t!]
    \begin{subfigure}[t]{0.5\textwidth}
        \centering
        \includegraphics[scale=0.9]{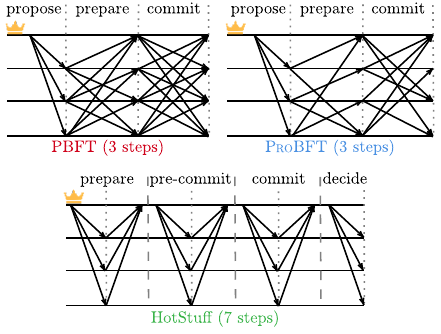}
        \caption{Message pattern and number of communication steps.}
        \label{fig:4replica:pbft:probft:hotstuff}
     \end{subfigure}
     \hfill
     \begin{subfigure}[t]{0.49\textwidth}
        \centering
    	\includegraphics[scale=0.65]{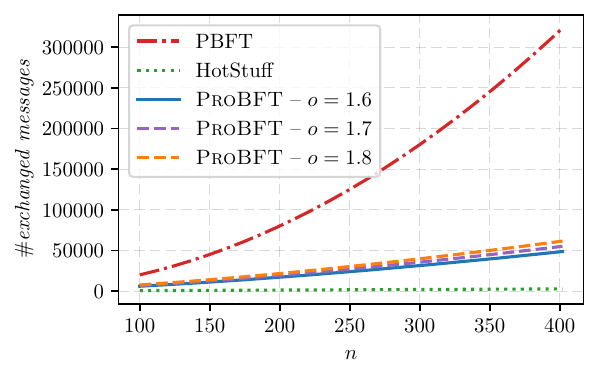}
        \caption{Number of exchanged messages.}
        \label{fig:pbft:probft:hotstuff}
    \end{subfigure}
    \caption{Comparing the normal case of three consensus protocols \--- PBFT, \probft, and HotStuff \--- regarding the number of communication steps and message complexity.}
    \Description[]{}
	\label{fig:4replicas:pbft:probft:hotsfuff}
\end{figure}
}{
\begin{figure}[!t]
    \begin{minipage}[b]{0.4\textwidth}
        \centering
        \adjustbox{left=4.2cm, right=0cm}{\includegraphics[scale=1.15]{pictures/4replica.pdf}}
        \subcaption{Message pattern and number of communication steps.}
        \label{fig:4replica:pbft:probft:hotstuff}
     \end{minipage}
     \hfill
     \begin{minipage}[b]{0.4\textwidth}
        \centering
    	\adjustbox{left=4.2cm, right=0cm}{\includegraphics[scale=0.8]{pictures/pbft_probft_hotstuff.pdf}}
        \subcaption{Number of exchanged messages.}
        \label{fig:pbft:probft:hotstuff}
    \end{minipage}
    \caption{Comparing the normal case of three consensus protocols \--- PBFT, \probft, and HotStuff \--- regarding the number of communication steps and message complexity.}
    \Description[]{Comparing the normal case of three consensus protocols \--- PBFT, \probft, and HotStuff \--- regarding the number of communication steps and message complexity.}
	\label{fig:4replicas:pbft:probft:hotsfuff}
\end{figure}
}

Like PBFT and HotStuff, \probft operates in a sequence of views, each having a designated leader responsible for proposing a value.
The protocol consists of two modes of execution \--- normal case and view-change.
The normal case starts when the leader broadcasts its proposal through a \textsc{Propose} message.
Since the leader might be Byzantine and send distinct proposals to different replicas, non-Byzantine replicas need to communicate with each other to check that they received the same proposal.
With this aim, upon receiving a \textsc{Propose} message, a correct replica multicasts the proposal to a sample of $o \times q$ distinct replicas taken uniformly at random from the set of replicas, where $q = O(\sqrt{n})$ and $o > 1$ is a real constant.
Upon receiving \textsc{Prepare} messages from a probabilistic quorum with size~$q$, a correct replica multicasts a \textsc{Commit} message to another random sample composed of $o \times q$ distinct replicas.
Upon receiving \textsc{Commit} messages from a probabilistic quorum with size $q$, a correct replica decides on the proposed value.

\probft's normal case execution relies on probabilistic quorums to solve consensus. 
That is, in contrast to traditional BFT protocols (e.g., PBFT, HotStuff, and randomized protocols like~\cite{cachin2000random,mostefaoui2015signature}), we abandon the requirement of quorums strictly having to intersect and instead only aim at quorums overlapping with high probability. 
As a key benefit, this strategy enables us to keep the number of communication steps at a minimum while significantly reducing quorum sizes, thereby improving resource consumption and scalability. 
Specifically, for a system with $n$~replicas, \probft employs probabilistic quorums of size $q=l\sqrt{n}$, with $l \geq 1$ being a configurable, typically small constant~\cite{malkhi01probabilistic}. 
For example, for $l=2$ and $n=100$, a replica can make progress after receiving~$20$ matching messages from different replicas, which is a significant reduction when compared with the $67$~messages necessary in PBFT.

Since a comparably small number of messages is sufficient to advance a phase in \probft, to offer resilience against Byzantine behavior, it is crucial to prevent faulty replicas from manipulating the decisions in probabilistic quorums~(e.g.,~by flooding the system with their own messages).
In \probft, this is achieved by delegating the selection of message recipients to a globally known \vrf.
That is, in the protocol phases relying on probabilistic quorums, replicas do not freely pick the destinations of their messages but instead are required to send the messages to the specific group of recipients determined by the \vrf. 

Although novel, \probft is heavily based on PBFT, being thus somewhat simple to understand and implement.
Nonetheless, the probabilistic nature of the protocol makes its analysis far from trivial.
More specifically, the main challenges encountered in analyzing \probft are:

\begin{itemize}[leftmargin=1em,label=--]
\item In the analysis of probabilistic algorithms, we often deal with \emph{independent} events, enabling the application of powerful and more straightforward bounds like the Chernoff bounds~\cite{motwani1995randomized}. 
However, in \probft, the probability of forming probabilistic quorums by replicas is \emph{dependent}.
That is, as replicas multicast their \textsc{Prepare} (resp., \textsc{Commit}) messages to random samples, knowing that a replica has received \textsc{Prepare} (resp., \textsc{Commit}) messages from a probabilistic quorum decreases the chance of other replicas to receive \textsc{Prepare} (resp., \textsc{Commit}) messages from a probabilistic quorum. 
This dependency complicates \probft's analysis.

\item The probability of deciding a value by a replica depends on the number of replicas that multicast matching \textsc{Commit} messages.
Additionally, the number of replicas that multicast their \textsc{Commit} messages depends on the number of replicas that multicast their \textsc{Prepare} messages.
This dual dependency layer for computing the probability of deciding a value by a replica adds complexity to the analysis of \probft.
\item A Byzantine leader might send multiple proposals to violate safety.
Rather than examining each possible case individually, we find the optimal behavior for a Byzantine leader, considering that it intends to maximize the probability of safety violation.
\end{itemize}

In summary, besides the design of \probft, the main technical contribution of this paper is the analysis of the protocol, represented by the following theorem:

\begin{theorem}[Informal main result]
\probft guarantees liveness with probability $1$ and safety with a probability of $1 - \mathit{exp}(-\Theta(\sqrt{n}))$.
\end{theorem}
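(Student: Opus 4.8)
The plan is to prove the two guarantees separately, working throughout under the stated adversary model: a static corruption adversary that fixes the set of at most $f$ faulty replicas (with the $n-f$ correct replicas forming a supermajority) at the start of the instance, together with an oblivious scheduler whose delivery decisions are independent of the \vrf-induced randomness used to draw samples. This independence is what lets the analysis treat the recipient sets as genuinely uniform, so I would first isolate it as a lemma: conditioned on the adversary's view, the set of $o q$ recipients chosen by each correct replica in the \type{Prepare} and \type{Commit} phases is distributed as a uniform $o q$-subset of the $n$ replicas.

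For \textbf{liveness}, I would adapt the PBFT-style argument to the probabilistic setting. The core claim is that after global stabilization, in any view led by a correct leader, the probability that some correct replica decides is bounded below by a constant $p>0$. This reduces to showing that with constant probability enough correct replicas collect a \type{Prepare} quorum and then a \type{Commit} quorum, which follows because the expected number of matching messages a replica receives is $\Theta(o q) > q$ and a concentration bound keeps each replica above $q$ with constant probability. Since leaders rotate and a supermajority is correct, infinitely many post-GST views have correct leaders, and the view-change mechanism (inherited essentially unchanged from PBFT) guarantees that a non-deciding view is eventually abandoned. Treating successive good views as trials each succeeding with probability $\geq p$, the probability of never deciding is $\lim_{k\to\infty}(1-p)^{k}=0$, giving liveness with probability $1$.

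For \textbf{safety}, the heart of the argument, I would reduce a violation to two correct replicas deciding distinct values $v\neq v'$ in the same view, cross-view agreement being handled by view-change certificates as in PBFT. Deciding $v$ requires receiving $q$ matching \type{Commit} messages, of which at most $f$ are forged by faulty replicas; hence at least $q-f$ correct replicas committed $v$, and each of those collected $q$ matching \type{Prepare} messages, forcing at least $q-f$ correct replicas to have prepared $v$. Because a correct replica prepares and commits at most one value, the correct supporters of $v$ and of $v'$ are disjoint, so writing $D_v$ for the number of correct replicas committing $v$ we get $D_v+D_{v'}\leq n-f$. The quantitative step is that a fixed replica sees $v$'s messages only through the uniform $o q$-samples, so the number of matching \type{Commit} messages it receives has expectation $D_v\cdot o q/n$; deciding $v$ demands that this $\Theta(\sqrt n)$-mean quantity reach $q$, and the disjointness constraint $D_v+D_{v'}\leq n-f$ makes this impossible for both values at once except on a rare deviation event. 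Applying a concentration inequality to a sum of (nearly) uniform indicators whose mean is $\Theta(q)=\Theta(\sqrt n)$ yields failure probability $\exp(-\Theta(\sqrt n))$, and a union bound over the polynomially many candidate value-pairs and replicas preserves this order.

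I expect the \textbf{main obstacle} to be exactly the two dependency issues flagged in the introduction. First, the events that different replicas collect quorums are negatively correlated, since the samples partition a fixed message budget; rather than assume independence I would control them via negative association (or a balls-into-bins/Poissonization argument) so Chernoff-type bounds remain valid, arguing that this negative dependence only strengthens the bound. Second, the committer count $D_v$ is itself the random output of the \type{Prepare} layer, creating a two-level dependency; I would condition on the profile $(D_v)_v$, prove the per-profile bound, and show it is uniform over all profiles with $\sum_v D_v\leq n-f$. Finally, to avoid case analysis on the leader, I would cast its choice --- how to split correct replicas across proposed values and where to aim forged messages (itself constrained by the \vrf) --- as an optimization maximizing the violation probability, and argue the optimum is a balanced two-value split, for which the concentration bound applies. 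Establishing that this is genuinely the worst case is where the argument is most delicate.
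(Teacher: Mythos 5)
Your liveness argument and your within-view safety argument track the paper's own proof quite closely: an oblivious-adversary/uniform-sample lemma, negative association to restore Chernoff bounds despite the dependence between replicas' quorum-formation events, conditioning the commit layer on the (random) output of the prepare layer, casting the Byzantine leader's behavior as an optimization whose optimum is a balanced two-value split, and a geometric-trials limit over infinitely many correct-leader views for probability-$1$ termination. Two points there are slips rather than gaps: (i) the inference ``$q$ matching \textsc{Commit} messages, at most $f$ forged, hence at least $q-f$ correct committers'' is vacuous, since $q=l\sqrt n$ while $f$ may be $\Theta(n)$; indeed \emph{all} $q$ messages can come from faulty replicas, which is why the paper's worst case attaches $\Pi_F$ to both halves and why the relevant per-replica expectation is $(D_v+f)\,oq/n$ rather than $D_v\,oq/n$ --- this is also what produces the hidden parameter constraint $o<2n/(n+f)$ needed for the mean to sit below $q$; (ii) claiming only a constant per-view success probability for liveness is weaker than the paper's $1-2(n-f)\exp(-\Theta(\sqrt n))$ but still suffices for the limit argument.

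The genuine gap is cross-view agreement, which you dismiss with ``handled by view-change certificates as in PBFT.'' That deterministic argument is unavailable in \probft, and its probabilistic replacement is roughly half of the paper's safety proof (Theorem~\ref{thm:view:change}, via Lemma~\ref{lem:d:3}). In PBFT, a decision implies that a deterministic quorum of $\lceil (n+f+1)/2\rceil$ replicas prepared the value, so the next leader's quorum of \textsc{NewLeader} messages intersects the preparers in a correct replica and the decided value is necessarily re-proposed. In \probft, a decision rests on only $q=O(\sqrt n)$ \textsc{Commit} messages, so nothing deterministically prevents the decided value from having been prepared by very few correct replicas --- too few for the new leader's deterministic quorum and its $\mathtt{mode}$-based selection rule to pick it up, in which case a different value can be validly proposed and decided in a later view even though no single view ever exhibits two decisions. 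Closing this requires its own large-deviation argument: if at most roughly $(n+f)/2$ replicas prepared $\mathit{val}$, then the probability that any correct replica nevertheless decided $\mathit{val}$ is at most $\exp(-\Theta(\sqrt n))$ (a Chernoff upper tail on a sub-threshold mean), followed by a case analysis of the ways a new leader can come to propose $\mathit{val}'\neq\mathit{val}$. Without this component, the claimed bound $1-\exp(-\Theta(\sqrt n))$ on safety does not follow from your within-view analysis alone.
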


\paragraph{Paper organization}
The remainder of the paper is organized as follows. 
Section~\ref{sec:preliminaries} introduces our system model and describes the key techniques employed in \probft.
Section~\ref{sec:probft} describes the \probft protocol.
Section~\ref{sec:correctness:proofs} presents the correctness proofs of \probft.
Section~\ref{sec:evaluation} presents a numerical analysis of the protocol.
Finally, Sections~\ref{sec:related:work} and~\ref{sec:conclusion} discuss related work and conclude the paper, respectively.

\section{Preliminaries}
\label{sec:preliminaries}

\subsection{System Model}

We consider a distributed system composed of a finite set~$\Pi$ of $n$~processes, which we call replicas, among which up to $f < n/3$ might be subject to Byzantine failures~\cite{lamport_1982} and not behave according to the protocol specification. 
A non-faulty replica is said to be \textit{correct}. 
During execution, we denote by $\Pi_C$ and $\Pi_F$ the sets of correct and faulty replicas, respectively.
The system is partially synchronous~\cite{dwork_1988,makingByzConsLive} in which the network and replicas may operate asynchronously until some \emph{unknown global stabilization time} GST, after which the system becomes synchronous, with \textit{unknown time bounds for communication and computation}.

We assume that each replica has a unique ID, and it is infeasible for a faulty replica to obtain additional IDs to launch a \emph{Sybil attack} \cite{douceur_2002}.
We consider a \emph{static corruption adversary}, i.e., $\Pi_F$ is fixed at the beginning of execution by the adversary.
Byzantine replicas may collude and coordinate their actions.
It is important to note that while Byzantine replicas may be aware of $\Pi_F$, the correct replicas are unaware of $\Pi_F$ and only know the value of $f$. 
Furthermore, we assume an adversarial scheduler that manipulates the delivery time of messages independent of the sender's identifier, its past and current states, and whether it is Byzantine or not.

Each replica signs outgoing messages with its private key and only processes an incoming message if the message's signature can be verified using the sender's public key. 
We denote \msg{T}{m}{i} as a message of type~\textsc{T} with content~$m$ signed by replica~$i$.
We assume that the distribution of keys is performed before the system starts. 
At run-time, the private key of a correct replica never leaves the replica and, therefore, remains unknown to faulty replicas. In contrast, faulty replicas might learn the private keys of other faulty replicas.
In practical settings, it is a standard assumption that the adversary does not have unlimited computational resources; therefore, they cannot break cryptographic primitives. 

\subsection{Consensus}\label{subsec:consensus:properties}
We assume there is an application-specific $\mathtt{valid}$ predicate to indicate whether a value is acceptable~\cite{makingByzConsLive,valid,modsmart}.
Assuming that each correct replica proposes a valid value, any protocol that solves probabilistic consensus satisfies the following properties:

\begin{itemize}[leftmargin=1em,label=--]
\item \textbf{Validity.} The value decided by a correct replica satisfies the application-defined \texttt{valid} predicate.
\item \textbf{Probabilistic Agreement.} Any two correct replicas decide on different values with probability~$\rho$ depending on the number of existing Byzantine replicas and quorum/sample sizes.
\item \textbf{Probabilistic Termination.} Every correct replica decides with probability $1$.
\end{itemize}

The first two properties are safety properties, while the last one is a liveness property.

\subsection{Single-shot PBFT}
Since \probft follows a structure very similar to PBFT~\cite{pbft}, we briefly review the single-shot version of the latter considering the use of a \emph{synchronizer}~\cite{makingByzConsLive}.
This version is a leader-based consensus protocol that operates in a succession of views produced by the synchronizer, each having a designated leader defined in a round-robin way.
Each protocol view consists of three steps and works as follows (see Figure~\ref{fig:PBFT:overview}):

\begin{itemize}[leftmargin=1em,label=--]
    \item \textbf{\textsc{Propose} (or pre-prepare) phase.}
    The leader of view~$v$ is responsible for proposing a value to the other replicas.
    With this aim, the leader broadcasts a value through a \textsc{Propose} message.
    A correct leader must carefully choose the value to ensure that if a correct replica has decided on a value in a previous view, it will propose the same value.
    \item \textbf{\textsc{Prepare} phase.} 
    Upon receiving a \textsc{Propose} message from a replica $i$ in view $v$, a correct replica broadcasts a \textsc{Prepare} message if $i$ is the leader of $v$, and the proposed value is a valid proposal.
    \item \textbf{\textsc{Commit} phase.}
    Upon receiving \textsc{Prepare} messages from a quorum of replicas, a correct replica broadcasts a \textsc{Commit} message.
\end{itemize}
A replica decides the value proposed by the leader upon receiving \textsc{Commit} messages from a quorum of replicas.
There are multiple scenarios where replicas cannot decide a value in a view~$v$, like when the leader is Byzantine and remains silent.
In those scenarios, the synchronizer transitions the view from $v$ to $v+1$, changing the designated leader.

\subsection{Verifiable Random Function}\label{sec:vrf}
A \emph{verifiable random function} (\vrf)~\cite{algorand,goldberg22verifiable} enables the random selection of a subset from a given set, ensuring that the selection process is verifiable and secure.
We assume a globally known \vrf that provides the following two operations:

\begin{figure}[!t]
    \centering
    \includegraphics[scale=1.4]{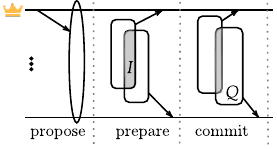}
    \caption{Overview of PBFT.  
    Each correct replica broadcasts its \textsc{Prepare} and \textsc{Commit} messages.
    The size of any quorum is $|Q| = \lceil{(n+f+1)/2}\rceil$.
    The set of replicas $I$ in the intersection of two quorums contains at least one correct replica.}
    \label{fig:PBFT:overview}
    \Description[]{Overview of PBFT.  
    Each correct replica broadcasts its \textsc{Prepare} and \textsc{Commit} messages.
    The size of any quorum is $|Q| = \lceil{(n+f+1)/2}\rceil$.
    The set of replicas $I$ in the intersection of two quorums contains at least one correct replica.}
\end{figure}

\begin{itemize}[leftmargin=1em,label=--]
	\item \textbf{\texttt{VRF\_prove($K_{p,i}$,\,$z$,\,$s$)} $\Rightarrow$ $S_i$,\,$P_i$.}
    Given the private key~$K_{p,i}$ for a replica~$i$, a seed~$z$, and a positive integer $s$, \texttt{VRF\_prove} selects a sample~$S_i$ containing the IDs of $s$ distinct replicas uniformly at random.  
    Along with $S_i$, this operation returns a proof~$P_i$, enabling other replicas to verify whether the sample was obtained using this operation.
    \item \textbf{\texttt{VRF\_verify($K_{u,i}$,\,$z$,\,$s$,\,$S_i$,\,$P_i$)} $\Rightarrow$ \texttt{bool}.} 
    Given the public key $K_{u,i}$ of replica~$i$, a seed~$z$, a positive integer $s$, a sample~$S_i$, and its associated proof $P_i$, \texttt{VRF\_verify} determines whether $S_i$ is a valid sample generated using \texttt{VRF\_prove} with the given parameters.
    It returns \texttt{true} if the sample and proof are valid and \texttt{false} otherwise.
\end{itemize}

The \vrf should provide the following guarantees~\cite{goldberg22verifiable}: 

\begin{itemize}[leftmargin=1em,label=--]
    \item \textbf{Uniqueness.} 
    A computationally limited adversary must not be able to produce two different proofs~$P_{i}$ and $P_{i}'$ for the same input parameters $K_{u,i}$, $z$, and $s$. 
    \item \textbf{Collision resistance.} 
    Even when a private key is compromised, it should be infeasible for an adversary to find two distinct seeds~$z$ and $z'$ for which the \texttt{VRF\_prove} returns the same sample. 
    \item \textbf{Pseudorandomness.} 
    For an adversarial verifier without knowledge of the proof, the corresponding sample should be indistinguishable from a randomly selected set of replica IDs.
\end{itemize}

\section{\probft}\label{sec:probft}

\begin{figure}[!t]
    \centering
    \includegraphics[scale=1.36]{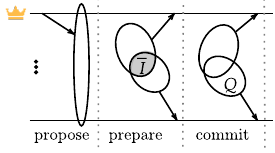}
    \caption{Overview of \probft. 
    The size of any probabilistic quorum is $q = |Q| = O(\sqrt{n})$. 
    Each correct replica multicasts its \textsc{Prepare} and \textsc{Commit} messages to random samples of sizes $o\times q$, where $o$ is a constant.
    The set of replicas $\bar{I}$ in the intersection of two probabilistic quorums contains at least one correct replica with high probability.}
    \label{fig:proBft:overview}
    \Description[]{Overview of \probft. 
    The size of any probabilistic quorum is $q = |Q| = O(\sqrt{n})$. 
    Each correct replica multicasts its \textsc{Prepare} and \textsc{Commit} messages to random samples of sizes $o\times q$, where $o$ is a constant.
    The set of replicas $\bar{I}$ in the intersection of two probabilistic quorums contains at least one correct replica with high probability.}
\end{figure}

\subsection{Overview}\label{subsec:probft_overview}

\probft is a leader-based probabilistic consensus protocol that operates in a succession of views produced by the synchronizer.
In \probft, illustrated in Figure~\ref{fig:proBft:overview}, one of the replicas is assigned the role of \emph{leader} in each view, meaning that this replica is in charge of proposing a value. 
\probft's consensus process comprises three phases of message exchange between replicas -- \emph{propose}, \emph{prepare}, and \emph{commit}, just like in PBFT.
In the \emph{propose} phase, the leader proposes a value by broadcasting it to other replicas, which is then agreed on in the subsequent \emph{prepare} and \emph{commit} phases. 

Given the linear message complexity of the \emph{propose} phase, we restrict the utilization of probabilistic quorums to the two remaining phases.
That is, in each of the \emph{prepare} and \emph{commit} phases, any replica relies on the \vrf to determine a subset of replicas with size $o\times q$ to whom its messages should be sent, where $o > 1$ is a constant. 
Any replica progresses by receiving messages from a probabilistic quorum with size $q = l\sqrt{n}$, being $l$ also a small constant (e.g., $2$).
The constant~$o$ defines how large the random subset of replicas contacted on each phase by each replica is when compared with the probabilistic quorum size.
Bigger values of $o$ increase the probability of forming a probabilistic quorum (with $q$ replicas), increasing the chance of the protocol to terminate (see Section~\ref{sec:correctness:proofs}), albeit generating more messages (see Figure~\ref{fig:pbft:probft:hotstuff}).
As a result, in contrast to the $O(n^2)$ communication complexity associated with traditional protocols such as PBFT, \probft's message complexity for the \emph{prepare} and \emph{commit} phases is $O(n\sqrt{n})$.

As seed for computing its recipient sample with the \vrf, a replica is required to use $v$\,$||$\,\textsc{T}, which is a concatenation of the current view~$v$, and an identifier \textsc{T} representing the phase/message type~(``prepare''~for \textsc{Prepare} and ``commit''~for \textsc{Commit}). 
Each replica starts a \emph{prepare} phase upon receiving the leader's proposal and a \emph{commit} phase upon forming a probabilistic quorum, i.e., receiving $q$ matching messages.
Forcing replicas to involve the \vrf this way and thereby apply deterministic seeds has the following key benefits: 
\begin{enumerate}[label=(\arabic*), leftmargin=1.5em]
    \item With the inputs of \texttt{VRF\_prove} being dictated by the protocol, faulty replicas cannot control their recipient sample for a particular view and phase. Consequently, a faulty replica cannot deliberately favor a certain subset of replicas, for example, in support of a faulty leader trying to trick these replicas into forming a probabilistic quorum for a specific value.  
    \item Since a replica's recipient samples are computed from its private key~(see Section~\ref{sec:vrf}), faulty replicas cannot predict the individual samples of correct replicas in advance. 
    That is, at the start of a view, faulty replicas do not know the upcoming \emph{prepare} and \emph{commit} phase message-exchange patterns between correct replicas, thereby making it inherently difficult for them to identify promising attack targets~(e.g.,~correct replicas that are included in none or only in a few recipient samples of other correct replicas). 
    \item With the \emph{prepare} and \emph{commit} phase recipient samples usually differing~(due to the use of the phase parameter in the seed), correct replicas are more likely to observe the misbehavior of a faulty leader. For example, if a faulty leader performs equivocation by proposing different values to different subsets of replicas, then the phase-specific recipient samples increase the probability of a correct replica learning about the existence of contradictory proposals. 
\end{enumerate}

In summary, the use of the \vrf for selecting recipient samples in the \emph{prepare} and \emph{commit} phases significantly strengthens \probft's resilience against malicious behaviors.

\subsection{Protocol Specification}\label{sec:probft_protocol_description}

Algorithm \ref{alg:probft} presents \probft specification. 
This description assumes a synchronizer exactly like the one presented in~\cite{makingByzConsLive}.
\begin{algorithm}[t!]
\caption{\probft \---  replica~$i$.}
\label{alg:probft}
\begin{algorithmic}[1]
\NoThen
\NoDo

\STATEx{\hspace{-1.67em}\textbf{upon} \texttt{newView($v$)}}

\ifbool{extendedVersion}{
    \STATE{$\mathit{curView}, \mathit{curVal}, \mathit{voted}, \mathit{blockView}, \mathit{proposal}\leftarrow v, \bot, \mathtt{false}, \mathtt{false}, \perp$}
}{
    \STATE{$\mathit{curView} \leftarrow v$}
    \STATE{$\mathit{curVal} \leftarrow \bot$}
    \STATE{$\mathit{voted} \leftarrow \mathtt{false}$}
    \STATE{$\mathit{blockView} \leftarrow \mathtt{false}$}
    \STATE{$\mathit{proposal} \leftarrow \perp$}
}

\IF{$\mathit{curView}=1 \land i=\mathtt{leader}(\mathit{curView})$}
    \STATE{\textbf{broadcast} $\langle \textsc{Propose}, \langle \mathit{curView}, \mathtt{myValue}()\rangle_i,\perp\rangle_i$}\label{line:cast:v:1}
\ELSIF{$\mathit{curView} > 1$}
    \ifthenelse{\boolean{extendedVersion}}{
        \STATE{\textbf{send} $\langle \textsc{NewLeader}, \mathit{curView}, \mathit{preparedView},\mathit{preparedVal},\mathit{cert}\rangle_i$ \textbf{to} $\mathtt{leader}(\mathit{curView})$}\label{line:send_new_leader}
    }{
    \STATE{\textbf{send} $\langle \textsc{NewLeader}, \mathit{curView}, \mathit{preparedView},\mathit{preparedVal},\mathit{cert}\rangle_i$ \strut\hfill \textbf{to} $\mathtt{leader}(\mathit{curView})$}\label{line:send_new_leader}
    }
\ENDIF

\vspace{0.2em}
\STATEx{\hspace{-1.67em} \textbf{upon receiving} $\{ \langle \textsc{NewLeader},v,\mathit{view}_j,\mathit{val}_j,\mathit{cert}_j\rangle_j : j \in Q \} = M$ \textbf{from} a deterministic quorum~$Q$}\label{line:new_leader_quorum}
\ifthenelse{\boolean{extendedVersion}}{
    \STATE{\textbf{pre:} $\mathit{curView} = v \land i =  \mathtt{leader}(v) \land (\forall m \in M: \mathtt{validNewLeader}(m))$}
}{
    \STATE{\textbf{pre:} $\mathit{curView} = v \land i =  \mathtt{leader}(v) \ \land $}
    \STATEx{$(\forall m \in M: \mathtt{validNewLeader}(m))$}
}
\STATE{$v_\mathit{max} \leftarrow \mathtt{max}\{\mathit{view}_j : \langle \textsc{NewLeader},v,\mathit{view}_j, \_, \_ \rangle_j \in M$\}}\label{line:v:max}
\STATE{$\mathit{val}_\mathit{max} = \mathtt{mode} \{ \mathit{val}_j :\langle \textsc{NewLeader},v,v_\mathit{max}, \mathit{val}_j, \_ \rangle_j \in M\}$}\label{line:val:max}
\IF{$\mathit{val}_\mathit{max} \neq \bot$}\label{line:most_prepared}
    \STATE{\textbf{broadcast} $\langle \textsc{Propose}, \langle v, \mathit{val}_\mathit{max}\rangle_i,M\rangle_i$}
\ELSE
    \STATE{\textbf{broadcast} $\langle \textsc{Propose}, \langle v, \mathtt{myValue}()\rangle_i,M\rangle_i$}\label{line:propose:myVal}
\ENDIF

\vspace{0.2em}
\STATEx{\hspace{-1.67em} \textbf{upon receiving} $\langle \textsc{Propose}, \langle v,x \rangle_j ,\_\rangle_j = m$}
\STATE{\textbf{pre:} $\mathit{blockView} = \mathtt{false} \land \mathit{curView} = v \land \mathit{voted} = \mathtt{false} \land \mathtt{safeProposal}(m)$}\label{line:safe:proposal}
\STATE{$\mathit{curVal},\mathit{voted}, \mathit{proposal} \leftarrow x,\mathtt{true}, m$}\label{line:set:currVal}
\STATE{$S_p,P_p \leftarrow \mathtt{VRF\_prove}(K_{p,i},v \;||\;``\text{prepare}", o\times q)$}\label{line:sample_prepare}
\STATE{\textbf{send} $\langle \textsc{Prepare}, \langle v,x \rangle_j, S_p, P_p\rangle_i$ \textbf{to} $S_p$}\label{line:send_prepare}

\vspace{0.2em}
\STATEx{\hspace{-1.67em} \textbf{upon receiving} $\{ \langle \textsc{Prepare},\langle v,x \rangle_*,S,P\rangle_j : j \in Q\} = C$ \textbf{from} a probabilistic quorum $Q$}\label{line:observe_prepare_quorum}
\ifbool{extendedVersion}{
    \STATE{\textbf{pre:} $\mathit{blockView} = \mathtt{false} \land \mathit{curView} = v \land \mathit{curVal} = x \land \mathit{voted} = \mathtt{true} \ \land$} 
    \STATEx{$(\forall \langle \_,\_,S,P \rangle_j \in C: i \in S \land \mathtt{VRF\_verify}(K_{u,j},v \ || \ ``\text{prepare}",o\times q,S,P))$}
}{
    \STATE{\textbf{pre:} $\mathit{blockView} = \mathtt{false} \land \mathit{curView} = v \land \mathit{curVal} = x \ \land$} 
    \STATEx{$\mathit{voted} = \mathtt{true} \ \land (\forall \langle \_,\_,S,P \rangle_j \in C: i \in S \ \land$}
    \STATEx{$\mathtt{VRF\_verify}(K_{u,j},v \ || \ ``\text{prepare}",o\times q,S,P))$}
}
\STATE{$\mathit{preparedVal},\mathit{preparedView},\mathit{cert} \leftarrow\mathit{curVal},\mathit{curView},C$}\label{line:prepared_val}
\STATE{$S_c,P_c \leftarrow  \mathtt{VRF\_prove}(K_{p,i},v \ || \  ``\text{commit}", o\times q)$}
\STATE{\textbf{send} $\langle \textsc{Commit},\langle v,x \rangle_*,S_c,P_c\rangle_i$ \textbf{to} $S_c$}\label{line_send_commit}

\vspace{0.2em}
\STATEx{\hspace{-1.67em} \textbf{upon receiving} $\{ \langle \textsc{Commit},\langle v,x \rangle_*,S,P\rangle_j : j \in Q\} = M$ \textbf{from} a probabilistic quorum $Q$}\label{line:observe_commit_quorum}
\ifbool{extendedVersion}{
    \STATE{\textbf{pre:} $\mathit{blockView} = \mathtt{false} \land \mathit{preparedVal} = x \land \mathit{curView} = \mathit{preparedView} = v \ \land$}\label{line:process:commit}
    \STATEx{$(\forall \langle \_,\_,S,P \rangle_j \in M: i \in S \land \mathtt{VRF\_verify}(K_{u,j},v \ || \ ``\text{commit}",o\times q,S,P))$}
}{
    \STATE{\textbf{pre:} $\mathit{blockView} = \mathtt{false} \land \mathit{preparedVal} = x \ \land$}\label{line:process:commit}
    \STATEx{$\mathit{curView} = \mathit{preparedView} = v \land (\forall \langle \_,\_,S,P \rangle_j \in M: i \in S \ \land$}
    \STATEx{$\mathtt{VRF\_verify}(K_{u,j},v \ || \ ``\text{commit}",o\times q,S,P))$}
}
\STATE{$\mathtt{decide}(\mathit{\mathit{curVal}})$}\label{line:decide}

\vspace{0.2em}
\STATEx{\hspace{-1.67em} \textbf{upon receiving} $\langle \_,\langle v, x \rangle_j,\dots\rangle_* = m$ }
\STATE{\textbf{pre:} $\mathit{blockView} = \mathtt{false} \land \mathit{curView} = v   \land j = \mathtt{leader}(v) \land \mathit{voted} = \mathtt{true} \land \mathit{curVal} \neq x $} \label{line:block_view_start}
\STATE{$\mathit{blockView} \leftarrow \mathtt{true}$}\label{line:block_view_end}
\STATE{\textbf{broadcast} $m$, \textbf{broadcast} $\mathit{proposal}$}\label{line:notify:faultiness}

\end{algorithmic}
\end{algorithm}
The protocol proceeds in a series of views, with each new view having a fixed leader responsible for proposing a value to be decided. 
Every replica in the system can determine the leader for a view~$v$ with the $\mathtt{leader}$ predicate. 
\begin{align*}
    \begin{split}
    \mathtt{leader}(v) = (v-1 \ \mathit{mod}\ n) +1	
    \end{split}
\end{align*}
Upon receiving a notification from the synchronizer to transition to view~$v$, a replica stores~$v$ in a variable $\mathit{curView}$ and sets a flag $\mathit{voted}$ to \texttt{false} to record that it has not yet received any proposal from the leader in the current view.
If $v=1$, the leader is free to broadcast its proposal (line~\ref{line:cast:v:1}).
However, for other views, a correct leader must be careful in choosing its proposal because if a correct replica has already decided on a value in a prior view, the leader is obligated to propose the same value.
To facilitate this, upon entering a view~$v>1$, a correct replica sends a \textsc{NewLeader} message to the leader of~$v$, providing information about the latest value it accepted in a prior view (line~\ref{line:send_new_leader}).
Any message exchanged in the protocol is tagged with the sender's view. 
A receiver will only accept a message if its own view stored in the $\mathit{curView}$ variable matches the view of the sender.

For any view $v>1$, the leader of~$v$ waits until it receives \textsc{New\-Leader} messages from a \emph{deterministic quorum} of replicas.
After computing its proposal, the leader broadcasts the proposal, along with some supporting information, in a \textsc{Propose} message (lines~\ref{line:v:max}-\ref{line:propose:myVal}). 
After presenting the rest of the protocol, we will describe the process for computing the proposal.
Since a Byzantine leader may send different proposals to different replicas, correct replicas need to communicate with others to ensure they have received the same proposal.
With this aim, correct replicas process the leader's proposal in two phases~--~\emph{prepare} and \emph{commit}.

Upon receiving a \textsc{Propose} message~$m$ containing a proposal $x$ from a replica~$j$ in view $v$, a correct replica~$i$ starts the \emph{prepare} phase if it is currently in view~$v$, it has not processed a \textsc{Propose} message in this view, and message~$m$ satisfies the $\mathtt{safeProposal}$ predicate (also explained later), which ensures that a Byzantine leader cannot reverse decisions reached in a prior view (with high probability). 
The replica then stores $x$ in $\mathit{curVal}$ and sets voted to \texttt{true} (line~\ref{line:set:currVal}).
Afterward, replica~$i$ uses the \vrf to select a random sample $S_P$ to which it sends a \textsc{Prepare} message.
 
A correct replica waits until receiving a set $C$ of \textsc{Prepare} messages from a probabilistic quorum.
We call this set of messages a \textit{prepared certificate} for a proposed value~$x$ in a view~$v$ if it satisfies the following predicate:
\begin{align*}
& \mathtt{prepared}\left(C, v, x, j \right) \iff \nonumber
\\&\quad \exists Q : |Q|=q \land C = \{\langle \textsc{Prepare}, \langle v, x \rangle_i, S_k, P_k \rangle_k : k \in Q\} \ \land \nonumber
\\&\quad i = \mathtt{leader}(v) \land (\forall \langle \_, \_, S_k, P_k \rangle_k \in C : j \in S_k \ \land
\\&\quad \quad  \mathtt{VRF\_verify}(K_{u,k}, v \ || \ ``\text{prepare}", o\times q, S_k, P_k))\nonumber
\end{align*}
Once a replica $j$ creates a prepared certificate for a value $x$ in a view $v$ (i.e., $j$ prepares~$x$), it stores $x$, $v$, and this certificate in $\mathit{preparedVal}$, $\mathit{preparedView}$, and $\mathit{cert}$, respectively. 
Afterward, the replica generates a new random sample of replicas $S_c$ to whom it will multicast a $\textsc{Commit}$ message (lines~\ref{line:prepared_val}-\ref{line_send_commit}). 
Every correct replica that multicasts a $\textsc{Commit}$ message enters the \emph{commit} phase.
It then waits until receiving $\textsc{Commit}$ messages from a probabilistic quorum. 
It is worth noting that a correct replica neither sends a $\textsc{Commit}$ message nor processes a received $\textsc{Commit}$ message (line~\ref{line:process:commit}) if it has not yet prepared a value.
After observing a quorum of $\textsc{Commit}$ messages, a correct replica with a prepared certificate decides on the proposed value (line~\ref{line:decide}).


Recall that when the synchronizer triggers a $\mathtt{newView}$ notification in a replica for a view greater than one, the replica sends a \textsc{NewLeader} message to the new leader.
If a replica has created a prepared certificate in a prior view, it sends that certificate in the \textsc{NewLeader} message.
This allows the leader to generate its proposal based on a quorum of well-formed \textsc{NewLeader} messages that can be checked using the following predicate:
\begin{align*}
\begin{split}
& \mathtt{validNewLeader}(\langle \textsc{NewLeader}, v, \mathit{view}, \mathit{val}, \mathit{cert} \rangle_j) \iff 
\\ & \qquad \mathit{view} < v \land \mathit{view} \neq 0 \Rightarrow \mathtt{prepared}(\mathit{cert}, \mathit{view}, \mathit{val}, j) 
\end{split}
\end{align*}

The leader chooses its proposal by selecting the value prepared in the most recent view by more replicas (lines~\ref{line:v:max}-\ref{line:val:max}).
If no such prepared values exist, it uses its own proposal provided by the function \texttt{myValue()}.
Since a faulty leader may not follow this rule, it is essential for the correct replicas to validate that the leader adheres to this selection rule for proposals.
With this aim, the leader's \textsc{Propose} message contains the \textsc{NewLeader} messages received by the leader, in addition
to the proposal. 
A correct replica checks the validity of the proposed value by redoing the leader's computation using the following predicate:
\begin{align*}
\begin{split}
    & \mathtt{safeProposal}(\langle \textsc{Propose},\langle v,x \rangle_j ,M\rangle_j) \iff  
    \\&\quad v \geq 1 \land j = \mathtt{leader}(v) \land \mathtt{valid}(x) \ \land  ( v = 1 \ \lor
    \\&\quad \quad (|M| \geq \lceil{(n+f+1)/2}\rceil \land  (\forall m \in M:\; \mathtt{validNewLeader}(m) ) \land 
    \\&\quad \quad \quad (\exists v_\mathit{max} = \mathtt{max}\{\mathit{view}_k : \langle \textsc{NewLeader},v,\mathit{view}_k, \_, \_ \rangle_k \in M\} \land
    \\&\quad \quad \quad \quad x = \mathtt{mode} \{ \mathit{val}_k :\langle \textsc{NewLeader},v,v_\mathit{max}, \mathit{val}_k, \_ \rangle_k \in M\}))) 
\end{split}
\end{align*}

When a correct replica detects that the leader is faulty, i.e., receiving messages from any replica with different proposals signed by the leader, it instantly blocks the current view and waits for the synchronizer to trigger a new view (lines~\ref{line:block_view_start}-\ref{line:notify:faultiness}). 
Besides, it informs other replicas about this misbehavior.
It is important to emphasize that informing other replicas is necessary only when the leader is Byzantine and sends distinct proposals.
Therefore, it does not impact the message complexity of the protocol when the leader is correct.

\subsection{Message and Communication Complexities}
\probft's message complexity is $O(n\sqrt{n})$, as computed based on four terms:~$O(n)$ for \textsc{NewLeader} messages, $O(n)$ for \textsc{Propose} messages, $O(n\sqrt{n})$ for \textsc{Prepare} messages, and $O(n\sqrt{n})$ for \textsc{Commit} messages.
Further, in \probft, for any view greater than one, a new leader sends a \textsc{Propose} message with a certificate containing a full (not probabilistic) quorum of \textsc{NewLeader} messages to all replicas.
Each \textsc{NewLeader} message might contain a prepared certificate with a probabilistic quorum of \textsc{Prepare} messages. 
Hence, \probft's communication complexity is $O(n^2\sqrt{n})$.
Note that \probft has this communication complexity only when a view-change occurs. 
In the first view, there is no need to send \textsc{NewLeader} messages to the leader, as avoided in practical instantiations of PBFT (e.g.,~\cite{pbft,bessani2014state}). 
Therefore, \probft's best-case communication complexity is $\Omega(n\sqrt{n})$, contrary to PBFT, which still has $\Omega(n^2)$. 

\section{\probft Proof Outline}\label{sec:correctness:proofs}
This section outlines the correctness proofs of \probft.
More specifically, we show that \probft satisfies three properties of probabilistic consensus, i.e., Validity, Probabilistic Termination, and Probabilistic Agreement. 
\ifthenelse{\boolean{extendedVersion}}{
    Here we discuss the main arguments of the proofs and refer to the appendix for the theorems whose proofs are not provided in this section.
}{
    Here we discuss the main arguments of the proofs and refer to the extended version~\cite{extended} for the full proofs.
}

\subsection{Validity}
Recall that the Validity property states that the value decided by a correct replica satisfies the application-defined $\mathtt{valid}$ predicate.
In \probft, before deciding a value $x$, a correct replica $i$ must receive~$x$ as a proposal from the leader and verify its validity using the $\mathtt{safeProposal}$ predicate (line~\ref{line:safe:proposal}).
In this predicate, several conditions must be satisfied, one of which is that $\mathtt{valid}(x)$ must be $\mathit{true}$.
Accordingly, a value decided by a correct replica is valid, satisfying the Validity property.

\subsection{Probabilistic Termination} 
To demonstrate that \probft satisfies Probabilistic Termination, suppose there is a non-empty subset of correct replicas that have not decided a value by GST.
Recall that \probft employs a round-robin mechanism for changing the leaders.
Accordingly, among the replicas that have not made a decision by GST, there will be some correct replica $i$ that performs the leader's role by proposing a value~$x$ in a view~$v$.
Note that all correct replicas receive such a proposal during~$v$ as the system is synchronous after GST and multicast their \textsc{Prepare} messages.
We show that any correct replica decides~$x$ with a high probability during view $v$ if it has not already decided a value.
To do so, we first demonstrate that any correct replica receives \textsc{Prepare} messages from a probabilistic quorum of replicas with a high probability.
The following theorem demonstrates that with a proper value of $o$, such an event occurs for a correct replica with a probability of at least $1 - \mathit{exp}(-\sqrt{n})$, i.e., forming such a quorum with high probability, even if all Byzantine replicas remain silent. 
\begin{theorem}\label{cor:epsilon} 
Suppose each correct replica takes a sample composed of $o \times l\sqrt{n}$ distinct replicas, uniformly at random from $\Pi$, and multicasts a message to them, where $ o \in [1, 3.732(n/(n-f))] $, and $l\geq 1$.
Provided that a replica forms a probabilistic quorum upon receiving $l\sqrt{n}$ messages, the probability of forming such a quorum is at least $1 - \mathit{exp}(- \sqrt{n} )$.  
\end{theorem}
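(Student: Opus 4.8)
The plan is to fix a single (correct) recipient~$r$ and bound the probability that it receives fewer than $q = l\sqrt{n}$ of the multicast messages, using a multiplicative Chernoff bound. First I would write the number of messages reaching $r$ as $X = \sum_{c \in \Pi_C} X_c$, where $X_c = \mathbf{1}\{r \in S_c\}$ records whether the size-$oq$ random sample $S_c$ chosen by correct sender~$c$ contains~$r$. The crucial observation is that, although the \emph{global} event ``many replicas simultaneously form a quorum'' is highly dependent (as stressed in the introduction), the indicators $X_c$ feeding the count of a \emph{single} recipient are mutually independent, since each correct replica draws its $oq$-subset of $\Pi$ independently of the others. Hence $X$ is a sum of independent Bernoulli variables, each with success probability $p = \Pr[r \in S_c] = \frac{oq}{n}$ by symmetry of a uniform $oq$-subset of $\Pi$. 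Because $|\Pi_C| \ge n-f$ and extra senders only increase $X$, it suffices to lower-bound $\Pr[X \ge q]$ in the worst case where exactly $n-f$ correct replicas send (all Byzantine replicas silent).

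Next I would compute $\mu := \mathbb{E}[X] = (n-f)\frac{oq}{n}$ and set $a := \frac{(n-f)o}{n}$, so that $\mu = a\,q$. Writing the target threshold as $q = (1-\delta)\mu$ fixes $\delta = 1 - \tfrac1a$, which lies in $(0,1)$ precisely when $a > 1$, i.e.\ $o > \frac{n}{n-f}$; this is exactly why $o$ must be taken meaningfully above~$1$. The Chernoff lower tail then yields
\begin{equation*}
\Pr[X < q] = \Pr\!\left[X < (1-\delta)\mu\right] \le \exp\!\left(-\tfrac{\delta^2 \mu}{2}\right) = \exp\!\left(-\tfrac{l\sqrt{n}\,(a-1)^2}{2a}\right).
\end{equation*}

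The remaining work is purely algebraic: I want the exponent to be at least $\sqrt{n}$, i.e.\ $\frac{l(a-1)^2}{2a} \ge 1$. Since $l \ge 1$, it suffices to treat the worst case $l=1$, reducing the requirement to $(a-1)^2 \ge 2a$, equivalently $a^2 - 4a + 1 \ge 0$. The relevant root of $a^2 - 4a + 1$ is $a = 2+\sqrt{3} \approx 3.732$, and this is precisely where the constant bounding the admissible range of $o$ originates: the calibration $\frac{(n-f)o}{n} = 2+\sqrt{3}$, i.e.\ $o = 3.732\,\frac{n}{n-f}$, makes the exponent equal to $\sqrt{n}$ and hence the failure probability equal to $\exp(-\sqrt{n})$, giving $\Pr[X \ge q] \ge 1 - \exp(-\sqrt{n})$.

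I expect the main obstacle to be conceptual rather than computational: cleanly separating the per-recipient independence exploited here from the cross-recipient \emph{negative} dependence that the rest of the analysis must handle, and justifying that restricting to the $n-f$ guaranteed correct senders only weakens (and therefore validly conserves) the bound. A secondary care point is checking the side conditions that make the Chernoff step legitimate, namely $\delta \in (0,1)$ and that an $oq$-element uniform subset of $\Pi$ is well defined (so $oq \le n$); both follow from the hypotheses on $o$, $l$, and large enough~$n$.
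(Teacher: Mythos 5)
Your Chernoff computation is exactly the paper's: the same independent per-recipient indicators (independence across senders is what the paper also uses; the negative-association machinery is only needed later, for the commit phase), the same $\delta = 1 - 1/a$ with your $a$ equal to the paper's $c = o(n-f)/n$, the same exponent $q(a-1)^2/(2a)$, and the same quadratic $a^2 - 4a + 1$ whose root $2+\sqrt{3} \approx 3.732$ produces the constant. The gap is in the final calibration. By fixing $l=1$ as the ``worst case,'' you end up requiring $a \geq 2+\sqrt{3}$, i.e.\ $o \geq 3.732\,n/(n-f)$ --- a \emph{lower} bound on $o$. The theorem, however, asserts the bound for $o \in [1,\, 3.732\,n/(n-f)]$, the opposite side of that threshold, and for that range your argument gives nothing: with $l=1$ and $o$ only slightly above $n/(n-f)$ the exponent $l(a-1)^2/(2a)$ is far below $1$, and for $o \leq n/(n-f)$ the mean $\mu = aq$ does not even exceed the threshold $q$, so $\delta \in (0,1)$ fails outright. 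Your closing remark that the side conditions ``follow from the hypotheses on $o$, $l$'' is therefore not correct: the stated hypothesis admits $o=1$, which violates $a>1$.

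The paper closes this differently: it does not quantify over all $l \geq 1$ with $o$ fixed, but couples the two parameters, choosing $l = 2c/(c-1)^2$ so that the exponent equals exactly $\sqrt{n}$; the requirement $l \geq 1$ then forces $(c-1)^2 \leq 2c$, i.e.\ $c \leq 2+\sqrt{3}$, and that is how $3.732\,n/(n-f)$ enters as an \emph{upper} bound on $o$ --- a larger quorum parameter $l$ compensates for a smaller $o$. (The paper's own proof also tacitly needs $c > 1$, i.e.\ $o > n/(n-f)$, inherited from its corollary's hypothesis $n < o(n-f)$, so the low end of the stated interval is delicate there too; but that is the paper's issue, not a license for yours.) To repair your write-up you would either have to prove the statement in the paper's sense --- exhibit, for each admissible $o$, an $l \geq 1$ (namely $2c/(c-1)^2$) that makes the exponent at least $\sqrt{n}$ --- or else restate the theorem with your inverted constraint $o \geq 3.732\,n/(n-f)$, which is a genuinely different claim.
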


Recall that a correct replica prepares the value proposed by the leader upon receiving \textsc{Prepapre} messages from a probabilistic quorum.
According to Theorem~\ref{cor:epsilon}, a correct replica prepares the value proposed by a correct leader with high probability, so almost all correct replicas send \textsc{Commit} messages to their randomly selected samples.
It is clear that the probability of receiving \textsc{Commit} messages from a probabilistic quorum is less than the probability of receiving \textsc{Prepare} messages from a probabilistic quorum.
However, the following theorem demonstrates that every correct replica receives \textsc{Commit} messages from a probabilistic quorum with a high probability, resulting in deciding a value with a high probability.
\begin{theorem}\label{cor:termination}
After GST, if the leader of view $v$ is correct, then each correct replica decides a value in $v$ with a probability of at least $1 - 2(n-f)\times\mathit{exp}\boldsymbol{(}-\Theta(\sqrt{n})\boldsymbol{)}$.
\end{theorem}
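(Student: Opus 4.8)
The plan is to split the decision event into the two phases that produce it --- \emph{prepare} and \emph{commit} --- bound the failure probability of each phase with Theorem~\ref{cor:epsilon}, and stitch the two bounds together with a union bound. Because the leader of $v$ is correct and the system is synchronous after GST, every one of the (at least) $n-f$ correct replicas receives the \emph{same} valid proposal, passes the $\mathtt{safeProposal}$ check on line~\ref{line:safe:proposal}, and multicasts a \textsc{Prepare} message to a \vrf-generated sample of size $o\times l\sqrt{n}$. This is precisely the hypothesis of Theorem~\ref{cor:epsilon}, so each correct replica forms a prepared certificate with probability at least $1-\mathit{exp}(-\sqrt{n})$. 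Let $E_p$ denote the event that \emph{all} correct replicas prepare; a union bound over the $n-f$ correct replicas gives $\Pr[E_p]\ge 1-(n-f)\,\mathit{exp}(-\sqrt{n})$.

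Next I would analyze the \emph{commit} phase \emph{conditioned on} $E_p$. Under $E_p$, all $n-f$ correct replicas enter the commit phase and send a \textsc{Commit} message for the same value $x$ (same because the leader is correct) to a fresh \vrf sample. The structural fact that makes the conditioning harmless is that the commit samples are drawn with a seed that concatenates $v$ with the phase tag ``commit'', whereas the prepare samples use the tag ``prepare''; modeling the \vrf by its pseudorandomness guarantee, outputs on these two distinct seeds behave as independent uniform samples, so the commit-phase randomness is independent of everything that determines $E_p$. Consequently, conditioned on $E_p$ the commit phase is again exactly the independent-sampling setting of Theorem~\ref{cor:epsilon} with $n-f$ honest senders, and any fixed correct replica $i$ collects \textsc{Commit} messages from a probabilistic quorum --- and hence decides on line~\ref{line:decide}, since $E_p$ already guarantees $i$ prepared $x$ --- with probability at least $1-\mathit{exp}(-\sqrt{n})$. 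Spurious \textsc{Commit} messages from faulty replicas cannot hurt this count: they can only add to it, and forged ones are discarded by the $\mathtt{VRF\_verify}$ precondition.

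Combining the two estimates, I would bound the failure probability of a fixed correct replica $i$ by conditioning on $E_p$: $\Pr[i\text{ does not decide}]\le \Pr[\neg E_p]+\Pr[i\text{ misses the commit quorum}\mid E_p]\le (n-f)\,\mathit{exp}(-\sqrt{n})+\mathit{exp}(-\sqrt{n})\le 2(n-f)\,\mathit{exp}(-\sqrt{n})$, valid for $n-f\ge 1$. This is exactly the claimed $1-2(n-f)\,\mathit{exp}(-\Theta(\sqrt{n}))$ bound, with $\Theta(\sqrt{n})$ absorbing the $\sqrt{n}$ exponent. Throughout I must keep $o$ inside the admissible range $[1,\,3.732\,(n/(n-f))]$ demanded by Theorem~\ref{cor:epsilon}; since both phases have the same number $n-f$ of correct senders, one choice of $o$ serves both invocations.

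The main obstacle is the \emph{dual dependency} highlighted in the introduction: the set of \textsc{Commit} senders is itself random and determined by the prepare phase, so Theorem~\ref{cor:epsilon} cannot be invoked on the commit phase unconditionally. Conditioning on $E_p$ resolves this, but only if two points are made rigorous: (i) conditioning on $E_p$ pins the set of commit senders to the full set of $n-f$ correct replicas, thereby meeting the hypothesis of Theorem~\ref{cor:epsilon}; and (ii) the commit-phase sampling is genuinely independent of $E_p$, which rests entirely on the phase-specific \vrf seeds yielding independent samples. Point (ii) is the delicate step --- in particular, ruling out correlation arising because both phases route messages among the \emph{same} population of replicas --- so I would isolate it as a short independence lemma before applying Theorem~\ref{cor:epsilon} the second time.
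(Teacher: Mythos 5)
Your proposal is correct, but it takes a genuinely different route from the paper. The paper never forms the global event ``all correct replicas prepare.'' Instead it works per replica throughout: in Lemma~\ref{lem:commit:alpha} it fixes a receiver $i$, writes the number of correct \textsc{Commit} senders reaching $i$ as a sum of indicators $X_j$ (``$j$ formed a prepare quorum \emph{and} $i$ is in $j$'s commit sample''), and confronts the fact that these indicators are dependent across $j$ (prepare-quorum formation events for different replicas are correlated). It resolves this by proving the $X_j$ are \emph{negatively associated} --- via Theorem~\ref{thm:na:rv}, closure under products, and disjoint monotone aggregation --- so that the Chernoff bound still applies to their sum, whose mean is $\alpha = (s/n)(n-f)\bigl(1-\mathit{exp}(-\sqrt{n})\bigr)$; the prepare-phase uncertainty thus enters only as the multiplicative $\bigl(1-\mathit{exp}(-\sqrt{n})\bigr)$ degradation of $\alpha$, not as a conditioning event. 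Lemma~\ref{lem:termination} then union-bounds the prepare and commit failures for one replica, Theorem~\ref{thm:termination} union-bounds over replicas, and the asymptotics give the stated form. Your proof replaces the negative-association machinery with a union bound over the prepare phase (which needs no independence) followed by conditioning on $E_p$ and a second, fresh invocation of Theorem~\ref{cor:epsilon} for the commit phase, justified by cross-phase independence of the \vrf\ samples. That independence assumption is legitimate and is in fact also used inside the paper's own proof (``$i$ being in the random sample of $j$ is independent of forming a quorum of \textsc{Prepare} messages by $j$''), so you are not assuming more than the paper's model grants. What each approach buys: yours is more elementary and isolates the dual dependency cleanly in a single conditioning step, at the price of requiring the stronger event that \emph{every} correct replica prepares; the paper's approach tolerates a constant fraction of correct replicas failing to prepare (only the mean $\alpha$ shrinks), yields explicit non-asymptotic constants, and is the tool one would need if the per-replica prepare probability were not high enough to survive an $(n-f)$-fold union bound. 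Both accountings land on the same $1-2(n-f)\,\mathit{exp}\boldsymbol{(}-\Theta(\sqrt{n})\boldsymbol{)}$: in your version $(n-f)+1 \le 2(n-f)$ per replica, or $(n-f)+(n-f)$ if you additionally union-bound the commit phase over all replicas to get the simultaneous statement.
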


Proving the above theorem constitutes the most complex part of demonstrating that \probft satisfies the Probabilistic Termination property.
This complexity arises from the following sources:
\begin{itemize}[leftmargin=1em,label=--]
    \item  The probability of forming probabilistic quorums by replicas is \textit{dependent}.
    That is, as replicas multicast their \textsc{Prepare} (resp., \textsc{Commit})  messages to random samples, knowing that a replica has received \textsc{Prepare} (resp., \textsc{Commit}) messages from a probabilistic quorum decreases the chance of other replicas to receive \textsc{Prepare} (resp., \textsc{Commit}) messages from a probabilistic quorum. 
    This dependency prevents us from directly using well-known and sharp bounds like the Chernoff bounds~\cite{motwani1995randomized}. 
    To circumvent this dependency, we use the notion of negative association~\cite{concentration}, enabling us to leverage the Chernoff bounds.
    \item The number of replicas that receive \textsc{Commit} messages from probabilistic quorums \emph{depends on} the number of replicas that receive \textsc{Prepare} messages from probabilistic quorums.
    We address this dual dependency inherent in computing the probability of deciding a value by a replica by conditioning the probability of forming quorums by \textsc{Commit} messages on the probability of forming quorums by \textsc{Prepare} messages.
\end{itemize}

In \probft, when the leader of view $v$ is correct, a correct replica might not receive enough messages to form quorums, leading to not deciding a value in view $v$.
According to Theorem~\ref{cor:termination}, such an event happens with a low probability.
However, as there are infinite views whose leaders are correct, each correct replica decides with probability $1$.
\begin{theorem}[Main liveness result]\label{thm:termination:infty}
In \probft, every correct replica eventually decides a value with probability $1$.
\end{theorem}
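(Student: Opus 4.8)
The plan is to leverage Theorem~\ref{cor:termination} together with the structural guarantees of the round-robin synchronizer to conclude that decision happens with probability~$1$. The key observation is that Theorem~\ref{cor:termination} already establishes a \emph{per-view} lower bound on the decision probability whenever the leader is correct and the system is past GST. Since $f < n/3$, there are infinitely many correct replicas in the round-robin leader schedule, so after GST there will be infinitely many views whose leader is correct. Each such view gives an independent-enough ``trial'' at which an undecided correct replica decides with probability bounded below by a fixed positive quantity, and the complement of ``eventually deciding'' forces infinitely many of these trials to all fail, an event of probability zero.

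\begin{proof}
Fix a correct replica~$i$ and suppose, for contradiction toward a probabilistic statement, that $i$ has not decided by some point after GST. By the round-robin leader rule $\mathtt{leader}(v) = (v-1 \bmod n)+1$, every replica takes the leader role once every $n$ consecutive views. Since at least $n-f > 2n/3$ replicas are correct and the view counter increases without bound (the synchronizer keeps advancing views until a decision is reached), there exist infinitely many views $v_1 < v_2 < \cdots$, all occurring after GST, whose leaders are correct.

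By Theorem~\ref{cor:termination}, for each such view $v_k$ with a correct leader, replica~$i$ decides a value in $v_k$ (if it has not already decided) with probability at least
\begin{equation*}
p \;=\; 1 - 2(n-f)\cdot\mathit{exp}\bigl(-\Theta(\sqrt{n})\bigr),
\end{equation*}
which is a fixed constant in $(0,1)$ for any fixed $n$. Let $A_k$ denote the event that replica~$i$ fails to decide during view~$v_k$; then $\Pr[A_k] \leq 1-p < 1$. Because each correct-leader view uses fresh \vrf-generated samples seeded on distinct view numbers $v_k \,||\, \textsc{T}$, the failure-to-decide event in one such view does not prevent a fresh decision attempt in the next, so the probability that $i$ never decides across the first $K$ correct-leader views is at most $\prod_{k=1}^{K}\Pr[A_k] \leq (1-p)^{K}$. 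Letting $K \to \infty$ gives
\begin{equation*}
\Pr[i \text{ never decides}] \;\leq\; \lim_{K \to \infty} (1-p)^{K} \;=\; 0.
\end{equation*}
Hence replica~$i$ decides with probability~$1$. Since $i$ was an arbitrary correct replica and there are finitely many of them, a union bound over $\Pi_C$ shows that every correct replica eventually decides with probability~$1$.
\end{proof}

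\textbf{The main obstacle} I anticipate is justifying the multiplicative bound $\prod_k \Pr[A_k]$ rigorously: the per-view failure events are not literally independent across views, since the fixed corruption set $\Pi_F$ and the adversarial scheduler persist across views. The clean argument is to avoid claiming independence and instead note that Theorem~\ref{cor:termination} bounds each conditional probability $\Pr[A_k \mid \text{history before } v_k]$ uniformly by $1-p$, because the \vrf samples in view $v_k$ are pseudorandom and independent of the past (distinct seeds, private keys unknown to the adversary). A supermartingale / conditioning argument then yields $\Pr[\bigcap_{k=1}^{K} A_k] \leq (1-p)^K$ without requiring full independence, and this is the step I would write most carefully.
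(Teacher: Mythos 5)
Your proof is correct and follows essentially the same route as the paper's: infinitely many correct-leader views occur after GST by the round-robin rule, each gives a decision probability bounded below via Theorem~\ref{cor:termination}, and the failure probability decays as $(1-p)^K \to 0$ (the paper phrases this as the number of required views following a geometric distribution). If anything, you are more careful than the paper, which silently assumes the per-view trials behave independently, whereas you explicitly flag this and propose the conditional-bound argument needed to justify the product $(1-p)^K$; your closing union bound over $\Pi_C$ is redundant since Theorem~\ref{cor:termination} already quantifies over all correct replicas, but it is harmless.
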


\subsection{Probabilistic Agreement}
In \probft, different replicas may decide different values since quorum intersections are not guaranteed, but the protocol has to ensure that the probability of agreement violation is low.
We begin by computing the probability of ensuring agreement within a view.

\paragraph{Probabilistic Agreement in a view}
In \probft, to cause disagreement in a view, it is required that multiple values are decided in the same view by multiple correct replicas.
Such a situation only happens when the leader is Byzantine since correct leaders send a single proposal in their views.

There are many cases in which a Byzantine leader can compromise the agreement in a view.
Rather than examining every possible case individually, we find the optimal behavior for a Byzantine leader, considering that it intends to maximize the probability of agreement violation.
For this purpose, we consider the three cases illustrated in Figure~\ref{fig:casesview}.
The first case is the most general one, which can be used to derive any possible situation.
The second and third cases demonstrate specific situations.
In the third case, the probability of agreement violation is greater than or equal to the probability of agreement violation in the second case and any other situation obtained from the first case.

\begin{figure}[!t]
    \centering
    \begin{subfigure}[t]{0.48\textwidth}
        \centering
        \includegraphics[scale=0.87]{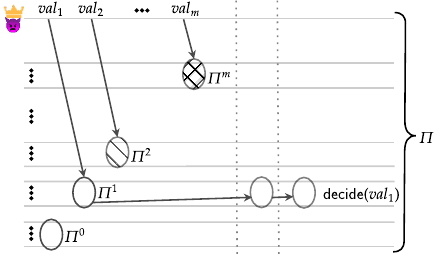}
        \caption{The general case.
        The Byzantine leader sends $m$ different proposals to $m$ non-empty subset of replicas, which might overlap.
        It also does not send any proposal to subset $\Pi^0$.}
        \label{fig:1}
    \end{subfigure}
    \hfill
    \begin{subfigure}[t]{0.48\textwidth}
        \centering
        \includegraphics[scale=0.87]{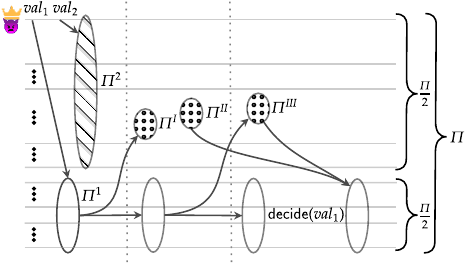}
        \caption{A sub-optimal case.
        The Byzantine leader sends two proposals $\mathit{val}_1$ and $\mathit{val}_2$ to $\Pi^1$ and $\Pi^2$, respectively.}
        \label{fig:2}
    \end{subfigure}
    \\
    \begin{subfigure}[t]{0.48\textwidth}
        \centering
        \includegraphics[scale=0.87]{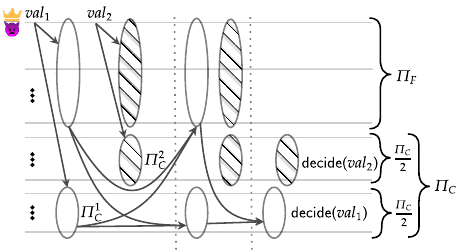}
        \caption{The optimal case.
        Given two sets $\Pi^1_C , \Pi^2_C \subseteq \Pi_C$ with equal sizes, the Byzantine leader sends only two proposals $\mathit{val}_1$ and $\mathit{val}_2$ to $\Pi^1_C \cup \Pi_F$ and $\Pi^2_C \cup \Pi_F$, respectively.}
        \label{fig:3}
    \end{subfigure}
    \caption{Different scenarios in which a Byzantine leader can cause disagreements in a view.}
    \Description[]{}
    \label{fig:casesview}
\end{figure}

\begin{itemize}[label=--,leftmargin=1em]
\item \emph{The general case.}    
    The Byzantine leader sends multiple proposals $\mathit{val}_1, \dots, \mathit{val}_m$, where $m\geq 2$.
    As a result, some replica(s) might receive one or even multiple proposals, and some others might not receive any proposals.
    This case is depicted in Figure~\ref{fig:1}.
\item \emph{A sub-optimal case.} 
    The Byzantine leader divides the set of replicas into two equally sized groups $\Pi^1$ and $\Pi^2$.
    It sends a proposal~$\mathit{val}_1$ to $\Pi^1$ and another proposal $\mathit{val}_2$ to $\Pi^2$.
    This case is depicted in Figure~\ref{fig:2}.
\item \emph{The optimal case.}
    The Byzantine leader makes a distinction between correct and Byzantine replicas.
    It divides the correct replicas into two equally sized groups -- $\Pi_C^1$ and $\Pi_C^2$.
    It sends a proposal $\mathit{val}_1$ to $\Pi_C^1 \cup \Pi_F$ and another proposal $\mathit{val}_2$ to $\Pi_C^2 \cup \Pi_F$.
    This case is depicted in Figure~\ref{fig:3}.
\end{itemize}

In order to describe why the first case represents the most general situation, we need to present some notations. 
For each proposal~$\mathit{val}_i$, where $1 \leq i \leq m$, we associate a set $\Pi^{i} \subset \Pi$ containing each replica~$p$ that receives $\mathit{val}_i$ as a proposal from the leader.
As the leader might send multiple proposals to a replica, any set $\Pi^i$ might intersect with another set $\Pi^j$, where $1\leq i<j\leq m$.   
We denote by $\Pi^0$ the set of replicas to which the leader does not send any proposal, but these replicas can receive messages from other replicas.
Indeed, $\Pi \setminus \Pi^0$ contains every replica that can multicast \textsc{Prepare} and \textsc{Commit} messages. 
Furthermore, we denote the correct (resp. Byzantine) replicas within a set~$\Pi^i$ by~$\Pi^i_C$ (resp. $\Pi^i_F$).

It is essential to remark that a correct replica $p$ to decide a value $\mathit{val}_i$ requires to form a probabilistic prepare quorum $P$ and a probabilistic commit quorum $Q$ such that $P, Q \subseteq \Pi^i$; otherwise, $p$ does not decide a value due to receiving two distinct values.
Consequently, if replica $p \in \Pi^i$ decides a value, the value is $\mathit{val}_i$.

Using the first case, we can model any situation in the system when the Byzantine leader sends multiple proposals, as we do not impose any restrictions on the leader's behavior.
For instance, replica $p$ decides $\mathit{val}_i$ when either 
\begin{enumerate*}[label=(\arabic*)]
\item it forms a probabilistic prepare quorum and a probabilistic commit quorum, both composed of Byzantine replicas within $\Pi^i$, i.e., $P,Q\subseteq \Pi^i_F$,
\item it forms a probabilistic prepare quorum composed of Byzantine replicas and a probabilistic commit quorum composed of correct replicas, i.e., $P \subseteq \Pi^i_F$ and $Q\subseteq \Pi^i_C$, or
\item it forms a probabilistic prepare quorum composed of correct and Byzantine replicas and a probabilistic commit quorum composed of correct replicas, i.e.,
$P \subseteq \Pi^i$, $P \cap \Pi_C^i \cap \Pi_F^i \neq \emptyset$, and $Q\subseteq \Pi^i_C$.
\end{enumerate*}

Beyond the situations where a replica decides a value, the first case demonstrates situations where a replica does not decide a value because of receiving multiple proposals.
For example, replica $p$ does not decide a value when either
\begin{enumerate*}[label=(\arabic*)]
\item it receives multiple proposals from the leader, i.e., $p \in \Pi^i$ and $p \in \Pi^j$, where $1\leq i<j\leq m$, 
\item it receives at least two different proposals $\mathit{val}_i$ and $\mathit{val}_j$ such that $\mathit{val}_i$ is received from the leader and $\mathit{val}_j$ is received from one of the replicas sending its \textsc{Prepare} message, or 
\item it receives at least two different proposals $\mathit{val}_i$ and $\mathit{val}_j$ such that $\mathit{val}_i$ is received from a replica sending its \textsc{Prepare} message, while $\mathit{val}_j$ is received from another replica sending its \textsc{Commit} message.
\end{enumerate*}

In order to grasp why the third case is optimal, consider the following observations:
\begin{enumerate}[label=(\arabic*), leftmargin=1.5em]
    \item If the Byzantine leader sends different proposals to a correct replica $p$, $p$ will detect the misbehavior of the leader and notify all replicas about it (line~\ref{line:notify:faultiness} in Algorithm~\ref{alg:probft}).
    Hence, sending multiple proposals to $p$ increases the probability that correct replicas avoid deciding some value(s).
    Note that there is no agreement violation when correct replicas avoid deciding a value.
    Hence, the Byzantine leader should send only \textit{one} proposal to each correct replica, i.e., $\Pi^i_C \cap \Pi^j_C =\emptyset$, for any $i,j \in \{1,\dots,m\}$, as it intends to increase the probability of agreement violation.
    Consequently, in the optimal case, the number of proposals the leader sends is bounded by the number of correct replicas, i.e., $m \leq n-f$.
    \item The Byzantine leader should send \textit{two} proposals to increase the probability of agreement violation.
    To show this result, we prove in Theorem~\ref{thm:m+1:to:m} that sending $m$ proposals instead of $m+1$ proposals, where $m\geq 2$, increases the probability of agreement violation.
    Given the previous observation that states the leader sends at most $n-f$ proposals, we can now say that the leader prefers to send $n-f-1$ proposals instead of sending $n-f$ proposals in the optimal case.
    Likewise, it prefers to send $n-f-2$ proposals instead of sending $n-f-1$ proposals.
    Following this line of reasoning, we conclude that the Byzantine leader should send two proposals to increase the probability of agreement violation.
    \item The probability of forming a probabilistic quorum by a replica $p$ for a proposal $\mathit{val}$ increases by expanding the set of replicas that send $\mathit{val}$.
    This result is formally presented in Theorem~\ref{cor:increase:t}.
    From the previous observation, we know that the leader should send two proposals.
    Now, we can say that the Byzantine leader should maximize the size of these two sets, resulting in the optimal case depicted in Figure~\ref{fig:3}.
\end{enumerate}

\begin{theorem}\label{thm:m+1:to:m}
Given a Byzantine leader who intends to send multiple proposals, consider the following two scenarios:
\begin{enumerate*}[label=(\arabic*)]
    \item given non-empty sets $\Pi^1,\dots,\Pi^{m+1}$ of replicas, where $m\geq 2$, and $|\Pi^1| \leq$
    $|\Pi^2|\leq \dots\leq |\Pi^{m+1}|$, the leader sends a distinct proposal to each set, and
    \item the leader merges two sets $\Pi^1$ and $\Pi^2$ to create a set $\Pi^{1,2}$ and sends $m$ proposals to $\Pi^{1,2},\Pi^3,\dots,\Pi^{m+1}$.
\end{enumerate*}
The probability of agreement violation in the second scenario is greater than in the first scenario.
\end{theorem}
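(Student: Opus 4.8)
The plan is to reduce the comparison to the only part of the system that differs between the two scenarios: the replicas that receive $\mathit{val}_1$ or $\mathit{val}_2$ (first scenario) versus the merged value $\mathit{val}_{1,2}$ (second scenario). The sets $\Pi^3,\dots,\Pi^{m+1}$, and every replica in them, behave identically in both scenarios. First I would recall that a correct replica in $\Pi^i$ can decide only $\mathit{val}_i$, and only if it assembles both a prepare quorum and a commit quorum drawn entirely from $\Pi^i$; hence an agreement violation occurs exactly when at least two distinct proposals are each decided by some correct replica. Writing $D_i$ for ``some correct replica decides $\mathit{val}_i$'', $D_{1,2}$ for ``some correct replica decides $\mathit{val}_{1,2}$'', and $V_1,V_2$ for the violation events in the two scenarios, I would condition on the number of distinct values decided among $\Pi^3,\dots,\Pi^{m+1}$, partitioning the sample space into $R_{\geq 2}$, $R_{=1}$, $R_{=0}$, whose probabilities coincide in both scenarios because that region is unchanged.

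The core tool is a coupling of the shared randomness (the VRF samples and the scheduling) together with the monotonicity of Theorem~\ref{cor:increase:t}: since $\Pi^1,\Pi^2\subseteq\Pi^{1,2}$, enlarging the set of senders only makes quorum formation easier, so under the coupling $D_1\lor D_2$ implies $D_{1,2}$. Carrying this through the three cases: on $R_{\geq 2}$ both scenarios violate; on $R_{=1}$ scenario~1 violates iff $D_1\lor D_2$ while scenario~2 violates iff $D_{1,2}$, so scenario~2 is at least as likely; and on $R_{=0}$ scenario~1 violates iff $D_1\land D_2$ (an internal two-value disagreement) whereas scenario~2 cannot violate at all, because $\Pi^{1,2}$ now carries a single value. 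Hence $\Pr[V_2]-\Pr[V_1]$ equals a gain term $\Pr[R_{=1}\land D_{1,2}\land\neg(D_1\lor D_2)]$ minus a loss term $\Pr[R_{=0}\land D_1\land D_2]$, and the theorem reduces to showing the gain strictly exceeds the loss.

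This last step is the main obstacle, and I expect it to need quantitative quorum-formation estimates rather than pure monotonicity. I would split on the sizes of the two smallest sets. When $|\Pi^1|<q$ or $|\Pi^2|<q$, the set is below the quorum threshold and no replica in it can gather $q$ matching messages, so $\Pr[D_1\land D_2]=0$, the loss term vanishes, and the gain term is strictly positive whenever $\Pr[D_{1,2}]>0$ and $\Pr[R_{=1}]>0$; since the leader always merges the two smallest sets, this is the cleanest and typical regime. When both sets exceed $q$, I would use the strengthening of Theorem~\ref{cor:increase:t} that $\Pr[D_{1,2}]$ dominates $\Pr[D_1]+\Pr[D_2]$ once the union clears the threshold, which yields $\Pr[D_{1,2}\land\neg(D_1\lor D_2)]>\Pr[D_1\land D_2]$, and then control the weights $\Pr[R_{=1}]$ against $\Pr[R_{=0}]$ by noting that when the remaining (larger) sets almost surely each produce a decision, the mass concentrates on $R_{\geq 2}$ where both scenarios already violate, while for $m=2$ the single remaining set makes $R_{=1}$ far more likely than $R_{=0}$. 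The delicate point throughout is that the decide-events of the two regions are not independent, since Byzantine replicas and the global sampling structure couple them; I would therefore invoke the negative-association machinery already used for Theorem~\ref{cor:termination} to justify the conditioning and the factorization of these probabilities, and to obtain the final strict inequality.
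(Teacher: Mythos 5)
Your route is genuinely different from the paper's, and as written it does not close. The paper never performs an event decomposition at all: it proves Lemma~\ref{lem:new:lem} (the probability of disagreement is inversely related to the probability that correct replicas detect the leader's faultiness), and then argues that merging $\Pi^1$ and $\Pi^2$ can only lower the detection probability --- unchanged for replicas in $\Pi^3,\dots,\Pi^{m+1}$, and strictly lower for replicas in the merged set because, by Theorem~\ref{cor:increase:t}, a larger sender set makes them more likely to decide (and hence less likely to observe a conflict). Your decomposition into $R_{\geq 2}$, $R_{=1}$, $R_{=0}$ with the gain term $\Pr[R_{=1}\land D_{1,2}\land\neg(D_1\lor D_2)]$ and the loss term $\Pr[R_{=0}\land D_1\land D_2]$ is structurally cleaner and, to its credit, exposes the real content of the theorem --- that merging \emph{removes} a violation mode (internal $\mathit{val}_1$-vs-$\mathit{val}_2$ disagreement) and this loss must be dominated by the gain --- something the paper's qualitative argument never confronts explicitly. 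But you acknowledge that establishing gain $>$ loss is ``the main obstacle,'' and the tools you reach for do not exist: the ``strengthening'' $\Pr[D_{1,2}]\geq\Pr[D_1]+\Pr[D_2]$ is not Theorem~\ref{cor:increase:t} and is proved nowhere, and the comparison of the weights $\Pr[R_{=1}]$ versus $\Pr[R_{=0}]$ is handled only by informal regime arguments (``mass concentrates on $R_{\geq2}$,'' ``$R_{=1}$ far more likely'') that leave intermediate configurations of set sizes uncovered. Since this inequality \emph{is} the theorem, the proof is incomplete at its decisive step.

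There is also a flaw in the setup itself: the claim that $\Pi^3,\dots,\Pi^{m+1}$ ``behave identically in both scenarios,'' so that the $R$-events have equal probability under both, is not true in \probft. By lines~\ref{line:block_view_start}--\ref{line:notify:faultiness} of Algorithm~\ref{alg:probft}, a replica in $\Pi^1$ that receives a leader-signed $\mathit{val}_2$ message broadcasts the conflicting proposals to \emph{all} replicas, which blocks replicas in $\Pi^3,\dots,\Pi^{m+1}$ from deciding; such conflicts are more frequent in scenario~1 (where the merged region carries two values) than in scenario~2. So the regions are coupled through the detection broadcasts, the $R$-probabilities differ across scenarios, and your identity $\Pr[V_2]-\Pr[V_1]=\text{gain}-\text{loss}$ does not hold as stated. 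The discrepancy happens to point in the favorable direction (scenario~1 suffers more blocking, hence less violation), so the decomposition could likely be repaired into an inequality, but that repair --- like the gain-versus-loss bound --- still needs to be carried out before this constitutes a proof.
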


\begin{theorem}\label{cor:increase:t}
Suppose any replica forms a quorum upon receiving $q$ messages. 
Consider a set of $r$ replicas, each of which takes a sample composed of $o \times q$ distinct replicas uniformly at random from $\Pi$, with the condition that $n < o\times r$, and sends a message to all sample members. 
The value of $r$ and the probability of a replica forming a quorum are directly proportional.
\end{theorem}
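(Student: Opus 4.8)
The plan is to reduce the event ``a replica forms a quorum'' to a tail event of a binomial random variable and then argue monotonicity in $r$ by a coupling. First I would fix an arbitrary receiver $p \in \Pi$; because every sample is drawn uniformly at random from $\Pi$, the probability that $p$ forms a quorum is identical for all replicas, so it suffices to analyze this single $p$. For each sender $k \in \{1,\dots,r\}$, let $I_k = \mathbf{1}[p \in S_k]$ be the indicator that $k$ includes $p$ in its sample $S_k$. Since $S_k$ is a uniformly random $(o\times q)$-subset of the $n$ replicas, symmetry gives $\Pr[I_k = 1] = (o\times q)/n$, and because the $r$ senders draw their samples independently, the $I_k$ are independent. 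Hence the number of messages received by $p$ is $X_r = \sum_{k=1}^{r} I_k \sim \mathrm{Bin}\!\big(r,\,(o\times q)/n\big)$, and the event that $p$ forms a quorum is exactly $\{X_r \geq q\}$.

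Next I would establish that $\Pr[X_r \geq q]$ is increasing in $r$ via a monotone coupling. I realize $X_r$ and $X_{r+1}$ on a common probability space by setting $X_{r+1} = X_r + I_{r+1}$, where $I_{r+1}$ is an independent $\mathrm{Bernoulli}\!\big((o\times q)/n\big)$ variable. Then $X_{r+1} \geq X_r$ holds pointwise, so $\{X_r \geq q\} \subseteq \{X_{r+1} \geq q\}$ and therefore $\Pr[X_{r+1} \geq q] \geq \Pr[X_r \geq q]$. This already yields that the quorum-formation probability is non-decreasing in $r$. To obtain a strict relationship, I would decompose $\Pr[X_{r+1} \geq q] - \Pr[X_r \geq q] = \Pr[X_r = q-1]\cdot (o\times q)/n$, which is strictly positive provided $\Pr[X_r = q-1] > 0$. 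The hypothesis $n < o\times r$ is exactly what guarantees this: since the sample size forces $o\times q \leq n$, we get $q \leq n/o < r$, so $r > q$; in particular $r \geq q-1$, and a binomial on $r$ trials with success probability in $(0,1)$ assigns positive mass to the value $q-1$. Thus $r$ and the probability of forming a quorum are directly (strictly positively) related.

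I would also record the clean interpretation that $n < o\times r$ is equivalent to $\mathbb{E}[X_r] = (o\times q\times r)/n > q$, i.e.\ the expected number of messages received exceeds the quorum threshold, which is the operating regime of interest. The main obstacle here is not any heavy computation but rather pinning down the two modeling facts precisely: that the per-sender inclusion probability is exactly $(o\times q)/n$ by symmetry, and that the inclusion indicators are genuinely independent across senders (which relies on each sample being generated independently by its sender, as dictated by the \vrf calls). I would emphasize that the precise value of the inclusion probability is immaterial to the argument: all that matters is that it is a constant in $(0,1)$ not depending on $r$. Once these facts are in place, the stochastic-dominance coupling makes monotonicity immediate, and the condition $n < o\times r$ upgrades it to the strict monotonicity claimed in the statement.
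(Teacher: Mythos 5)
Your proof is correct, and its first half coincides exactly with the paper's: both arguments reduce the claim to monotonicity in $r$ of the binomial tail $\Pr\left( \mathrm{Bin}(r, oq/n) \geq q \right)$, using the same two modeling facts (per-sender inclusion probability $oq/n$ by symmetry of uniform sampling, and independence of the inclusion indicators across senders). Where you genuinely diverge is the monotonicity step itself. The paper takes $X \sim \mathrm{Bin}(r, s/n)$ and $Y \sim \mathrm{Bin}(r+1, s/n)$ with $s = oq$ and proves $\Pr(Y \leq q) < \Pr(X \leq q)$ by a term-by-term algebraic comparison of the two CDFs: a chain of implications starting from $n < o(r+1)$ yields $\binom{r+1}{k}(s/n)^k(1-s/n)^{r+1-k} < \binom{r}{k}(s/n)^k(1-s/n)^{r-k}$ for every $k \leq q$, so the hypothesis $n < o \times r$ is woven into showing that each summand shrinks. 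Your monotone coupling $X_{r+1} = X_r + I_{r+1}$ gives the non-strict inequality with no hypothesis at all, and your exact identity $\Pr(X_{r+1} \geq q) - \Pr(X_r \geq q) = (oq/n)\Pr(X_r = q-1)$ isolates precisely what strictness requires, namely $q-1 \leq r$ and success probability in $(0,1)$. This is both more elementary and slightly more general: it makes visible that the stated condition $n < o\times r$ (equivalently $\mathbb{E}[X_r] > q$) is sufficient but not necessary for the strict increase, whereas the paper's algebraic route cannot dispense with it. One caveat shared by both arguments: they implicitly assume $oq < n$ (the paper divides by $1 - s/n$; you need the success probability strictly below $1$). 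In the degenerate case $oq = n$ every replica receives all $r$ messages, the quorum probability is identically $1$ for $r \geq q$, and strict monotonicity fails — so this non-degeneracy assumption is genuinely needed in either proof, and neither states it.
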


Since the third case discussed above is optimal, i.e., the probability of compromising the agreement is maximized by a Byzantine leader when it divides the correct replicas into two equally sized groups, $\Pi_C^1$ and $\Pi_C^2$, and sends a proposal $\mathit{val}_1$ to $\Pi_C^1 \cup \Pi_F$ and another proposal $\mathit{val}_2$ to $\Pi_C^2 \cup \Pi_F$, we only analyze this case.
We prove that given a Byzantine leader who may send several proposals, the probability of agreement violation in a view under the worst-case scenario is bounded by $\mathit{exp}\boldsymbol{(} -\Theta(\sqrt{n}) \boldsymbol{)}^4$ in the following theorem.

\begin{theorem}\label{thm:agreement:in:view}
Given a Byzantine leader who may send several proposals, the probability of agreement violation in a view under the worst-case scenario is at most $\mathit{exp}\boldsymbol{(} -\Theta(\sqrt{n}) \boldsymbol{)}^4$. 
\end{theorem}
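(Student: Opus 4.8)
By the three observations preceding this statement, together with Theorems~\ref{thm:m+1:to:m} and~\ref{cor:increase:t}, the probability of agreement violation is maximized in the optimal case of Figure~\ref{fig:3}, so it suffices to bound that case. There the correct replicas are split into two equal halves $\Pi^1_C,\Pi^2_C$, each of size $(n-f)/2$, and the leader sends $\mathit{val}_1$ to $\Pi^1_C\cup\Pi_F$ and $\mathit{val}_2$ to $\Pi^2_C\cup\Pi_F$. As already noted, a correct replica in $\Pi^i=\Pi^i_C\cup\Pi_F$ can decide only $\mathit{val}_i$, so an agreement violation in the view is exactly the event $A_1\cap A_2$, where $A_i$ denotes ``some correct replica decides $\mathit{val}_i$.'' The target bound $\mathit{exp}(-\Theta(\sqrt n))^4$ will arise as a product of four exponentially small factors, one for each of the two phases (\emph{prepare} and \emph{commit}) of each of the two values; I would obtain it by first proving $\Pr[A_i]\le\mathit{exp}(-\Theta(\sqrt n))^2$ and then decoupling $A_1$ from $A_2$.

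The heart of the argument is that, in this split configuration, forming even a \emph{single} probabilistic quorum for a value is exponentially unlikely. The replicas that ever emit a matching \textsc{Prepare} (resp.\ \textsc{Commit}) message for $\mathit{val}_i$ all lie in $\Pi^i$, of size $(n+f)/2$. For a fixed correct recipient $p$, the number of matching messages it collects is $X=\sum_{s\in\Pi^i}\mathbf{1}[p\in S_s]$, where the samples $S_s$ of distinct senders are independent and each contains $p$ with probability $oq/n$; hence $X$ is a sum of \emph{independent} indicators with $\mathbb{E}[X]\le\frac{o(n+f)}{2n}\,q$. I would fix $o$ with $o<\frac{2n}{n+f}$, which is compatible with the termination requirement $o>\frac{n}{n-f}$ of Theorem~\ref{cor:epsilon} precisely because $f<n/3$ makes the interval $\bigl(\frac{n}{n-f},\frac{2n}{n+f}\bigr)$ nonempty. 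Then $\mathbb{E}[X]\le c\,q$ for a constant $c<1$ bounded away from $1$, and since $q=l\sqrt n$, a Chernoff upper-tail bound~\cite{motwani1995randomized} gives $\Pr[X\ge q]\le\mathit{exp}(-\Theta(q))=\mathit{exp}(-\Theta(\sqrt n))$.

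To pass from a single quorum to a full decision of $\mathit{val}_i$ by a fixed replica $p$, I would resolve the prepare-to-commit dependency by conditioning on the entire \emph{prepare} phase. Fixing the prepare-phase randomness determines both whether $p$ prepared $\mathit{val}_i$ and the set $T$ of replicas that emit \textsc{Commit} messages for $\mathit{val}_i$ (the correct members of $\Pi^i_C$ that prepared, together with at most $f$ Byzantine replicas), which always satisfies $|T|\le(n+f)/2$. Because the \emph{commit} samples use a different \vrf{} seed, they are independent of the prepare phase, so conditionally the commit-quorum event is again a sum of independent indicators over a pool of size at most $(n+f)/2$ and is bounded by $\mathit{exp}(-\Theta(\sqrt n))$ \emph{uniformly} in the conditioning (using monotonicity in the pool size, Theorem~\ref{cor:increase:t}). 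Multiplying this uniform commit bound by the prepare bound $\Pr[p\text{ prepares }\mathit{val}_i]\le\mathit{exp}(-\Theta(\sqrt n))$ yields $\Pr[p\text{ decides }\mathit{val}_i]\le\mathit{exp}(-\Theta(\sqrt n))^2$; a union bound over the at most $(n-f)/2$ candidate deciders multiplies this only by a polynomial factor, which is absorbed into the exponent since $\log n=o(\sqrt n)$, giving $\Pr[A_i]\le\mathit{exp}(-\Theta(\sqrt n))^2$.

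Finally I would combine $A_1$ and $A_2$. They cannot be treated as literally independent, because the Byzantine replicas feed \textsc{Prepare} and \textsc{Commit} messages to both halves through the same \vrf{} samples. The clean fix is to condition on all Byzantine-controlled randomness and choices: once the Byzantine samples are fixed, $A_1$ depends only on the samples of $\Pi^1_C$ and $A_2$ only on those of $\Pi^2_C$, and these two sets of correct senders are disjoint, hence independent. The per-value bound of the previous step holds \emph{uniformly} over the Byzantine configuration (the worst case already being that all $f$ Byzantine replicas support the value, which is exactly what makes the pool size $(n+f)/2$), so conditionally $\Pr[A_1\cap A_2\mid\text{Byz}]=\Pr[A_1\mid\text{Byz}]\,\Pr[A_2\mid\text{Byz}]\le\mathit{exp}(-\Theta(\sqrt n))^4$, and taking expectation over the Byzantine randomness preserves the bound. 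I expect this last decoupling---handling the shared Byzantine randomness across the two values while simultaneously managing the prepare-to-commit dependency within each value---to be the main obstacle, since it is precisely where a naive independence assumption would be unjustified and where the $f<n/3$ resilience threshold enters to keep every per-phase expectation safely below the quorum threshold $q$.
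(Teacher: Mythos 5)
Your proof follows essentially the same route as the paper's: reduce to the two-way split of Figure~\ref{fig:3}, bound each quorum formation by a Chernoff upper tail over a sender pool of size $(n+f)/2$ (your condition $o<2n/(n+f)$ is exactly the paper's hypothesis $r\le n/o$ in Lemma~\ref{lem:agreement:violation}, and your remark that $f<n/3$ makes it compatible with the termination regime is a nice addition the paper omits), handle the prepare-to-commit dependency by conditioning on the prepare phase and exploiting the fresh commit-seed randomness, and multiply four exponentially small factors. In places you are more careful than the paper itself: the explicit union bound over candidate deciders, and the uniform conditional commit bound, which the paper only gestures at when it writes $\Pr(A\cap B)\le\Pr(A)\Pr(B)$.

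However, your final decoupling step has a genuine flaw. You condition on all Byzantine-controlled randomness and claim that $\Pr[A_i\mid\text{Byz}]\le \mathit{exp}(-\Theta(\sqrt n))^2$ holds \emph{uniformly} over that conditioning, arguing the worst case (all $f$ Byzantine replicas supporting the value) is already absorbed into the pool size $(n+f)/2$. That absorbs the worst case over Byzantine \emph{behavior}, not over Byzantine \emph{randomness}: once the Byzantine \vrf samples are fixed by the conditioning, their contribution to a correct replica $p$'s count is a fixed number $b_p\le f$, and since $q=l\sqrt n\ll f$, there are positive-probability (exponentially rare) realizations with $b_p\ge q$ in both phases, on which the conditional probability of forming both quorums is not small at all. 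Hence $\mathbb{E}\bigl[\Pr[A_1\mid\text{Byz}]\,\Pr[A_2\mid\text{Byz}]\bigr]$ cannot be bounded by the product of uniform bounds; a good/bad split on Byzantine sample configurations repairs the argument, but then $\Pr[\text{bad}]=\mathit{exp}(-\Theta(\sqrt n))$ (a single Chernoff deviation, times a polynomial union bound) dominates and the fourth-power structure is lost --- you still meet the literal $\Theta$-statement only because the constant $4$ is absorbed by $\Theta$. The paper sidesteps conditioning entirely: by Theorem~\ref{thm:na:rv} together with closure under products and disjoint monotone aggregation, the indicators $\mathbf{1}[p\in S_s]$ are negatively associated, so the two monotone quorum-count events (for $\mathit{val}_1$-recipients and $\mathit{val}_2$-recipients) satisfy $\Pr[\text{both}]\le\Pr[\cdot]\,\Pr[\cdot]$ directly, shared Byzantine senders notwithstanding. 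Substituting that negative-association step for your conditioning argument makes your proof go through while preserving the product form.
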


\paragraph{Probabilistic Agreement with view change} 
We now consider the case of a view change.
When referring to agreement within different views, we need to guarantee that if at least one correct replica decides on a proposal $\mathit{val}$ in a view~$v$, the probability that some correct replica decides on a different proposal~$\mathit{val}'$ in a view $v' > v$ is negligible.
To guarantee this condition, we need to demonstrate that the leader of any view $v''>v$ proposes $\mathit{val}$ with high probability.

Recall that, in Algorithm \ref{alg:probft}, when the synchronizer notifies a replica to enter a new view $v''$, the replica informs the leader of~$v''$ about its latest prepared value through a $\textsc{NewLeader}$ message.  
The leader of~$v''$ waits until it observes a deterministic quorum of $\textsc{NewLeader}$ messages.
If at least $\lceil{(n+f+1)/2}\rceil$ correct replicas have prepared $\mathit{val}$, then the leader must propose $\mathit{val}$, regardless of its type, whether Byzantine or correct.
The problem occurs when $w < \lceil{(n+f+1)/2}\rceil$ correct replicas have prepared $\mathit{val}$.
Note that $w\geq 1$ as we assumed at least one correct replica has decided $\mathit{val}$.
One of the following scenarios can happen:

\begin{itemize}[label=--, leftmargin=1.1em]
    \item $1 \leq w \leq f$. 
    If the leader is Byzantine, it can propose any value different than $\mathit{val}$.
    Besides, if the leader is correct, it proposes a value $\mathit{val}' \neq \mathit{val}$ 
        %
    when the number of $\textsc{NewLeader}$ messages received from replicas that prepared $\mathit{val}'$ is greater than the number of $\textsc{NewLeader}$ messages received from replicas that prepared~$\mathit{val}$.
    \item $f+1 \leq w < \lceil{(n+f+1)/2}\rceil$. 
    If the leader is Byzantine, it can propose a value $\mathit{val}' \neq \mathit{val}$ if the number of replicas that prepared~$\mathit{val}'$ is greater than the number of replicas that prepared $\mathit{val}$.
    Besides, if the leader is correct, it proposes a value $\mathit{val}' \neq \mathit{val}$ when the number of $\textsc{NewLeader}$ messages received from replicas that prepared~$\mathit{val}'$ is greater than the number of $\textsc{NewLeader}$ messages received from replicas that prepared $\mathit{val}$.
\end{itemize}

We need to ensure that there is a high probability that the system will not be in these scenarios.
The following theorem shows this.

\begin{theorem}\label{thm:view:change}
    The probability of proposing a value~$\mathit{val}'$ when another value $\mathit{val}$ has been decided by a correct replica in a prior view is at most $\mathit{exp}\boldsymbol{(}-(q\times \delta^2)/((\delta+1)\times(\delta+2)) \boldsymbol{)} $, where $\delta = 2n/(o\times (n+f)) -1$.
\end{theorem}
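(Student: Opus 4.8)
The plan is to reduce the event ``the leader of a later view proposes $\mathit{val}'$'' to the purely probabilistic event ``some correct replica forms a prepared certificate for $\mathit{val}'$'', and then to bound the latter with a single Chernoff estimate. First I would establish the reduction: by the $\mathtt{safeProposal}$ predicate, a value can be proposed at a view $v''>v$ only if it is the mode among the \textsc{NewLeader} messages at the maximal prepared view, which in turn requires a valid $\mathtt{prepared}$ certificate for that value. Because each \textsc{Prepare} message carries a \vrf{} proof whose sample is fixed (pseudorandomly) by the deterministic seed $v\,||\,``\text{prepare}''$, faulty replicas cannot steer their samples to all contain the certificate holder; hence a prepared certificate for $\mathit{val}'$ cannot be assembled from Byzantine \textsc{Prepare} messages alone and necessarily relies on at least one correct replica having prepared $\mathit{val}'$. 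This lets me replace the proposal event by the event that a fixed correct replica $p$ collects $q$ \textsc{Prepare} messages for $\mathit{val}'$.

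Next I would count the replicas that can possibly send a \textsc{Prepare} for $\mathit{val}'$. Since each correct replica sets $\mathit{voted}$ at most once (line~\ref{line:set:currVal}) and $\mathit{val}$ was decided, the correct replicas that could prepare $\mathit{val}'$ lie in the complementary group $\Pi_C^2$ of the optimal split of Figure~\ref{fig:3}, of size $(n-f)/2$; adding all $f$ Byzantine replicas, at most $(n+f)/2$ replicas ever transmit a \textsc{Prepare} for $\mathit{val}'$. Fixing this worst-case sender set, the number $X$ of such messages received by $p$ is $X=\sum_{k}\mathbf{1}[p\in S_k]$, a sum over senders of indicators that are \emph{independent} across senders (distinct keys yield independently drawn \vrf{} samples), each with $\Pr[p\in S_k]=oq/n$. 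Hence $\mu=\mathbb{E}[X]=\frac{n+f}{2}\cdot\frac{oq}{n}=\frac{q}{\delta+1}$, where $\delta=\frac{2n}{o(n+f)}-1$ is chosen precisely so that the quorum threshold satisfies $q=(1+\delta)\mu$.

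Finally I would apply the multiplicative Chernoff upper tail $\Pr[X\ge(1+\delta)\mu]\le \exp\!\big(-\mu\delta^2/(2+\delta)\big)$; substituting $\mu=q/(\delta+1)$ turns this into $\exp\!\big(-q\delta^2/((\delta+1)(\delta+2))\big)$, exactly the claimed bound. Notably, no negative-association machinery is needed for this theorem, since for a \emph{fixed} receiver the sender indicators are genuinely independent; this is what makes the concentration step routine once the setup is in place, in contrast to the termination analysis of Theorem~\ref{cor:termination}.

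The main obstacle is the reduction rather than the bound. One must argue carefully that proposing $\mathit{val}'$ truly forces a correct replica to have prepared $\mathit{val}'$, which requires (i) invoking the \vrf{} uniqueness and pseudorandomness guarantees to rule out certificates forged from Byzantine \textsc{Prepare} messages, and (ii) an induction over the views between $v$ and $v''$ to ensure the maximal prepared view keeps pointing to a certificate for $\mathit{val}$, so that $\mathit{val}'$ cannot slip in as the mode at an intermediate view. Justifying that $(n+f)/2$ is indeed the worst-case number of $\mathit{val}'$-senders---tying it back to the optimal-adversary split underlying Theorem~\ref{thm:agreement:in:view}---is the other place where care is required.
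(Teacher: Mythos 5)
Your Chernoff computation is exactly the paper's (same sender count $r=(n+f)/2$, same $\mu=q/(\delta+1)$, same tail bound), but you apply it to the wrong event, and the reduction meant to justify this is where the proof breaks. You claim that proposing $\mathit{val}'$ in a view $v''>v$ forces some replica to hold a $\mathtt{prepared}$ certificate for $\mathit{val}'$. That is false: $\mathtt{safeProposal}$ also accepts a proposal when all $\lceil(n+f+1)/2\rceil$ \textsc{NewLeader} messages collected by the new leader report $\mathit{view}=0$, in which case $\mathtt{validNewLeader}$ demands no certificate and the leader may propose a brand-new value $\mathit{val}'$ for which no \textsc{Prepare} message was ever sent (the paper's third view-change scenario). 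The reduction is also incomplete when the maximal reported prepared view is some $v'<v$: the certificate for $\mathit{val}'$ was then formed in view $v'$, \emph{before} $\mathit{val}$ was decided, and nothing in your argument caps the number of $\mathit{val}'$-senders in that earlier view at $(n+f)/2$. In both situations the event you bound either does not occur or is not constrained to few senders, yet agreement can still be violated --- indeed, once a fresh $\mathit{val}'$ legitimately passes $\mathtt{safeProposal}$ after the view change, it is decided with \emph{high} probability, not low. The improbable event in this theorem is the earlier \emph{decision} of $\mathit{val}$, not the later preparation of $\mathit{val}'$.

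There is a second gap in the sender count itself: conditioning on ``$\mathit{val}$ was decided'' does not deterministically imply that at least $(n-f)/2$ correct replicas prepared $\mathit{val}$ --- a decision can (improbably) arise from a commit quorum among as few as $q$ preparers, leaving up to $n-f-q$ correct replicas free to prepare $\mathit{val}'$, so your $r=(n+f)/2$ is unjustified; making ``decided $\Rightarrow$ many preparers'' rigorous is itself a probabilistic claim whose proof is essentially the paper's argument. The paper runs the implication in the opposite, deterministic direction: if $\mathit{val}'$ can pass $\mathtt{safeProposal}$ in view $v+1$, then in each of the three scenarios the correct members of the \textsc{NewLeader} quorum (at least $\lceil(n+f+1)/2\rceil-f$ of them) did not prepare $\mathit{val}$ in view $v$, which caps the possible \textsc{Commit} senders for $\mathit{val}$ at $(n+f)/2$; Lemma~\ref{lem:d:3} (your identical Chernoff step, applied to \textsc{Commit} rather than \textsc{Prepare} messages) then bounds the probability that $\mathit{val}$ was nevertheless decided. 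Your appeal to the Figure~\ref{fig:3} split is a category error --- that split is the adversary's optimal \emph{within-view} strategy for Theorem~\ref{thm:agreement:in:view}, not a consequence of the decision event --- and, more minor, a certificate holder need not be correct nor fixed, so your ``fixed correct replica $p$'' would in any case require a union bound over possible holders.
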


\begin{corollary}[Main safety result]\label{thm:livenss:safety}
\probft guarantees safety with a probability of at least $1 - \mathit{exp}(-\Theta(\sqrt{n}))$.
\end{corollary}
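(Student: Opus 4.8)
The plan is to obtain the corollary directly from the two agreement bounds already established. Safety here comprises Validity and Probabilistic Agreement, and Validity holds \emph{deterministically}: a correct replica decides only a value that passed $\mathtt{safeProposal}$, which in turn enforces $\mathtt{valid}(x)$ (line~\ref{line:safe:proposal}). Thus the entire argument reduces to bounding the probability of an agreement violation, i.e., the probability that two correct replicas decide distinct values.

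First I would fix $v^\ast$ to be the earliest view in which any correct replica decides, and let $\mathit{val}$ be that decided value. Any agreement violation then amounts to some correct replica deciding $\mathit{val}'\neq\mathit{val}$, either within $v^\ast$ itself or in some later view, and these are exactly the two regimes the preceding theorems cover. An intra-view violation requires the leader of $v^\ast$ to equivocate, so its probability is at most $\mathit{exp}\boldsymbol{(}-\Theta(\sqrt{n})\boldsymbol{)}^4$ by Theorem~\ref{thm:agreement:in:view}. A cross-view violation requires a leader of some view $v''>v^\ast$ to propose $\mathit{val}'\neq\mathit{val}$ and have it decided, so it is controlled by Theorem~\ref{thm:view:change}.

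The delicate point, and the step I expect to be the main obstacle, is ruling out a naive union bound over the infinitely many views following $v^\ast$, which would diverge. The resolution is that the cross-view bad event is governed by a single random quantity fixed at the moment of decision: the number $w$ of correct replicas that prepared $\mathit{val}$ in $v^\ast$. As the case analysis preceding Theorem~\ref{thm:view:change} shows, once $w\geq\lceil(n+f+1)/2\rceil$ the $\mathtt{safeProposal}$ predicate forces every subsequent leader---correct or Byzantine---to re-propose $\mathit{val}$, since any deterministic quorum of $\textsc{NewLeader}$ messages is then dominated by these preparers at the maximum prepared view; hence $\mathit{val}$ is locked in all later views and no conflicting value can ever be decided. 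The only route to a cross-view violation is therefore the single low-probability event that $w$ is too small, which is precisely what Theorem~\ref{thm:view:change} bounds by $\mathit{exp}\boldsymbol{(}-(q\,\delta^2)/((\delta+1)(\delta+2))\boldsymbol{)}$, so no per-view union is required.

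Finally I would combine the two contributions by a union bound and simplify the orders of growth. Substituting $q=l\sqrt{n}$ and noting that $\delta=2n/(o(n+f))-1=\Theta(1)$ (as $f<n/3$ and $o$ is a constant) turns the cross-view exponent into $\Theta(\sqrt{n})$, while the intra-view term satisfies $\mathit{exp}\boldsymbol{(}-\Theta(\sqrt{n})\boldsymbol{)}^4=\mathit{exp}(-\Theta(\sqrt{n}))$. Summing the two, the probability of an agreement violation is $\mathit{exp}(-\Theta(\sqrt{n}))$, and together with deterministic Validity this shows that \probft satisfies safety with probability at least $1-\mathit{exp}(-\Theta(\sqrt{n}))$, as claimed.
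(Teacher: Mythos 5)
Your proposal is correct and follows essentially the same route as the paper: the paper's own proof of this corollary is a one-line combination of Theorem~\ref{thm:agreement:in:view} (intra-view agreement violation) with Theorem~\ref{thm:view:change} (cross-view violation), which is exactly the decomposition and union bound you use. The extra details you supply---deterministic Validity via $\mathtt{safeProposal}$, fixing the earliest decision view $v^\ast$, and the locking argument for $w\geq\lceil(n+f+1)/2\rceil$ that avoids a divergent per-view union bound---are precisely the reasoning the paper leaves implicit in the discussion preceding Theorem~\ref{thm:view:change}, so they elaborate rather than depart from the paper's argument.
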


\section{Numerical Evaluation}\label{sec:evaluation}
In this section, we illustrate the usefulness of \probft by presenting a numerical analysis of the agreement and termination probabilities, considering $q=2\sqrt n$ and $o \in \{1.6,1.7,1.8\}$. 

\paragraph{Analysis with fixed fault threshold and varying system sizes}
In Figure~\ref{fig:analysis}, the top sub-figures depict analyses with $f/n=0.2$ and varying system sizes.
The top-left sub-figure depicts the probability of ensuring agreement, considering the worst-case scenario in which there are faulty leaders in each view.
It shows that as the system size increases, the probability of ensuring agreement also increases.
Besides, the top-right sub-figure depicts the probability of terminating in a view after GST when the leader is correct.
It shows the probability of deciding increases as the number of replicas increases.

\paragraph{Analysis with fixed system size and a varying number of faulty replicas}
In Figure~\ref{fig:analysis}, the bottom sub-figures depict similar analyses with $n=100$ and a varying number of faulty replicas.
The bottom-left sub-figure shows that the probability of ensuring agreement increases as we have fewer Byzantine replicas.
Similarly, the bottom-right sub-figure shows that the probability of deciding increases as we have fewer Byzantine replicas.



\paragraph{Number of exchanged messages}
Figure~\ref{fig:pbft:probft:hotstuff} shows the number of exchanged messages in PBFT, \probft (for different values of $o$), and HotStuff.
In the figure, it is possible to see that \probft exchanges significantly fewer messages than PBFT despite having the same good-case optimal latency.
Taking together with the results of Figure~\ref{fig:analysis}, we can see that \probft with $o=1.7$ ensures a high probability of agreement and termination exchanging only $18$-$25\%$ of the messages required by PBFT.

\begin{figure}
    \centering
    \includegraphics[scale=0.755]{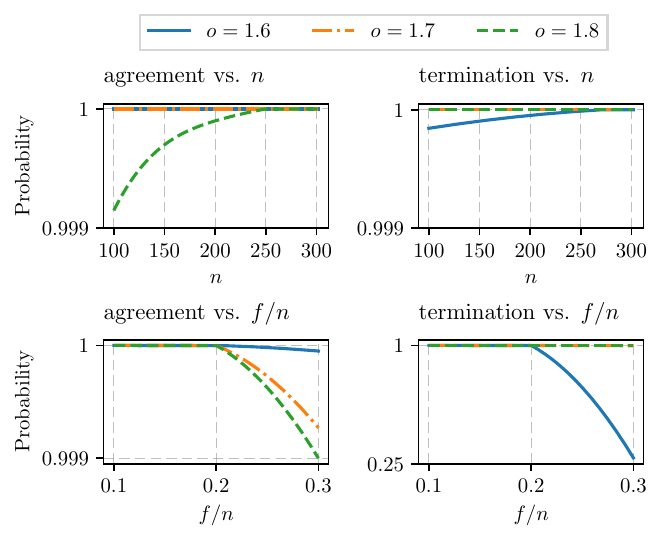}
    \caption{
    \probft agreement and termination probability analysis with $q=2\sqrt n$.
    For $f/n=0.2$, the top-left figure depicts the probability of ensuring agreement with faulty leaders in every view, 
    while the top-right figure shows the probability of terminating in a view after GST when the leader is correct.
    For $n=100$, the bottom-left figure depicts the probability of ensuring agreement with faulty leaders in every view, \
    while the bottom-right figure depicts the probability of terminating in a view after GST when the leader is correct.}
    \label{fig:analysis}
    \Description[]{}
\end{figure}

\section{Related Work}\label{sec:related:work}
\paragraph{Scalable BFT consensus protocols}
PBFT~\cite{pbft} is considered the baseline ``practical'' BFT protocol.
It is optimal in terms of resilience ($f < n/3$) and best-case latency (three communication steps) but employs an all-to-all message exchange pattern, which results in a quadratic message complexity, making it very costly in large deployments. 
With the advent of large-scale systems like blockchains and decentralized payment systems, the scalability of Byzantine consensus protocols has become a hot topic, with numerous contributions from both academia and industry.

There are several approaches to enhance the scalability of BFT consensus protocols. 
For instance, protocols like HotStuff~\cite{yin19hotstuff} modify the message exchange pattern from all-to-all to leader-to-all-to-leader, resulting in a linear message complexity.
Alternatively, to improve the message complexity and balance the load on the system, some protocols~\cite{kauri,li2020scalable} utilize a tree-based message exchange pattern, while other protocols (e.g.,~\cite{algorand,rapidchain,gosig,cason2021design}) employ a gossip layer for communication.
Similarly, there are some works that use expander graphs (and related techniques) to make Byzantine agreement more scalable (e.g.,~\cite{expandergraphs}).
Other works like RedBelly~\cite{redbelly} improve different aspects of the protocol, such as how ordered transactions are disseminated and verified, to achieve better scalability in practical deployments without reducing the message complexity of the base protocol.
While most of these works avoid the all-to-all communication pattern, they increase the number of communication steps necessary to achieve consensus.

\paragraph{Randomized consensus protocols}
The primary drive for developing randomized consensus protocols was the well-known FLP impossibility result~\cite{flp}. 
This result states that for crash-prone asynchronous systems, designing a deterministic consensus protocol is impossible.
One of the main approaches to circumvent such an impossibility result involves relaxing deterministic termination to probabilistic termination, assuming processes have access to random numbers.

The randomized consensus protocol by Ben-Or~\cite{ben-or} assumes a strong adversary who can observe the entire history of the system and uses a local coin. 
The protocol operates in rounds, with each round involving $O(n^2)$ message exchanges, and it requires exponential expected time to converge in the worst case.
Subsequent works like RITAS~\cite{ritas} and WaterBear~\cite{waterbear} showed this type of protocol can be made practical.


Rabin~\cite{rabin1983randomized} showed that the same type of protocol could achieve termination in the expected constant time by resorting to a common coin.
The construction of this common coin typically requires strong cryptographic primitives, which can negatively affect the performance of the protocol. 
Further works like Cachin et al.~\cite{cachin2000random} leveraged threshold signatures to devise a protocol with constant expected time and message complexity of $O(n^2)$.
More recently, Mostéfaoui et al.~\cite{mostefaoui2015signature} proposed a similar protocol that is signature-free.
HoneyBadgerBFT~\cite{miller2016honey} is a practical randomized consensus protocol that has at least an $O(n^3)$ message complexity.
Follow-up works (e.g.,~\cite{duan2018beat,dumbong,Liu23}) improved different aspects of practical randomized protocols, including their message complexity, but never below $O(n^2)$.
The same can be said about DAG-based protocols (e.g.,~\cite{dagrider,bullshark}) that use local rules to commit blocks (i.e., decide consensus values) and resort to a kind of randomized consensus only when such rules are not enough.


\paragraph{Synchronizer}
Byzantine fault-tolerant consensus protocols designed for partially synchronous systems typically structure their execution in a sequence of views, with the premise that there will be a view in which all correct replicas will overlap with enough time to reach a consensus if there is a correct leader. 
Designing these protocols is challenging, and researchers usually pay more attention to guaranteeing the system's safety rather than liveness~\cite{pbft,sbft,yin19hotstuff}.
The problem with the partially synchronous model and designing the protocols in a sequence of views is that replicas may diverge indefinitely before the GST, reaching GST in different views. 
This problem is typically not addressed in commonly used Byzantine fault-tolerant protocols~\cite{pbft,bessani2014state,Naor2021Cogsworth} in which the liveness is argued based on the assumption that after the system reaches GST, all the correct replicas will eventually converge to the same view.
By separating the mechanism used for view synchronization in a distinct component, Bravo~et~al.~\cite{makingByzConsLive} formally defined the synchronizer abstraction, which we employ in \probft.
Notice that using such abstraction does not incur added message complexity as there are constructions with linear message complexity~\cite{linearsync,fever}. 

\paragraph{Probabilistic quorum systems}
Malkhi et al.~\cite{malkhi01probabilistic} introduced probabilistic quorum systems to enhance the efficiency of data replication by relaxing strict quorum requirements and allowing for probabilistic guarantees of consistency.
These quorum systems operate under the implicit assumption that any chosen quorum will be accessible without taking into account the potential effects of failures or asynchrony~\cite{availability_non_strictQS,originQS,signedQS}.
In other words, it does not account for the impact of an adversarial scheduler (also known as an active adversary~\cite{availability_non_strictQS} or an asynchronous scheduler~\cite{signedQS}) that could potentially delay the delivery of messages.
Yu~\cite{signedQS} introduced an alternative concept termed signed quorum systems, aiming to address the challenges posed by network scheduling. 
Nevertheless, Yu's method remains susceptible to manipulation by an adversarial scheduler~\cite{availability_non_strictQS}.

\section{Conclusion}\label{sec:conclusion}
We presented \probft, a Byzantine fault-tolerant consensus protocol that ensures safety and liveness with high probability in permissioned partially synchronous systems.
This protocol's message complexity is $O(n\sqrt{n})$ in a system with $n$ replicas, and it has an optimal number of communication steps.
\probft introduces a novel paradigm for designing scalable Byzantine-resilient protocols for less pessimistic contexts. 
We believe the same techniques used in \probft can be employed in other types of quorum-based protocols. 
As future work, we are particularly interested in leveraging \probft for constructing a scalable state machine replication protocol and a streamlined blockchain consensus, eliminating the need for a view-change sub-protocol.

\begin{acks}
This work was partially supported by the \grantsponsor{dfg}{Deutsche Forschungsgemeinschaft (DFG, German Research Foundation)}{https://www.dfg.de/} -- \grantnum{dfg}{446811880 (BFT2Chain)}, 
and by \grantsponsor{fct}{FCT} \ \ through the    
\grantnum{smartchain}{\href{https://doi.org/10.54499/2022.08431.PTDC}{SMaRtChain project (2022. 08431.PTDC)}} and the
\grantnum{lasige}{\href{https://doi.org/10.54499/UIDB/00408/2020}{LASIGE Research Unit (UIDB/00408/2020} and \href{https://doi.org/10.54499/UIDP/00408/2020}{UIDP/00408/2020)}}.
\end{acks}

\ifbool{extendedVersion}{}{\balance}
\bibliographystyle{ACM-Reference-Format}
\bibliography{ref.bib}

\clearpage
\appendix

\section{Probability Definitions and Bounds}
We utilize several probability definitions and bounds to establish the correctness of \probft. 
This appendix presents such definitions and bounds.

\subsection*{Chernoff bounds} 
We use the Chernoff bounds~\cite{motwani1995randomized} for bounding the probability that the sum of independent random variables deviates significantly from its expected value.
Suppose $X_1, \dots, X_n$ are independent Bernoulli random variables, and let $X$ denote their sum.
Then, for any $\delta\in(0,1)$:
\begin{align}
&\Pr\boldsymbol{(} X \leq (1-\delta)\mathop{\mathbb{E}}[X] \boldsymbol{)} \leq \mathit{exp}( -\delta^2\mathop{\mathbb{E}}[X]/2 ).\label{ineq:chernof2}
\end{align}
Besides, for any $\delta \geq 0$:
\begin{align}
\Pr\boldsymbol{(} X \geq (1+\delta)\mathop{\mathbb{E}}[X] \boldsymbol{)} \leq \mathit{exp}\boldsymbol{(} -\delta^2\mathop{\mathbb{E}}[X]/(2+\delta) \boldsymbol{)}.\label{ineq:chernof}
\end{align}

\subsection*{Negative association} 
In randomized algorithms and analysis, using independent random variables is common, enabling the application of powerful theorems and bounds like Inequalities~\ref{ineq:chernof2} and \ref{ineq:chernof}. 
However, random variables may not always be independent.
The following definition and theorem allow us to leverage Inequalities~\ref{ineq:chernof2} and \ref{ineq:chernof} when random variables are dependent.

\begin{definition}[Negative association~\cite{concentration}]
The random variables $X_i, i\in \{1,\dots, n\}$, are negatively associated if for all disjoint subsets $I,J\subseteq \{1,\dots, n\}$ and all non-decreasing functions $f$ and $g$,
\begin{align*}
\mathop{\mathbb{E}}[f(X_i, i\in I)g(X_j,j\in J)] \leq \mathop{\mathbb{E}}[f(X_i, i\in I)]\mathop{\mathbb{E}}[g(X_j,j\in J)].
\end{align*}
\end{definition}

\begin{theorem}[Chernoff–Hoeffding bounds with negative dependence~\cite{concentration}]\label{thm:chernoff:NA}
The Chernoff–Hoeffding bounds can be applied to $X=\sum_{i\in \{1,\dots, n\}}X_i$ if the random variables $X_1,\dots,X_n$ are negatively associated.
\end{theorem}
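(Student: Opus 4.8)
The plan is to reduce the theorem to the single structural fact that distinguishes independent sums from negatively associated ones: the moment generating function (MGF) of $X$ still factorizes \emph{as an upper bound}. The derivations of Inequalities~\ref{ineq:chernof2} and~\ref{ineq:chernof} in the independent case invoke independence at exactly one point, namely to write $\mathbb{E}[e^{tX}]=\prod_i\mathbb{E}[e^{tX_i}]$ (and the analogous identity for $e^{-tX}$). Every subsequent manipulation -- bounding each Bernoulli factor by $e^{(e^t-1)\mathbb{E}[X_i]}$, collecting the exponents into $e^{(e^t-1)\mathbb{E}[X]}$, and optimizing over $t>0$ -- uses that identity only as an \emph{upper} bound on the MGF. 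Hence it suffices to prove the one-sided factorizations $\mathbb{E}[e^{tX}]\le\prod_i\mathbb{E}[e^{tX_i}]$ and $\mathbb{E}[e^{-tX}]\le\prod_i\mathbb{E}[e^{-tX_i}]$ for $t>0$, after which the identical optimization yields exactly the stated bounds.

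First I would establish the key lemma: if $X_1,\dots,X_n$ are negatively associated and $f_1,\dots,f_n$ are non-negative functions that are \emph{either all non-decreasing or all non-increasing}, then
\[
\mathbb{E}\Big[\prod_{i=1}^n f_i(X_i)\Big]\le\prod_{i=1}^n\mathbb{E}[f_i(X_i)].
\]
I would prove this by induction on $n$. For the all-non-decreasing case, set $I=\{1,\dots,n-1\}$ and $J=\{n\}$, and take $f(X_i, i\in I)=\prod_{i<n}f_i(X_i)$ and $g(X_n)=f_n(X_n)$; since a product of non-negative non-decreasing factors is itself non-decreasing in each coordinate, the Definition of negative association applies and gives $\mathbb{E}[\prod_{i\le n}f_i(X_i)]\le\mathbb{E}[\prod_{i<n}f_i(X_i)]\,\mathbb{E}[f_n(X_n)]$; the induction hypothesis then closes the step. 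The all-non-increasing case reduces to the non-decreasing version by a sign flip: for non-increasing $f,g$ the functions $-f,-g$ are non-decreasing, and applying the definition to $-f,-g$ yields $\mathbb{E}[fg]\le\mathbb{E}[f]\mathbb{E}[g]$ because the two sign changes cancel. This covers both monotonicity directions needed below.

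Next I would apply the lemma twice. Taking $f_i(x)=e^{tx}$ with $t>0$ (each non-negative and non-decreasing) gives $\mathbb{E}[e^{tX}]=\mathbb{E}[\prod_i e^{tX_i}]\le\prod_i\mathbb{E}[e^{tX_i}]$, the factorization needed for the upper tail~\ref{ineq:chernof}. Taking $f_i(x)=e^{-tx}$ with $t>0$ (each non-negative and non-increasing) gives $\mathbb{E}[e^{-tX}]\le\prod_i\mathbb{E}[e^{-tX_i}]$, the factorization needed for the lower tail~\ref{ineq:chernof2}, whose standard proof starts from $\Pr(X\le a)\le e^{ta}\,\mathbb{E}[e^{-tX}]$. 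Plugging these bounds into the exponential-Markov argument and running the same optimization over $t$ as in the independent derivation reproduces Inequalities~\ref{ineq:chernof2} and~\ref{ineq:chernof} verbatim, since the marginal distributions of the $X_i$ (here Bernoulli) are untouched by the dependence structure.

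The main obstacle is the lemma, specifically keeping the monotonicity bookkeeping correct. The Definition of negative association is stated only for non-decreasing $f,g$, so I must verify that the partial product $\prod_{i<n}f_i(X_i)$ is genuinely non-decreasing (resp. non-increasing) as a function of the vector $(X_i)_{i<n}$ \emph{before} invoking the definition; this is where non-negativity of the $f_i$ is essential, since a product of non-negative monotone functions inherits monotonicity whereas a product of sign-indefinite ones need not. I must also confirm that the non-increasing case really follows from the sign flip, which it does precisely because exactly two factors change sign. Once the lemma is in place no further probabilistic input is required: the theorem is just the observation that the Chernoff method is robust to replacing an MGF equality by an MGF inequality of the correct sign.
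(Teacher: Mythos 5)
Your proposal is correct, and it is essentially the canonical argument: the paper itself gives no proof of this theorem, importing it directly from the cited reference on negative dependence, and the standard proof there is exactly yours --- establish the monotone-product lemma $\mathbb{E}\bigl[\prod_i f_i(X_i)\bigr]\le\prod_i\mathbb{E}[f_i(X_i)]$ for non-negative, uniformly monotone $f_i$ by induction from the definition of negative association, specialize to $f_i(x)=e^{tx}$ and $f_i(x)=e^{-tx}$ to get the one-sided MGF factorizations, and observe that the Chernoff--Hoeffding derivation uses independence only through those factorizations as upper bounds. Your bookkeeping on the two delicate points (non-negativity making the partial products coordinatewise monotone, and the double sign flip handling the non-increasing case) is accurate, so there is no gap.
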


In order to ease showing negative association, two following properties are considered~\cite{concentration}:
\begin{itemize}[label=--, leftmargin=1em]
\item \textbf{Closure under Products:} If $X_1, \ldots, X_n$ and $Y_1, \ldots, Y_m$ are two independent families of random variables that are separately negatively associated, then the family $X_1, \ldots, X_n,$ $Y_1, \ldots, Y_m$ is also negatively associated.

\item \textbf{Disjoint Monotone Aggregation:} If $X_i$, $i \in \{1,\dots, n\}$, are negatively associated and $\mathcal{A}$ is a family of disjoint subsets of $\{1,\dots, n\}$, then the random variables $f_A(X_i, i \in A)$, $A \in \mathcal{A}$, are also negatively associated, where the $f_A$s are arbitrary non-decreasing (or non-increasing) functions.
\end{itemize}

We also use the following result adapted from~\cite{narv} in the process of showing a negative association.
\begin{theorem}\label{thm:na:rv}
    Consider $n$ distinct items from which a sample of size $s$ is chosen without replacement.
    Let $X_i, i\in \{1,\dots, n\}$, be random variables indicating the presence of a specific item in the sample.
    Random variables $X_i, i\in \{1,\dots, n\}$, are negatively associated.
\end{theorem}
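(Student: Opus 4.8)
The plan is to recognize the joint law of $(X_1,\dots,X_n)$ as a \emph{permutation distribution} of a fixed $0$--$1$ vector and then invoke the negative association of such distributions, which belongs to the same negative-dependence toolkit recalled earlier in this appendix.

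First I would make the reduction precise. Since the sample of size $s$ is drawn uniformly at random without replacement, every size-$s$ subset of the $n$ items is equally likely, so the indicator vector $(X_1,\dots,X_n)$ is uniform over the $\binom{n}{s}$ binary vectors containing exactly $s$ ones. Equivalently, fix the deterministic vector $v=(\underbrace{1,\dots,1}_{s},\underbrace{0,\dots,0}_{n-s})$ and let $\pi$ be a uniformly random permutation of $\{1,\dots,n\}$; then $(X_1,\dots,X_n)$ has the same joint distribution as $(v_{\pi(1)},\dots,v_{\pi(n)})$. Thus the $X_i$ are exactly the coordinates of a uniformly random permutation of a fixed real (here $0$--$1$) vector, and negative association depends only on this joint law.

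Next I would apply the permutation-distribution theorem: the coordinates of a uniformly random permutation of any fixed real vector are negatively associated. To keep the argument self-contained I would sketch its inductive proof. The base case $n=2$ is immediate: the degenerate cases $s\in\{0,2\}$ make all variables constant, and for $s=1$ we have $X_2=1-X_1$, so for non-decreasing $f,g$ the map $g(X_2)=g(1-X_1)$ is non-increasing in $X_1$; Chebyshev's sum inequality then gives $\mathbb{E}[f(X_1)g(X_2)]\le \mathbb{E}[f(X_1)]\,\mathbb{E}[g(X_2)]$, which is the defining inequality for the only nontrivial disjoint pair $I=\{1\},\,J=\{2\}$. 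The inductive step decomposes the random permutation of an $n$-vector into the entry placed in one coordinate together with a uniformly random permutation of the remaining $n-1$ entries, combining the inductive hypothesis with the \emph{closure under products} and \emph{disjoint monotone aggregation} properties already recalled above.

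The main obstacle I anticipate is precisely this inductive step: negative association is \emph{not} preserved under arbitrary mixtures or conditioning, so one cannot merely condition on the value placed in a fixed coordinate and average the two conditionally-NA families. Overcoming this requires exploiting the specific exchangeable structure of the permutation distribution, as in the source~\cite{narv} from which the statement is adapted, rather than a generic conditioning argument. An equivalent route that sidesteps spelling out this step is to regard the $X_i$ as independent Bernoulli variables conditioned on their sum equalling $s$ and to invoke the known fact that conditioning independent (log-concave) variables on their total yields a negatively associated family; I would present the permutation-distribution reduction as the main line and keep this alternative as a cross-check.
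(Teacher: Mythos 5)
The paper does not actually prove this statement: it is imported verbatim, with the phrase ``adapted from~\cite{narv}'', and no argument is given in the appendix. Your proposal therefore cannot diverge from ``the paper's proof''---what it does is reconstruct the content of the cited reference, and it does so correctly. The reduction is exactly right: uniform sampling of a size-$s$ subset makes $(X_1,\dots,X_n)$ uniform over binary vectors with $s$ ones, i.e., a permutation distribution of a fixed $0$--$1$ vector, and negative association of permutation distributions is precisely the classical Joag-Dev--Proschan theorem. Your base case via Chebyshev's sum inequality is sound, and you are right to flag the inductive step as the delicate point: negative association is not closed under mixtures or conditioning, so the naive ``condition on the entry in one coordinate and average'' argument fails, and the actual proof must exploit the exchangeability of the permutation distribution. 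You leave that step as a pointer to the source rather than carrying it out, which makes your write-up a sketch rather than a self-contained proof---but since the paper itself discharges the statement by citation, this is no weaker than what the paper does, and your alternative route (i.i.d.\ Bernoulli variables conditioned on their sum being $s$, which is the same hypergeometric law and is known to be negatively associated) is a legitimate cross-check. In short: correct approach, matching the cited source; the only caveat is that the inductive step of the permutation theorem is invoked, not proved.
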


\subsection*{Hypergeometric distribution} 
The hypergeometric distribution, denoted as $\mathcal{HG}(N, M, r)$, characterizes the number of specific items within a random sample of size $r$, drawn without replacement from a population of size~$N$, containing $M$ items of the same type~\cite{probBook}. 
The expected value of a random variable $X$ following this distribution equals $rM/N$. 
Additionally, a tail bound for $X$ can be derived as follows:
\begin{equation}\label{ieq:tail:bound}
    \Pr\big( X \leq  \mathop{\mathbb{E}}[X] - rt \big) \leq \mathit{exp}(-2rt^2),
\end{equation}
with $t$ taking values within the interval $(0, M/N)$~\cite{CHVATAL1979285,skala2013hypergeometric}.

\section{Probability of Forming a Probabilistic Quorum}
Here we present preliminary results that will be used to analyze \probft.
Recall that in \probft, in each view~$v$, upon receiving the first proposal from the leader of~$v$, a correct replica takes a random sample of size~$o \times q$ and sends a \textsc{Prepare} message to all members of that sample.
Upon receiving~$q$ valid \textsc{Prepare} messages, a correct replica forms a probabilistic quorum.
Here we compute the probability that a correct replica forms a probabilistic quorum when all correct replicas send their \textsc{Prepare} messages.
We later use this result to compute the probability of ensuring termination after GST and when the leader is correct.
With this aim, given a correct process~$i$, assuming that $r$ processes send their \textsc{Prepare} messages, we first compute the expected number of processes that have~$i$ in their samples.
Using such an expected value and the Chernoff bound, we then provide a lower bound for the probability that~$i$ forms a probabilistic quorum.
Last, we set~$r=|\Pi_C|$ and present the desired probability.

\begin{lemma}\label{lem:expect}
Let $R$ be a subset of replicas with a size of $r$, where each replica $i \in R$ takes a sample composed of $s$ distinct replicas uniformly at random from~$\Pi$.
The expected number of replicas in~$R$ with a given replica~$j$ in their samples is $r \times s / n$.
\end{lemma}
\begin{proof}
For each replica $i \in R$, we define an indicator random variable $I_{ij}$ identifying whether replica~$j$ is in the sample of $i$:  
\begin{align*}
    I_{ij} = 
    \begin{cases}
           1 \quad \text{$j$ is in the sample of $i$}
        \\ 0 \quad \text{otherwise.}
    \end{cases}
\end{align*}
The expected value of $I_{ij}$ can be computed as follows:
\begin{align*}
    \mathop{\mathbb{E}}[I_{ij}] 
    = \Pr( \text{$j$ is in the sample of $i$} )
    = s / n.
\end{align*}
Let $I_j = \sum_{i\in R}I_{ij}$, indicating the number of replicas in~$R$ that have $j$ in their samples.
Using the linearity of expectation, the expected value of $I_{j}$ can be computed as follows:
\begin{align*}
     \mathop{\mathbb{E}}[I_j] = \sum_{i\in R}\mathop{\mathbb{E}}[I_{ij}] = r\times s / n.
\end{align*}
\end{proof}

\begin{theorem}\label{thm:1}
Let $R$ be a subset of replicas with size $r$, where each replica $i \in R$ takes a sample composed of $s = o \times q$ distinct replicas uniformly at random from $\Pi$.
If $n < o \times r$, then at least $q$ replicas within~$R$ have a given replica~$j$ in their samples with a probability of at least 
$1 - \mathit{exp}\boldsymbol{(}- ((s\times r)/(2n)) \times (1-(n/(o \times r)))^2 \boldsymbol{)}$.
\end{theorem}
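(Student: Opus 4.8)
The plan is to apply the lower-tail Chernoff bound (Inequality~\ref{ineq:chernof2}) to the random variable $I_j = \sum_{i\in R} I_{ij}$ already introduced in the proof of Lemma~\ref{lem:expect}, where $I_{ij}$ indicates whether the given replica~$j$ lies in the sample of replica~$i\in R$. The event ``at least $q$ replicas within~$R$ have $j$ in their samples'' is exactly $\{I_j \geq q\}$, so it suffices to bound $\Pr(I_j < q)$ from above and take the complement.

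First I would observe that the summands $\{I_{ij}\}_{i\in R}$ are \emph{independent} Bernoulli variables: each replica draws its sample with its own independent randomness (in the protocol, via its distinct \vrf key), so whether $j$ appears in the sample of $i$ is independent across $i$, each with $\Pr(I_{ij}=1)=s/n$. This independence is precisely what lets me invoke Inequality~\ref{ineq:chernof2} directly, without resorting to the negative-association machinery needed elsewhere. From Lemma~\ref{lem:expect} I already have $\mathop{\mathbb{E}}[I_j] = r\times s/n = q\times (o\times r)/n$, using $s = o\times q$; the hypothesis $n < o\times r$ guarantees $\mathop{\mathbb{E}}[I_j] > q$, which is exactly the regime in which a lower-tail bound is meaningful.

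Next I would set the deviation parameter $\delta = 1 - n/(o\times r)$. Since $n < o\times r$, this $\delta$ lies in $(0,1)$, so Inequality~\ref{ineq:chernof2} applies, and by construction $(1-\delta)\mathop{\mathbb{E}}[I_j] = (n/(o\times r))\times q\times(o\times r)/n = q$. Substituting yields
\[
\Pr(I_j \leq q) \leq \mathit{exp}\!\left(-\frac{\delta^2 \mathop{\mathbb{E}}[I_j]}{2}\right) = \mathit{exp}\!\left(-\frac{s \times r}{2n}\left(1 - \frac{n}{o \times r}\right)^{2}\right),
\]
which is precisely the claimed tail. Finally, because $\{I_j < q\}\subseteq\{I_j \leq q\}$, I conclude $\Pr(I_j \geq q) = 1 - \Pr(I_j < q) \geq 1 - \mathit{exp}\boldsymbol{(}-((s\times r)/(2n))(1 - n/(o\times r))^2\boldsymbol{)}$, completing the proof.

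The computation itself is routine; the only point that demands care is the justification of independence of the $I_{ij}$ across $i\in R$, which hinges on each replica sampling with independent randomness. Everything else is bookkeeping: verifying $\delta\in(0,1)$ from the hypothesis $n<o\times r$ and checking that the threshold $(1-\delta)\mathop{\mathbb{E}}[I_j]$ collapses to exactly $q$ so that the Chernoff exponent matches the stated bound.
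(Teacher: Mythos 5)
Your proof is correct and follows essentially the same route as the paper's: both apply the lower-tail Chernoff bound (Inequality~\ref{ineq:chernof2}) to $I_j=\sum_{i\in R}I_{ij}$ from Lemma~\ref{lem:expect} with $\delta = 1-n/(o\times r)$, using $n < o\times r$ to ensure $\delta\in(0,1)$ and noting that $(1-\delta)\mathop{\mathbb{E}}[I_j]=q$. If anything, your handling of the complement via $\{I_j < q\}\subseteq\{I_j\leq q\}$ is slightly more careful than the paper's, which writes $\Pr(I_j\geq q)=1-\Pr(I_j\leq q)$ as an equality even though only the (correct) inequality direction is needed.
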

\begin{proof}
Let $I_j$ be defined in the same way as in Lemma~\ref{lem:expect}, representing the number of replicas in~$R$ that have replica~$j$ in their samples.
We need to find a lower bound for $\Pr( I_j \geq q )$.
Since $I_j$ is obtained by taking the summation of $r$ i.i.d. indicator random variables, we can use the Chernoff bound~\ref{ineq:chernof2} to find a lower bound for $\Pr( I_j \geq q )$.
If $n < o \times r$, we have:
\begin{align*}
\Pr( I_j \geq q ) 
& = 1 - \Pr( I_j \leq q ) 
\\& = 1 - \Pr( I_j \leq s/o)  \qquad\qquad\qquad\qquad\qquad\text{(since $s=o \times q$)}
\\& = 1 - \Pr\left( I_j \leq \frac{s\times r}{n} \times \frac{n}{o\times r} \right) 
\\& = 1 - \Pr\left( I_j \leq  \frac{s\times r}{n} \times \boldsymbol{(}1-(1-\frac{n}{o\times r})\boldsymbol{)} \right) 
\\& \geq 1 - \mathit{exp}\left(- \frac{\frac{s\times r}{n} \times \left(1-\frac{n}{o\times r}\right)^2 }{2} \right). 
\end{align*}
As $\mathop{\mathbb{E}}[I_j] = r\times s / n$ according to Lemma~\ref{lem:expect}, the last line holds using the Chernoff bound~\ref{ineq:chernof2} by assuming $\delta = 1-n/(o\times r)$.
Note that in Chernoff bound~\ref{ineq:chernof2}, $\delta$ needs to be in the interval $(0,1)$.
Hence, we need to assume that $n < o\times r$.
\end{proof}

\begin{corollary}\label{thm:s:oq:q}
Suppose each correct replica takes a sample composed of $s = o \times q$ distinct replicas uniformly at random from $\Pi$ and multicasts a message to all members of the sample.
Provided that a replica forms a probabilistic quorum upon receiving $q$ messages if $n < o\times(n-f)$, the probability of forming a probabilistic quorum by a replica is at least $1 - \mathit{exp}\boldsymbol{(}- q(c-1)^2/(2c) \boldsymbol{)}$, where $c=o\times (n-f)/n$.  
\end{corollary}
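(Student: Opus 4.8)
The plan is to derive this corollary directly from Theorem~\ref{thm:1} by an appropriate choice of the parameter $r$ followed by a routine algebraic simplification of the resulting bound. Since we want a lower bound on the probability of forming a probabilistic quorum that remains valid even when the Byzantine replicas send no messages, I would instantiate the set $R$ of message senders in Theorem~\ref{thm:1} as exactly the set of correct replicas $\Pi_C$, so that $r = |\Pi_C| = n - f$. With this choice, the hypothesis $n < o \times r$ of Theorem~\ref{thm:1} becomes precisely the stated condition $n < o \times (n-f)$.

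First I would substitute $r = n-f$ and $s = o \times q$ into the probability bound of Theorem~\ref{thm:1}, namely $1 - \mathit{exp}\boldsymbol{(} -((s\times r)/(2n))(1 - (n/(o\times r)))^2 \boldsymbol{)}$. Writing $c = o(n-f)/n$, the identity $o \times (n-f) = c \times n$ lets the prefactor simplify as $(s\times r)/(2n) = (o\,q\,(n-f))/(2n) = q\,c/2$. Likewise, $n/(o\times r) = n/(o(n-f)) = 1/c$, so the squared term becomes $(1 - 1/c)^2 = (c-1)^2/c^2$. Combining the two simplifications, the exponent collapses to $-(q\,c/2)\cdot((c-1)^2/c^2) = -q(c-1)^2/(2c)$, which is exactly the claimed lower bound $1 - \mathit{exp}\boldsymbol{(} -q(c-1)^2/(2c) \boldsymbol{)}$.

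The derivation carries no genuine obstacle; all the probabilistic content already resides in Theorem~\ref{thm:1}, which in turn rests on the expectation computation of Lemma~\ref{lem:expect} and the Chernoff bound~\ref{ineq:chernof2}. The only modeling decision worth emphasizing is the choice $r = n-f$: by counting only correct senders we obtain a bound that is robust to silent Byzantine replicas, and the constraint $\delta = 1 - n/(o\times r) \in (0,1)$ required by the Chernoff bound inside Theorem~\ref{thm:1} translates here into the hypothesis $n < o(n-f)$, equivalently $c > 1$, which is precisely what the corollary assumes.
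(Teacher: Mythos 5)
Your proof is correct and takes essentially the same route as the paper's own proof: both instantiate Theorem~\ref{thm:1} with the senders being exactly the correct replicas, so $r = n-f$ and $s = o\times q$, and then simplify the exponent to $-q(c-1)^2/(2c)$ via $o(n-f) = cn$. Your closing remark that the hypothesis $n < o(n-f)$ is precisely the Chernoff-bound constraint $\delta \in (0,1)$ inherited from Theorem~\ref{thm:1} also matches the paper's reasoning.
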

\begin{proof}
Given a replica $j$, let $I_j$ be defined in the same way as in Lemma~\ref{lem:expect}, representing the number of correct replicas that have $j$ in their samples.
We have:
\begin{align*}
\Pr\left( \text{$j$ forms a probabilistic quorum} \right) = \Pr\left( I_j \geq q \right).    
\end{align*} 
We now use Theorem~\ref{thm:1} to find a lower bound for $\Pr\left( I_j \geq q \right)$.
Note that in Theorem~\ref{thm:1}, $r$ is the number of replicas that take random samples.
Since all correct replicas take random samples, $r=n-f$.
We have:
\begin{align*}
\Pr\left( I_j \geq q \right) 
& \geq 1 - \mathit{exp}\left(- \frac{\frac{s\times r}{n} \times \left(1-\frac{n}{o\times r}\right)^2 }{2} \right) &\hfill \text{(from Theorem~\ref{thm:1})}
\\& = 1 - \mathit{exp}\left(- \frac{\frac{o\times q(n-f)}{n} \times \left(1-\frac{n}{o\times(n-f)}\right)^2}{2} \right) &\hfill \text{(since $s=o\times q$, and $r=n-f$)}
\\& = 1 - \mathit{exp}\left(- \frac{q(c-1)^2}{2c} \right),
\end{align*} 
where $c = o\times(n-f)/n$.
\end{proof}

\section{Theorems related to the optimal behavior of a Byzantine leader}
This section presents two theorems related to the optimal behavior of a Byzantine leader.
In the first theorem, we compute a relationship between the number of replicas that send \textsc{Prepare} (resp. \textsc{Commit}) messages and the probability of forming a probabilistic quorum in the \emph{prepare} (resp. \emph{commit}) phase.

\begin{theorem}[Theorem~\ref{cor:increase:t}]
Suppose a replica forms a probabilistic quorum upon receiving~$q$ messages. 
Consider~$r$ replicas, each of which takes a sample composed of $s = o \times q$ distinct replicas uniformly at random from $\Pi$, with the condition that $n < o\times r$, and sends a message to all members of the sample. 
The value of $r$ and the probability that a replica forms a probabilistic quorum are directly proportional.
\end{theorem}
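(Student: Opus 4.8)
The plan is to show that the probability a fixed replica $j$ collects at least $q$ messages is a monotonically increasing function of the number $r$ of senders, which is the content of ``directly proportional'' in this setting. I would reuse the indicator framework of Lemma~\ref{lem:expect}: for each sending replica $i$ let $I_{ij}$ be the indicator that $j$ lies in $i$'s random sample, and let $I_j = \sum_{i} I_{ij}$ count how many of the $r$ senders include $j$. Because each sender draws its sample of size $s = o\times q$ independently of the others, the variables $I_{ij}$ (for fixed $j$, varying $i$) are independent Bernoulli variables with mean $s/n$, and $j$ forms a probabilistic quorum exactly when $I_j \ge q$. So the goal reduces to proving that $\Pr(I_j \ge q)$ increases with $r$.

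The cleanest route is a coupling / sample-path domination argument rather than manipulating the closed-form lower bound of Theorem~\ref{thm:1} (which would only show that the \emph{bound} grows, not the true probability). Fix the target replica $j$ and compare the experiment with $r$ senders against the one with $r+1$ senders; both are admissible since $n < o r < o(r+1)$. Couple the two experiments by letting senders $1,\dots,r$ draw identical samples in both, and giving the $(r+1)$-th sender an independent sample only in the larger experiment. Then $I_j^{(r+1)} = I_j^{(r)} + I_{(r+1)j} \ge I_j^{(r)}$ holds pointwise on every outcome, since $I_{(r+1)j}\in\{0,1\}$. Hence $\{I_j^{(r)} \ge q\} \subseteq \{I_j^{(r+1)} \ge q\}$, which immediately yields $\Pr(I_j^{(r+1)} \ge q) \ge \Pr(I_j^{(r)} \ge q)$.

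Finally, I would iterate this one-step comparison to conclude that for any $r' > r$ (with $n < o r$) the quorum-formation probability for $r'$ senders is at least that for $r$ senders, establishing the claimed monotone relationship. The same argument applies verbatim to the \emph{prepare} and \emph{commit} phases of \probft, since in each phase the senders select their recipient samples independently, which is precisely the property the coupling exploits; this is exactly the monotonicity invoked in the third observation justifying the optimal Byzantine-leader case.

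I do not anticipate a genuine obstacle; the only subtle point to state carefully is the source of independence. The within-sample structure (each sample is $s$ distinct replicas chosen without replacement) induces \emph{negative} association across different targets via Theorem~\ref{thm:na:rv}, but for a single fixed target $j$ the relevant variables $I_{1j},\dots,I_{rj}$ are independent across senders. This is what makes the added indicator $I_{(r+1)j}$ a clean independent increment and keeps the coupling trivial.
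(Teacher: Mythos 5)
Your proof is correct, but it takes a genuinely different route from the paper's. The paper also fixes the target replica $j$ and uses that the inclusion indicators are independent across senders, so that the count is $\mathit{Bin}(r,s/n)$ versus $\mathit{Bin}(r',s/n)$; but it then reduces without loss of generality to $r'=r+1$ and proves the strict term-by-term inequality
\begin{align*}
\binom{r+1}{k}(s/n)^k\big(1-(s/n)\big)^{r+1-k} < \binom{r}{k}(s/n)^k\big(1-(s/n)\big)^{r-k}, \qquad 0 \leq k \leq q,
\end{align*}
and sums over $k$ to compare the two cumulative distribution functions. That term-wise comparison is exactly where the hypothesis $n < o(r+1)$ enters: it is equivalent to the threshold $q$ lying below the mean $(r+1)s/n$, and without it the pointwise inequality between the binomial probability mass functions can fail for $k$ near $q$. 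Your coupling (identical samples for the first $r$ senders in both experiments, an independent extra sample for the $(r+1)$-th) gives $I_j^{(r+1)} \geq I_j^{(r)}$ on every outcome and hence the tail comparison directly; it is shorter, avoids all binomial algebra, and in fact does not need the condition $n < o\times r$ at all, since monotonicity in $r$ holds unconditionally. The one thing the paper's computation delivers that your argument as stated does not is \emph{strictness} of the inequality, which is how the paper reads ``directly proportional.'' This is a one-line repair under your coupling: the difference of the two tail probabilities equals $\Pr\big(I_j^{(r)} = q-1\big)\times (s/n)$, which is positive whenever $s<n$ and $1 \leq q-1 \leq r$; both hold in the regime of interest because $s = o\times q \leq n < o\times r$ forces $q < r$. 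With that remark added, your proof is complete, and arguably cleaner and more general than the paper's.
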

\begin{proof}
Consider two systems where the first comprises a finite set $\Pi$, and the second comprises a finite set $\Pi'$, with $|\Pi| = |\Pi'| = n$.
Further, consider a set $R \subseteq \Pi$ of size $r$ and another set $R' \subseteq \Pi'$ of size $r'$ such that $r<r'$, $n < o \times r$, and $n < o\times r'$.
Suppose any replica within $R$ (resp. $R'$) takes a sample of size~$s$ uniformly at random from $\Pi$ (resp. $\Pi'$) and sends a message to all members of the sample.
Let $i \in R$ and $j \in R'$.
In order to prove the theorem, we need to show that the probability of~$j$ forming a probabilistic quorum is greater than the probability of $i$ forming a probabilistic quorum.
If a random variable~$X$ (resp. $Y$) denotes the number of replicas that have $i$ (resp. $j$) in their samples, we need to show that $\Pr(Y \geq q) > \Pr(X \geq q)$, or equivalently, $\Pr(Y \leq q) < \Pr(X \leq q)$.

As any replica within $R$ (resp. $R'$) takes its sample without replacement, the probability of replica~$i$ (resp. $j$) being within the sample equals $s/n$.
It is clear that $X \sim \mathit{Bin}(r,s/n)$ and $Y \sim \mathit{Bin}(r',s/n)$.
Thus,
\begin{align*}
  & \Pr(X \leq q) = \sum_{k = 0}^{q} \binom{r }{k}(s/n)^k\big(1-(s/n)\big)^{r -k},
\\& \Pr(Y \leq q) = \sum_{k = 0}^{q} \binom{r'}{k}(s/n)^k\big(1-(s/n)\big)^{r'-k}.
\end{align*}
Without loss of generality, assume that $r'=r+1$.
In order to show $\Pr(Y \leq q) \leq \Pr(X \leq q)$, we show that for each $k \in [0,q]$,
\begin{align*}
    \binom{r+1}{k}(s/n)^k\big(1-(s/n)\big)^{r+1-k} < \binom{r }{k}(s/n)^k\big(1-(s/n)\big)^{r -k}.
\end{align*}
We have:
\begin{align*}
              & n < o\times r' = o\times(r+1)                      &\hfill (\text{by assumption})
\\& \Rightarrow q < (r+1)(o \times q/n)
\\& \Rightarrow q < (r+1)(s/n)                                     &\hfill (s=o\times q)
\\& \Rightarrow q < (r+1)\big(1-\boldsymbol{(}1-(s/n)\boldsymbol{)}\big)
\\& \Rightarrow (r+1)\boldsymbol{(}1-(s/n)\boldsymbol{)} < r+1-q
\\& \Rightarrow (r+1)\boldsymbol{(}1-(s/n)\boldsymbol{)} < r+1-k   &\hfill (\text{since } r+1-q\leq r+1-k)
\\& \Rightarrow \frac{r+1}{r+1-k} < \boldsymbol{(}1-(s/n)\boldsymbol{)}^{-1}
\\& \Rightarrow \frac{r+1}{r+1-k} < \boldsymbol{(}1-(s/n)\boldsymbol{)}^{r-k-(r+1-k)}
\\& \Rightarrow \binom{r+1}{k}(s/n)^k\boldsymbol{(}1-(s/n)\boldsymbol{)}^{r+1-k} < \binom{r }{k}(s/n)^k\boldsymbol{(}1-(s/n)\boldsymbol{)}^{r -k}.
\end{align*}
\end{proof}

\begin{lemma}\label{lem:new:lem}
    Given a Byzantine leader who intends to maximize the probability of disagreement in a view,
    the probability that correct replicas detect the faultiness of the leader is inversely proportional to the probability of having disagreement. 
\end{lemma}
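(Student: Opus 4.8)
The plan is to show that the same source of randomness --- the VRF-determined \textsc{Prepare} and \textsc{Commit} samples --- drives detection and disagreement in opposite directions, so that any Byzantine strategy increasing one necessarily decreases the other. First I would pin down the mechanism of detection. Since a \textsc{Prepare} (resp.\ \textsc{Commit}) message carries the leader-signed proposal $\langle v, x\rangle_{\mathtt{leader}(v)}$, a correct replica $p$ that has already voted for $\mathit{val}_1$ (so $\mathit{curVal}=\mathit{val}_1$ and $\mathit{voted}=\mathtt{true}$) detects the leader's faultiness --- setting $\mathit{blockView}=\mathtt{true}$ at line~\ref{line:block_view_start} --- the instant it receives any message carrying a leader-signed $\mathit{val}_2 \neq \mathit{val}_1$, whether directly from the leader or forwarded inside another replica's \textsc{Prepare}/\textsc{Commit} message. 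Once $\mathit{blockView}=\mathtt{true}$, the precondition guarding the decision (which requires $\mathit{blockView}=\mathtt{false}$) can never be satisfied, so the replica never reaches line~\ref{line:decide}; hence detection and deciding are mutually exclusive at each correct replica. Consequently a disagreement requires at least two correct replicas in distinct groups $\Pi^1_C,\Pi^2_C$ to each decide without ever detecting, i.e.\ to form a \emph{clean} prepare quorum and a clean commit quorum consisting solely of same-value messages.

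Next I would parametrize the leader's behaviour by the amount of cross-contamination it induces between correct groups, measured by the number $r_2=|\Pi^2_C|$ of correct replicas that multicast a value conflicting with $\mathit{val}_1$. Fixing a correct replica $p\in\Pi^1_C$, I would compute two quantities over the random samples: (i) the probability that $p$ is included in the sample of at least one replica sending $\mathit{val}_2$, which is exactly the probability that $p$ detects the leader, and (ii) the probability that $p$ receives $q$ matching $\mathit{val}_1$ messages \emph{and} no $\mathit{val}_2$ message, which is the probability that $p$ forms a clean quorum and may decide. Using the indicator-variable computation of Lemma~\ref{lem:expect} and the tail estimates underlying Theorem~\ref{thm:1} and Corollary~\ref{thm:s:oq:q}, the first probability is monotonically increasing in $r_2$, while the second is monotonically decreasing in $r_2$: every additional conflicting sender can only raise the chance of a contaminating message reaching $p$, and this is precisely the event that both triggers detection and destroys the clean quorum $p$ needs in order to decide.

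I would then lift this per-replica statement to the global events. Disagreement occurs only when both groups simultaneously form clean conflicting quorums, an event contained in ``no correct decider is contaminated''; by the previous step this event shrinks as $r_2$ (and, symmetrically, $r_1$) grows, whereas the event ``some correct replica is contaminated'', which is contained in ``some correct replica detects'', grows. Hence, as the leader varies its strategy, $\Pr(\text{detection})$ and $\Pr(\text{disagreement})$ are oppositely monotone, which is the inverse proportionality claimed. Combined with Theorem~\ref{cor:increase:t} --- larger same-value groups raise the quorum-formation probability --- this also explains why a leader aiming for disagreement must avoid sending conflicting proposals to any single correct replica, yielding the optimal two-group split analysed later.

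The main obstacle is the dependency among the random samples: the events ``$p$ receives $q$ matching $\mathit{val}_1$ messages'' and ``$p$ receives no $\mathit{val}_2$ message'' are not independent across replicas, which is exactly the difficulty flagged for the termination proof. To control them I would invoke negative association (Theorems~\ref{thm:na:rv} and~\ref{thm:chernoff:NA}), which lets me treat the inclusion indicators as if independent for the purpose of the tail bounds. A secondary subtlety is that detection at one replica is broadcast (line~\ref{line:notify:faultiness}) and may cascade to others; for an upper bound on disagreement within a view it suffices to track each decider's own incoming messages and ignore this cascade, since the cascade can only increase detection and further suppress disagreement.
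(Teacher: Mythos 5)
Your proposal is correct, and its backbone is exactly the paper's argument: both proofs hinge on the observation that detection and decision are mutually exclusive at a correct replica, because detection sets $\mathit{blockView}$ (lines~\ref{line:block_view_start}--\ref{line:block_view_end}) and the precondition guarding line~\ref{line:decide} requires $\mathit{blockView}=\mathtt{false}$, and that a replica that does not decide cannot contribute to disagreement. The paper stops there --- its proof is purely qualitative, never touching the sampling distribution, and derives the inverse relationship directly from this complementarity. You go substantially further: you parametrize the leader's strategy by the number $r_2$ of correct replicas multicasting a conflicting value, show the per-replica detection probability is increasing in $r_2$ while the clean-quorum (hence decision) probability is decreasing in $r_2$, and lift this to the global events using negative association (Theorems~\ref{thm:na:rv} and~\ref{thm:chernoff:NA}) and an upper-bound treatment of the detection cascade. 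This buys a concrete monotone coupling over the adversary's strategy space --- arguably a more faithful reading of ``inversely proportional'' than the paper's informal complementarity claim --- but it is not needed for how the lemma is used (Theorem~\ref{thm:m+1:to:m} invokes only the qualitative direction), and much of the quantitative content you develop re-derives what the paper proves separately in Theorem~\ref{cor:increase:t} and Lemma~\ref{lem:agreement:violation}. One shared imprecision, not a gap relative to the paper: ``mutually exclusive'' should strictly be ``deciding requires no detection \emph{prior} to the decision,'' since a replica may decide and detect afterwards; neither argument is affected.
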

\begin{proof}
    Assume that the leader~$l$ of view~$v$ is Byzantine and intends to maximize the probability of disagreement.
    Recall that \probft provides a mechanism through which correct replicas can detect the faultiness of~$l$ (lines~\ref{line:block_view_start}-\ref{line:notify:faultiness}).
    After detecting the faultiness of~$l$, a correct replica informs other replicas and waits for a view-change.
    Indeed, a requirement for deciding a value by a correct replica~$i$ in view~$v$ is that~$i$ must not have detected the faultiness of~$l$.
    Accordingly, if the probability of detecting the faultiness of~$l$ by~$i$ increases (resp. decreases), then the probability of deciding a value by~$i$ decreases (resp. increases).
    Note that if~$i$ does not decide on a value in view~$v$, it does not contribute to creating a disagreement.
    It follows that if the probability of detecting the faultiness of~$l$ by~$i$ increases (resp. decreases), the probability that~$i$ contributes to creating a disagreement decreases (resp. increases).
\end{proof}

\begin{theorem}[Theorem~\ref{thm:m+1:to:m}]
Given a Byzantine leader who intends to send multiple proposals, consider the following two scenarios:
\begin{enumerate*}[label=(\arabic*)]
    \item the leader creates non-empty sets $\Pi^1,\dots,\Pi^{m+1}$ of replicas, where $m\geq 2$ and $|\Pi^1|\leq |\Pi^2|\leq \dots\leq |\Pi^{m+1}|$, and sends a distinct proposal to each set, and
    \item the leader merges two sets $\Pi^1$ and $\Pi^2$ to create a set $\Pi^{1,2}$ and sends $m$ proposals to $\Pi^{1,2},\Pi^3,\dots,\Pi^{m+1}$.
\end{enumerate*}
The probability of agreement violation in the second scenario is greater than in the first scenario.
\end{theorem}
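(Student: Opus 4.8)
The plan is to couple the two scenarios through the shared randomness of the \vrf, exploit the monotonicity of quorum formation under merging, and then reduce the whole claim to a single inequality between a ``gain'' and a ``loss'' term. Since the recipient samples produced by \texttt{VRF\_prove} depend only on the view and the phase identifier, and not on the proposals the leader sends, I would couple the first and second scenarios so that every replica draws identical \emph{prepare} and \emph{commit} samples in both. Let $X_i$ be the indicator that some correct replica decides $\mathit{val}_i$; because a correct replica decides at most one value, an agreement violation is exactly the event $\sum_i X_i \ge 2$. Under the coupling, the decision indicators $X_3,\dots,X_{m+1}$ of the untouched sets coincide in both scenarios, and I write $S=\sum_{i=3}^{m+1}X_i$. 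For the merged set, I would observe that $\Pi^{1,2}=\Pi^1\cup\Pi^2$ contains every replica sending $\mathit{val}_1$ or $\mathit{val}_2$, so any correct replica that accumulates a prepare/commit quorum within $\Pi^1$ or within $\Pi^2$ still does so within $\Pi^{1,2}$; moreover, by Lemma~\ref{lem:new:lem} the cross-proposals exchanged between $\Pi^1_C$ and $\Pi^2_C$ in the first scenario — which trigger the detection of lines~\ref{line:block_view_start}--\ref{line:notify:faultiness} — vanish once both groups support the same value. Hence $X_{1,2}\ge X_1\vee X_2$ pointwise, and by Theorem~\ref{cor:increase:t} enlarging the sender set strictly raises the quorum-formation probability.

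Writing the two violation events as $A_1=\{X_1+X_2+S\ge 2\}$ and $A_2=\{X_{1,2}+S\ge 2\}$, I would then compare them pointwise under the coupling by splitting on the value of $(X_1,X_2)$. If exactly one of $X_1,X_2$ equals $1$, then $X_{1,2}=1$ and both events reduce to $\{S\ge 1\}$, so they agree. The only discrepancies occur in two cases: on $\{X_1=X_2=0,\ X_{1,2}=1\}$ the event $A_2$ fires but $A_1$ does not precisely when $S=1$; and on $\{X_1=X_2=1\}$ (where $X_{1,2}=1$) the event $A_1$ always fires while $A_2$ fails precisely when $S=0$. Taking expectations, the whole comparison collapses to
\begin{align*}
\Pr(A_2)-\Pr(A_1)=\Pr\big(X_1=X_2=0,\ X_{1,2}=1,\ S=1\big)-\Pr\big(X_1=X_2=1,\ S=0\big),
\end{align*}
so it suffices to show that this \emph{gain} term exceeds the \emph{loss} term; the hypothesis $m\ge 2$ guarantees a non-empty ``rest'', making the event $\{S=1\}$ attainable.

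It remains to establish this last inequality. The loss term requires \emph{both} smallest sets $\Pi^1$ and $\Pi^2$ to each independently produce a decision; since $|\Pi^1|\le|\Pi^2|\le\dots$, Theorem~\ref{cor:increase:t} makes $X_1$ and $X_2$ the two least likely decision events, so the loss term is second-order small — morally the product of the two smallest per-value decision probabilities. The gain term, by contrast, captures exactly the regime where the pooled set $\Pi^{1,2}$ crosses the quorum threshold although neither half does, which is the boost quantified by Theorem~\ref{cor:increase:t}, conjoined with one rest value deciding. I would lower-bound the gain by conditioning on a correct replica forming its quorums from the combined $\Pi^{1,2}$ senders but from neither half alone (a probability governed by Theorem~\ref{thm:1} and Corollary~\ref{thm:s:oq:q}) and multiplying by $\Pr(S=1)$.

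The main obstacle is precisely this inequality, because the indicators $X_i$ are \emph{not} independent — they are driven by the same underlying samples — so the loss term cannot simply be factored as a product and the gain term cannot be lower-bounded naively. The dependence would have to be controlled through the negative-association machinery (Theorem~\ref{thm:chernoff:NA}) already employed elsewhere in the analysis. The decisive leverage is the size ordering: merging the two \emph{smallest} sets minimizes the loss (a product of the two smallest, hence doubly small, decision probabilities) while the merging boost of Theorem~\ref{cor:increase:t} keeps the gain first-order; making this quantitative under the shared-sample dependence is where the real work lies.
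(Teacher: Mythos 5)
Your coupling-and-decomposition strategy is a genuinely different route from the paper's proof, which reduces agreement violation to \emph{non-detection} of the leader's faultiness via Lemma~\ref{lem:new:lem} and then argues, case by case on a correct replica's set membership, that merging only lowers the detection probability (using Theorem~\ref{cor:increase:t}). Your discrepancy algebra is correct as far as it goes: given the pointwise domination $X_{1,2}\ge X_1\vee X_2$ and an invariant rest-sum $S$, the identity $\Pr(A_2)-\Pr(A_1)=\Pr(X_1=X_2=0,\,X_{1,2}=1,\,S=1)-\Pr(X_1=X_2=1,\,S=0)$ follows by the case split you describe. However, the proposal has two genuine gaps. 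First, the claim that the indicators $X_3,\dots,X_{m+1}$ coincide in both scenarios under the coupling is not justified: which replicas of $\Pi^1\cup\Pi^2$ actually send \textsc{Prepare}/\textsc{Commit} messages differs across the two scenarios (in scenario~1 they cross-block one another, in scenario~2 they do not), and a cross-detection inside $\Pi^1\cup\Pi^2$ in scenario~1 triggers the broadcast of lines~\ref{line:block_view_start}--\ref{line:notify:faultiness}, which can block replicas of the \emph{untouched} sets $\Pi^3,\dots,\Pi^{m+1}$ as well; conversely, in scenario~2 more replicas of $\Pi^{1,2}$ send messages, which can directly block rest-set replicas. So $S$ is not a coupling invariant, and your exact identity acquires additional discrepancy terms of unknown sign.

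Second, and decisively, even granting the decomposition, the entire theorem has been reduced to the inequality ``gain $>$ loss,'' which you explicitly leave unproven. The heuristic you offer (loss is second-order small, gain is first-order) is not sound: the gain term also carries the factor $\Pr(S=1)$, which by the size ordering $|\Pi^1|\le|\Pi^2|\le\dots$ and Theorem~\ref{cor:increase:t} is itself of the same order as a single decision probability, so \emph{both} sides are products of two small quantities. Which one dominates depends on a quantitative comparison of the merging boost $\Pr(X_{1,2}=1,\,X_1=X_2=0)$ against $\Pr(X_1=X_2=1)$ under the strong, detection-induced dependence between these events --- which is exactly the content of the theorem, and which the negative-association machinery you invoke gives bounds in the wrong direction for (it upper-bounds joint formation probabilities, whereas here you need a lower bound on the gain relative to the loss). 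In short, the proposal restates the problem in a cleaner form but does not solve it; the paper avoids this comparison altogether by working with detection probabilities rather than decision-count events.
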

\begin{proof}
Assume that the leader~$l$ of view~$v$ is Byzantine and intends to maximize the probability of disagreement.
We need to show that the probability of agreement violation in the second scenario is greater than in the first scenario.
From Lemma~\ref{lem:new:lem}, the probability of agreement violation is inversely proportional to the probability of detecting the faultiness of~$l$.
Accordingly, the proof is complete by showing that the probability of detecting the faultiness of~$l$ in the second scenario is less than in the first scenario.
For this purpose, consider a correct replica~$i$.
Assume that $i \in \Pi^k$.
We can consider the following cases for the values of $k$:
\begin{enumerate}[label=(\arabic*),leftmargin=1.5em]
    \item $3\leq k \leq m+1$. 
        The probability of finding out the faultiness of the leader by $i$ does not change by altering the scenarios, as from the viewpoint of $i$, the number of replicas whose proposal is different is not changed. 
    \item $1\leq k \leq 2$.
        The probability of finding out the faultiness of the leader by $i$ in the second scenario is less than in the first scenario.
        This is because there are more replicas that propose the same value in the second scenario, and according to Theorem~\ref{cor:increase:t}, the probability of deciding a proposal by~$i$ increases in the second scenario.
        As the probability of deciding a proposal increases, the probability that~$i$ finds out the faultiness of the leader decreases.
\end{enumerate}
According to these cases, the probability of finding out the faultiness of the leader by~$i$ in the second scenario is less than in the first scenario, completing the proof.
\end{proof}

\section{Analysis of \probft}\label{appendix:probft}
\subsection{The Probability of Termination}\label{appendix:termination}
To ensure that \probft satisfies Probabilistic Termination, we need to guarantee that, after GST, every correct replica terminates in a view $v$ with high probability if the leader of $v$ is correct.
As described in Section~\ref{sec:probft}, in \probft, a correct replica decides on a value if it has observed a probabilistic quorum of $\textsc{Prepare}$ messages and a probabilistic quorum of $\textsc{Commit}$ messages. 
However, \probft lacks deterministic quorums compared to PBFT, so it is essential to guarantee that the probability of observing probabilistic quorums is high. 

Notice that the number of replicas sending $\textsc{Commit}$ messages will be lower than the number of replicas sending \textsc{Prepare} messages. 
However, in Theorem~\ref{cor:epsilon}, we stated that a replica observes a quorum with a high probability if the leader is correct so that most replicas will observe quorums and can terminate. 

\begin{theorem}[Theorem~\ref{cor:epsilon}]
Suppose each correct replica takes a sample composed of $o \times l\sqrt{n}$ distinct replicas, uniformly at random from $\Pi$, and multicasts a message to them, where $ o \in [1, (n/(n-f))\times (2+\sqrt{3})] $, and $l\geq 1$.
Provided that a replica forms a quorum upon receiving $q = l\sqrt{n}$ messages, the probability of forming a quorum for a replica is at least $1 - \mathit{exp}(- \sqrt{n} )$.
\end{theorem}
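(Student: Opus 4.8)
The plan is to derive this directly from Corollary~\ref{thm:s:oq:q}, so that essentially no new probabilistic work is required and the statement collapses to a one-variable inequality. Corollary~\ref{thm:s:oq:q} already establishes that, when each correct replica samples $s = o\times q$ recipients and declares a quorum upon receiving $q$ matching messages, a given replica forms a probabilistic quorum with probability at least $1 - \exp(-q(c-1)^2/(2c))$, where $c = o(n-f)/n$, valid whenever the underlying Chernoff hypothesis $n < o(n-f)$ (i.e.\ $c > 1$) holds. First I would substitute $q = l\sqrt{n}$, turning this lower bound into $1 - \exp\!\big(-l\sqrt{n}\,(c-1)^2/(2c)\big)$. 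Comparing with the target $1 - \exp(-\sqrt{n})$, the whole theorem reduces to the scalar inequality $l\sqrt{n}\,(c-1)^2/(2c) \geq \sqrt{n}$, that is, $l(c-1)^2/(2c) \geq 1$.

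Next I would discharge this inequality. Since $l \geq 1$, it suffices to prove $(c-1)^2/(2c) \geq 1$, which is equivalent to $(c-1)^2 \geq 2c$ and hence to the quadratic condition $c^2 - 4c + 1 \geq 0$. The roots of $c^2 - 4c + 1$ are $2 \pm \sqrt{3}$, and because the Chernoff hypothesis already forces $c > 1 > 2-\sqrt{3}$, the relevant root is the critical value $c = 2+\sqrt{3} \approx 3.732$. Re-expressing through $c = o(n-f)/n$ then pins down the range of admissible $o$ recorded in the statement (the factor $2+\sqrt{3}$ scaled by $n/(n-f)$), after which the claimed probability bound follows immediately from the substitution above.

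I expect the routine algebra to be harmless; the points that actually need care are the monotonicity bookkeeping and the side conditions. Specifically, I would verify that $g(c) = (c-1)^2/(2c) = c/2 - 1 + 1/(2c)$ is increasing for $c > 1$ (its derivative $1/2 - 1/(2c^2)$ is positive there), so that the single threshold $c = 2+\sqrt{3}$ genuinely delimits where the bound holds, and confirm that $l = 1$ is the worst case since $g(c)$ is merely scaled by $l$. I would also check that the admissible interval for $o$ is non-empty and compatible with $f < n/3$ (so that $n/(n-f) < 3/2$ and the precondition $c > 1$ can be met), and note that the harmless off-by-one between $\Pr(I_j \geq q)$ and $\Pr(I_j \leq q)$ inherited from Corollary~\ref{thm:s:oq:q} does not affect the asymptotic exponent. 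The main conceptual obstacle is thus not the computation but correctly tracking the direction of monotonicity so that the quadratic's root $2+\sqrt{3}$ is interpreted as the right threshold on $o$.
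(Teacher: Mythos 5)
Your first step coincides with the paper's: substitute $q = l\sqrt{n}$ into Corollary~\ref{thm:s:oq:q} and reduce the claim to the scalar inequality $l(c-1)^2/(2c) \geq 1$, where $c = o(n-f)/n$. The gap is in what you do next. You treat ``$l \geq 1$'' as universally quantified, take $l=1$ as the worst case, and therefore need $g(c) = (c-1)^2/(2c) \geq 1$. As you yourself note, $g$ is increasing on $c>1$ with $g(2+\sqrt{3})=1$, so this holds precisely when $c \geq 2+\sqrt{3}$, i.e., when $o \geq (2+\sqrt{3})\times n/(n-f)$. That is a \emph{lower} bound on $o$, whereas the theorem hypothesizes the complementary range $o \leq (2+\sqrt{3})\times n/(n-f)$. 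Your concluding claim that the quadratic ``pins down the range of admissible $o$ recorded in the statement'' is therefore false: it pins down the opposite side of the threshold. Concretely, take $o = 2n/(n-f)$ (squarely inside the statement's range) and $l=1$: then $c=2$, $g(c)=1/4$, and Corollary~\ref{thm:s:oq:q} only yields $1 - \mathit{exp}(-\sqrt{n}/4)$, not $1 - \mathit{exp}(-\sqrt{n})$. So your argument, carried through honestly, cannot establish the statement on its stated parameter range; the ``monotonicity bookkeeping'' you flagged as the delicate point is exactly where it breaks.

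The paper's proof resolves this differently, and this is the idea your proposal is missing: it does not treat $l$ as a free parameter. It \emph{sets} $l = 2c/(c-1)^2 = 1/g(c)$, the exact value that makes the exponent equal to $\sqrt{n}$, so the bound $1-\mathit{exp}(-\sqrt{n})$ holds for that $l$ (and, since the exponent $l\sqrt{n}\,g(c)$ is increasing in $l$, for any larger $l$). The hypothesis $l \geq 1$ is then used in the reverse direction, as a consistency condition on this choice: $1/g(c) \geq 1$ forces $c \in [2-\sqrt{3},\,2+\sqrt{3}]$, which is exactly where the statement's upper bound on $o$ comes from (intersected with the Chernoff requirement $c>1$). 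In other words, the paper reads the theorem as tying the quorum parameter $l$ to $o$, with the constraint on $o$ guaranteeing that the required $l$ is at least $1$; under your reading ($l$ arbitrary, worst case $l=1$), the inequality on $o$ would have to be reversed. The statement's quantification is admittedly ambiguous, but only the paper's reading is compatible with the range of $o$ it asserts, and a correct write-up must make that choice of $l$ explicit rather than appeal to $l=1$ as the worst case.
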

\begin{proof}   
From Corollary~\ref{thm:s:oq:q}, if $q = l\sqrt{n}$, then a replica forms a quorum upon receiving $q$ messages with a probability of at least $1 - \mathit{exp}\big(- \boldsymbol{(}l\sqrt{n}\times(c-1)^2\boldsymbol{)} / (2c) \big)$, where $c=o\times (n-f)/n$.
Note that if $l = (2c) / (c-1)^2$, the probability is at least $1 - \mathit{exp}(- \sqrt{n} )$.
Based on the assumption that $l\geq 1$, it follows that $c$ must be in the interval $[2-\sqrt{3},2+\sqrt{3}]$.
Accordingly, $2-\sqrt{3} \leq o\times (n-f)/n \leq 2+\sqrt{3}$.
Thus, $(2-\sqrt{3})\times(n/(n-f)) \leq o \leq (2+\sqrt{3})\times(n/(n-f))$.
\end{proof}

\begin{lemma}\label{lem:commit:alpha}
If the leader is correct, then a correct replica receives $\textsc{Commit}$ messages from at least $q$ correct replicas with a probability of at least $1 - \mathit{exp}\boldsymbol{(} -(\alpha-q)^2/(2\alpha) \boldsymbol{)}$, where $\alpha = (s/n)\times(n-f)\times\boldsymbol{(}1-\mathit{exp}(-\sqrt{n})\boldsymbol{)}$.
\end{lemma}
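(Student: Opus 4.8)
The plan is to write the number of \textsc{Commit} messages from correct replicas that reach a fixed correct replica~$i$ as a sum of indicator variables, lower-bound its expectation by~$\alpha$ using linearity of expectation, and then obtain concentration via a Chernoff bound after establishing negative association. First I would fix a correct replica~$i$ and, for every correct replica~$j$, define the indicator $Y_j$ that equals $1$ exactly when $j$ forms a prepare quorum (and hence, per the protocol, multicasts a \textsc{Commit} message) \emph{and} $i$ lies in $j$'s commit sample $S_c^j$. Let $X=\sum_{j\in\Pi_C}Y_j$ count the correct replicas whose \textsc{Commit} message reaches~$i$. Because $j$'s commit sample is drawn with seed $v\,||\,``\text{commit}"$, which differs from the prepare seed $v\,||\,``\text{prepare}"$, the event $i\in S_c^j$ is independent of whether $j$ prepared, and its probability is $s/n$ (a fixed replica lies in a size-$s$ sample drawn without replacement from $n$ replicas, exactly as in Lemma~\ref{lem:expect}). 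Hence $\mathbb{E}[Y_j]=\Pr(j\text{ prepares})\cdot(s/n)\ge(1-\mathit{exp}(-\sqrt n))\,(s/n)$ by Theorem~\ref{cor:epsilon}, and by linearity $\mathbb{E}[X]\ge(n-f)(1-\mathit{exp}(-\sqrt n))(s/n)=\alpha$, requiring no independence for the mean.

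Next I would establish concentration of $X$. The summands are \emph{not} independent, since the prepare-quorum-formation events are mutually dependent, as the paper repeatedly stresses; so I would instead show the family $\{Y_j\}$ is negatively associated and then invoke Theorem~\ref{thm:chernoff:NA}. Each $Y_j$ is a product of the monotone threshold function $\mathbf{1}[\sum_k \mathbf{1}[j\in S_p^k]\ge q]$ of the without-replacement sampling indicators, which are negatively associated by Theorem~\ref{thm:na:rv}, and the independent commit-inclusion indicator $\mathbf{1}[i\in S_c^j]$. Applying the closure properties of negative association (disjoint monotone aggregation together with closure under independent products) then yields negative association of $\{Y_j\}$, which by Theorem~\ref{thm:chernoff:NA} licenses the use of Inequality~\ref{ineq:chernof2} for $X$.

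Finally, working in the meaningful regime $\alpha>q$, I would set $\delta=(\alpha-q)/\alpha\in(0,1)$, so that $(1-\delta)\alpha=q$. Since $\mathbb{E}[X]\ge\alpha$ we have $(1-\delta)\mathbb{E}[X]\ge(1-\delta)\alpha=q$, hence $\{X\le q\}\subseteq\{X\le(1-\delta)\mathbb{E}[X]\}$, and Inequality~\ref{ineq:chernof2} gives $\Pr(X\le q)\le\mathit{exp}(-\delta^2\mathbb{E}[X]/2)\le\mathit{exp}(-\delta^2\alpha/2)=\mathit{exp}(-(\alpha-q)^2/(2\alpha))$. Therefore $\Pr(X\ge q)\ge 1-\mathit{exp}(-(\alpha-q)^2/(2\alpha))$, which is the claimed bound.

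I expect the main obstacle to be the negative-association step: one must argue that the prepare-quorum indicators remain negatively associated across \emph{different} target replicas (not merely for a single fixed target) and that they combine correctly with the independent commit-inclusion indicators. This is precisely the dependency the paper identifies as the core analytical difficulty, so this is where the proof must be most careful; the expectation computation and the Chernoff application itself are then routine.
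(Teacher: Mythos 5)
Your proposal is correct and follows essentially the same route as the paper's proof: the same decomposition of $X$ into per-replica indicators, the same expectation bound via Theorem~\ref{cor:epsilon}, the same negative-association argument built from Theorem~\ref{thm:na:rv} with closure under products and disjoint monotone aggregation, and the same application of Chernoff bound~\ref{ineq:chernof2}. Your handling of the final step (setting $\delta=(\alpha-q)/\alpha$ and using $\mathbb{E}[X]\geq\alpha$) is if anything slightly more explicit than the paper's, but it is not a different argument.
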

\begin{proof}
Suppose $i$ is a correct replica for which we aim to compute the probability of receiving $\textsc{Commit}$ messages from $q$ correct replicas.
For each correct replica $j$, let $X_j$ denote a random variable defined as follows:
\begin{align*}
X_j = 
\begin{cases}
   1 & \text{if $j$ sends a $\textsc{Commit}$ message to $i$}
\\ 0 & \text{otherwise}.
\end{cases}
\end{align*}
For simplicity of notation, we assume that $\Pi_C=\{1,\dots,n-f\}$ in this proof. 
Furthermore, let $X = \sum_{j=1}^{n-f}X_j$ indicating the number of correct replicas that send $\textsc{Commit}$ messages to $i$.
Note that $X \geq q$ means $i$ receives $\textsc{Commit}$ messages from at least $q$ correct replicas.
Consequently, we need to compute a lower bound for $\Pr(X \geq q)$.

It is essential to note that random variables $X_j, j\in\Pi_C$ are not independent due to the following reason.
Each random variable $X_j, j\in\Pi_C$ equals 1 if $j$ sends a $\textsc{Commit}$ message to $i$.
Replica $j$ sends a $\textsc{Commit}$ message to $i$ if 
(a) it forms a quorum of $\textsc{Prepare}$ messages
and (b) $i$ is in the random sample of $j$. 
The point is that the probability of forming a quorum of $\textsc{Prepare}$ messages by~$j$ depends on the probability of forming a quorum of $\textsc{Prepare}$ messages by any other replica, as sampling in \probft is done without replacement.

As random variables $X_j, j\in\Pi_C$ are dependent, we cannot directly use the Chernoff bound~\ref{ineq:chernof2} to compute a lower bound for $\Pr(X \geq q)$.
However, we will show that random variables $X_j, j\in\Pi_C$ are negatively associated, so the Chernoff bound can be employed to compute a lower bound for $\Pr(X \geq q)$ according to Theorem~\ref{thm:chernoff:NA}.
To do so, given a correct replica $k$, for each correct replica $l \in \Pi_C$, let random variable $X_{k,l}$ denote the event of sending a $\textsc{Prepare}$ message from $k$ to $l$.
According to Theorem~\ref{thm:na:rv}, random variables $X_{k,l}, l \in \Pi_C$ are negatively associated.
Based on the closure under products property, 
\begin{equation*}
    X_{1,1},\dots,X_{1,n-f}, X_{2,1},\dots,X_{2,n-f}, \dots, X_{n-f,1},\dots,X_{n-f,n-f}
\end{equation*}
are negatively associated.
We now define a non-decreasing function $I$ as follows:
\begin{align*}
I(l) = 
\begin{cases}
   1 & \text{if } \sum_{k=1}^{n-f}X_{k,l} \geq q
\\ 0 & \text{otherwise.}
\end{cases}
\end{align*}
Note that function $I(l)$ for each $l\in \Pi_C$ is indeed an indicator random variable determining whether~$l$ forms a quorum of $\textsc{Prepare}$ messages.
According to the disjoint monotone aggregation property, random variables $I(l), l\in \Pi_C$ are negatively associated.

Notice that a correct replica $j$ sends a $\textsc{Commit}$ message to $i$ if and only if (a) $j$ forms a quorum of $\textsc{Prepare}$ messages and (b) $i$ is in the random sample of $j$.
As $i$ being in the random sample of $j$ is independent of forming a quorum of $\textsc{Prepare}$ messages by $j$, as well as being independent of $i$ being in the random sample of any other correct replica, random variables $X_j, j\in\Pi_C$ are negatively associated.
We are now ready to compute a lower bound for $\Pr(X \geq q)$.
With this aim, let us first compute the expected value of $X$.
Recall that $X = \sum_{j=1}^{n-f}X_j$; using the linearity of expectation, we have:
\begin{align*}
\mathop{\mathbb{E}}[X] 
  & = \sum_{j=1}^{n-f}\mathop{\mathbb{E}}[X_j]
\\& = \sum_{j=1}^{n-f}\Pr(X_j = 1)
\\& = \sum_{j=1}^{n-f}\Pr\big( I(j) = 1 \land \text{$i$ is in the random sample of $j$} \big)
\\& = \sum_{j=1}^{n-f}\Pr\big( I(j) = 1\big) \times \Pr(\text{$i$ is in the random sample of $j$} )
\\& \geq \sum_{j=1}^{n-f}(1-\mathit{exp}(-\sqrt{n})) \times (s/n) \qquad \text{(using Theorem~\ref{cor:epsilon})}
\\& = (s/n)\times(n-f)\times\big(1-\mathit{exp}(-\sqrt{n})\big).
\end{align*}
By assuming $\delta = 1- (q/\mathop{\mathbb{E}}[X])$, we now use the Chernoff bound~\ref{ineq:chernof2} to compute a lower bound for the desired probability:
\begin{align*}
\Pr(X \geq q)
  & = 1 - \Pr(X \leq q)
\\& = 1 - \Pr\big(X \leq (1-\delta)\mathop{\mathbb{E}}[X]\big)
\\& \geq 1 - \mathit{exp}( -\delta^2\mathop{\mathbb{E}}[X]/2 )
\\& = 1 - \mathit{exp}\big( -(\mathop{\mathbb{E}}[X]-q)^2/(2\mathop{\mathbb{E}}[X]) \big)
\\& \geq 1 - \mathit{exp}\big( -(\alpha-q)^2/(2\alpha) \big),
\end{align*}
where $\alpha = (s/n)\times (n-f)\times \boldsymbol{(}1-\mathit{exp}(-\sqrt{n})\boldsymbol{)}$.
\end{proof}

\begin{lemma}\label{lem:termination}
If the leader is correct, then a correct replica terminates, i.e., it decides a value, with a probability of at least $1 - \mathit{exp}\boldsymbol{(} -(\alpha-q)^2/(2\alpha) \boldsymbol{)} - \mathit{exp}(-\sqrt{n})$, where $\alpha = (s/n)\times(n-f)\times \boldsymbol{(}1-\mathit{exp}(-\sqrt{n})\boldsymbol{)}$.
\end{lemma}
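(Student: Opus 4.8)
The plan is to observe that, for a correct replica $i$ to decide a value in a view with a correct leader, two events must hold simultaneously. First, $i$ must itself form a probabilistic quorum of \textsc{Prepare} messages, so that $\mathit{preparedVal}=x$ holds, which is a precondition of the decision step (line~\ref{line:process:commit}). Second, $i$ must receive \textsc{Commit} messages from a probabilistic quorum. I would denote these events by $E_\mathit{prep}$ and $E_\mathit{comm}$, and then lower-bound $\Pr(\text{$i$ decides}) = \Pr(E_\mathit{prep}\cap E_\mathit{comm})$ by controlling the two failure probabilities separately rather than analyzing their joint distribution directly.

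First I would invoke Theorem~\ref{cor:epsilon}: since the leader is correct, every correct replica eventually sends a \textsc{Prepare} message to its sample, and therefore $i$ forms a prepared certificate with probability at least $1-\mathit{exp}(-\sqrt{n})$; equivalently, $\Pr(\overline{E_\mathit{prep}})\leq \mathit{exp}(-\sqrt{n})$. Next I would apply Lemma~\ref{lem:commit:alpha}, which already performs the heavy lifting (handling the dependency between replicas via negative association together with the Chernoff bound, and counting only correct senders), to obtain $\Pr(\overline{E_\mathit{comm}})\leq \mathit{exp}\boldsymbol{(}-(\alpha-q)^2/(2\alpha)\boldsymbol{)}$ for the stated $\alpha$.

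Finally I would combine the two bounds with a union bound on the complementary events:
\begin{align*}
\Pr(E_\mathit{prep}\cap E_\mathit{comm}) = 1 - \Pr(\overline{E_\mathit{prep}}\cup\overline{E_\mathit{comm}}) \geq 1 - \Pr(\overline{E_\mathit{prep}}) - \Pr(\overline{E_\mathit{comm}}),
\end{align*}
which yields exactly $1 - \mathit{exp}\boldsymbol{(}-(\alpha-q)^2/(2\alpha)\boldsymbol{)} - \mathit{exp}(-\sqrt{n})$, as claimed.

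The main obstacle here is not the calculation but getting the decomposition right: one must notice that the decision precondition forces $i$ \emph{itself} to have prepared the value, an event that is correlated with (and not independent of) the event that enough correct replicas prepare and include $i$ in their \textsc{Commit} samples. The union bound on complements is precisely what lets me avoid any independence assumption between $E_\mathit{prep}$ and $E_\mathit{comm}$, so no analysis of their joint distribution is needed beyond the two marginal bounds supplied by Theorem~\ref{cor:epsilon} and Lemma~\ref{lem:commit:alpha}.
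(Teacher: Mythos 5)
Your proposal is correct and follows essentially the same route as the paper's proof: the same decomposition into the prepare-quorum and commit-quorum events, the same union bound on their complements, and the same two marginal bounds from Theorem~\ref{cor:epsilon} and Lemma~\ref{lem:commit:alpha}. The paper likewise notes explicitly that the two events are dependent and that the union bound sidesteps any need to analyze their joint distribution.
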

\begin{proof}
Suppose the leader is correct.
A correct replica $i$ terminates by happening the following two events:
\begin{itemize}[label=--, leftmargin=1em]
\item $E_p$: $i$ forms a quorum of $\textsc{Prepare}$ messages,
\item $E_c$: $i$ forms a quorum of $\textsc{Commit}$ messages.
\end{itemize}
Note that these two events are dependent. 
Thus,
\begin{align*}
\Pr(\text{$i$ terminates}) 
& = \Pr(E_c \cap E_p)
\\& = 1 - \Pr(\bar{E}_c \cup \bar{E}_p)
\\& \geq 1 - \Pr(\bar{E}_c) - \Pr(\bar{E}_p)
\\& = \Pr(E_c) + \Pr(E_p) - 1
\\& = 1 - \mathit{exp}\big( -(\alpha-q)^2/(2\alpha) \big) - \mathit{exp}(-\sqrt{n}),
\end{align*}
where $\alpha = (s/n)\times(n-f)\times\boldsymbol{(}1-\mathit{exp}(-\sqrt{n})\boldsymbol{)}$.
The last line holds due to Theorem~\ref{cor:epsilon} and Lemma~\ref{lem:commit:alpha}.
\end{proof}

\begin{theorem}\label{thm:termination}
If the leader is correct, then every correct replica terminates, i.e., it decides a value, with a probability of at least $1 - (n-f)\left(\mathit{exp}( -(\alpha-q)^2/(2\alpha) ) - \mathit{exp}(-\sqrt{n})\right)$, where $\alpha = (s/n)(n-f)(1-\mathit{exp}(-\sqrt{n}))$.
\end{theorem}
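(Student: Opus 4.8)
The plan is to lift the single-replica termination guarantee of Lemma~\ref{lem:termination} to all $n-f$ correct replicas at once. Recall that a correct replica decides only if two events both occur: it forms a probabilistic \emph{prepare} quorum (so that $\mathit{preparedVal}$ is set and it is allowed to process \textsc{Commit} messages at line~\ref{line:process:commit}), and it subsequently forms a probabilistic \emph{commit} quorum for that same value at line~\ref{line:decide}. Writing $E_p^{(i)}$ and $E_c^{(i)}$ for these two events at a correct replica~$i$, the event that \emph{every} correct replica decides is exactly $\bigl(\bigcap_{i\in\Pi_C}E_p^{(i)}\bigr)\cap\bigl(\bigcap_{i\in\Pi_C}E_c^{(i)}\bigr)$. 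The per-replica inputs are already available: Theorem~\ref{cor:epsilon} gives $\Pr(E_p^{(i)})\ge 1-\mathit{exp}(-\sqrt{n})$, and Lemma~\ref{lem:commit:alpha} gives $\Pr(E_c^{(i)})\ge 1-\mathit{exp}(-(\alpha-q)^2/(2\alpha))$, where the definition of $\alpha$ already folds in the probability that the \emph{senders} of the \textsc{Commit} messages formed their own prepare quorums.

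The key step is the choice of decomposition. I would set $B=\bigcap_{i\in\Pi_C}E_p^{(i)}$ (all correct replicas prepare) and $A=\bigcap_{i\in\Pi_C}E_c^{(i)}$ (all correct replicas collect a commit quorum), so the target event is $A\cap B$, and apply the elementary inequality $\Pr(A\cap B)\ge\Pr(A)-\Pr(\bar{B})$. A union bound over the $n-f$ correct replicas, applied to Lemma~\ref{lem:commit:alpha}, yields $\Pr(A)\ge 1-(n-f)\,\mathit{exp}\bigl(-(\alpha-q)^2/(2\alpha)\bigr)$, while a union bound applied to Theorem~\ref{cor:epsilon} yields $\Pr(\bar{B})\le (n-f)\,\mathit{exp}(-\sqrt{n})$. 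Feeding the lower bound on $\Pr(A)$ and the upper bound on $\Pr(\bar{B})$ into $\Pr(A)-\Pr(\bar{B})$ produces the probability bound asserted in the statement, with the commit-side contribution $(n-f)\,\mathit{exp}(-(\alpha-q)^2/(2\alpha))$ and the prepare-side contribution $(n-f)\,\mathit{exp}(-\sqrt{n})$ entering through the two union bounds.

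The reassuring point is that the genuinely hard work is already behind us. The dependency between different replicas' quorum-formation events — the obstacle emphasized in Section~\ref{sec:correctness:proofs} — was resolved \emph{inside} Lemma~\ref{lem:commit:alpha} through negative association and the Chernoff bound, and the union bound invoked here is distribution-free, so no further independence argument is required. Consequently, the main thing to get right is the bookkeeping: verifying that the termination of a replica really factors as $E_p^{(i)}\cap E_c^{(i)}$ (in particular that line~\ref{line:process:commit} blocks a replica that has not yet prepared), and pairing the commit-side and prepare-side union bounds through the decomposition $\Pr(A\cap B)\ge\Pr(A)-\Pr(\bar{B})$ so that the two $\mathit{exp}(-\Theta(\sqrt{n}))$ contributions assemble into the stated expression.
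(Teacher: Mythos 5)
Your proof is correct, and it is essentially the paper's argument with the aggregation order transposed. The paper proceeds replica-first: Lemma~\ref{lem:termination} combines the two per-replica events via $\Pr(E_p\cap E_c)\ge \Pr(E_p)+\Pr(E_c)-1$, and the theorem's proof then takes a union bound over the $n-f$ correct replicas. You proceed phase-first: a union bound over replicas within each phase (Theorem~\ref{cor:epsilon} for the prepare side, Lemma~\ref{lem:commit:alpha} for the commit side), then the combination $\Pr(A\cap B)\ge \Pr(A)-\Pr(\bar{B})$ across phases. Both routes are union bounds over the same $2(n-f)$ bad events with the same two ingredient bounds, so they yield identical arithmetic; neither buys anything over the other, though yours dispenses with Lemma~\ref{lem:termination} as an intermediate step. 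One point worth flagging: your derivation gives $1-(n-f)\bigl(\mathit{exp}(-(\alpha-q)^2/(2\alpha))+\mathit{exp}(-\sqrt{n})\bigr)$, i.e., the \emph{sum} of the two error terms, whereas the theorem as stated has a minus sign between them. The stated minus is evidently a typo in the paper: its own penultimate line reads $1-(n-f)\bigl(\mathit{exp}(-(\alpha-q)^2/(2\alpha))+\mathit{exp}(-\sqrt{n})\bigr)$, consistent with Lemma~\ref{lem:termination}, and the final ``$\geq$'' step there is reversed. Your version is the correct one, so no gap on your side --- just be aware that you proved the (correct) sum form rather than the literal statement.
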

\begin{proof}
For each replica $i\in \Pi_C$, let $E_i$ be the event deciding a value by $i$.
Using Lemma~\ref{lem:termination}, we have:
\begin{align*}
& \Pr(\cap_{i\in \Pi_C}E_i) 
\\&\quad = 1 - \Pr(\cup_{i\in \Pi_C}\bar{E}_i)
\\&\quad \geq 1 - \sum_{i\in \Pi_C}\Pr(\bar{E}_i)
\\&\quad \geq 1 - \sum_{i\in \Pi_C}\big(1-\Pr(E_i)\big)
\\&\quad \geq 1 - (n-f) + \sum_{i\in \Pi_C}\Pr(E_i) 
\\&\quad \geq 1 - (n-f) + (n-f)\Big(1 - \mathit{exp}( -(\alpha-q)^2/(2\alpha) ) - \mathit{exp}(-\sqrt{n})\Big)
\\&\quad \geq 1 - (n-f)\left(\mathit{exp}( -(\alpha-q)^2/(2\alpha) ) - \mathit{exp}(-\sqrt{n})\right),
\end{align*}
where $\alpha = (s/n)\times(n-f)\times\boldsymbol{(}1-\mathit{exp}(-\sqrt{n})\boldsymbol{)}$.
\end{proof}

\begin{theorem}[Theorem~\ref{cor:termination}]
After GST, if the leader of view $v$ is correct, then every correct replica decides a value in view $v$ with a probability of at least $1 - 2(n-f)\times\mathit{exp}\boldsymbol{(}-\Theta(\sqrt{n})\boldsymbol{)}$.
\end{theorem}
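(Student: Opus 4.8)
The plan is to obtain Theorem~\ref{cor:termination} directly from Theorem~\ref{thm:termination}, treating it purely as an asymptotic simplification. Theorem~\ref{thm:termination} already establishes that, when the leader of $v$ is correct, every correct replica decides with probability at least $1$ minus $(n-f)$ times the single-replica failure probability, i.e.\ at least $1 - (n-f)\boldsymbol{(}\mathit{exp}(-(\alpha-q)^2/(2\alpha)) + \mathit{exp}(-\sqrt{n})\boldsymbol{)}$, where $q = l\sqrt{n}$, $s = o\times q$, and $\alpha = (s/n)(n-f)\boldsymbol{(}1-\mathit{exp}(-\sqrt{n})\boldsymbol{)}$. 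Consequently the only work remaining is to show that both exponents, $(\alpha-q)^2/(2\alpha)$ and $\sqrt{n}$, are $\Theta(\sqrt{n})$, which collapses this explicit expression into the claimed $1 - 2(n-f)\,\mathit{exp}\boldsymbol{(}-\Theta(\sqrt{n})\boldsymbol{)}$.

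First I would substitute $q = l\sqrt{n}$ and $s = o\,l\sqrt{n}$ to rewrite $\alpha = c\,l\sqrt{n}\,\boldsymbol{(}1-\mathit{exp}(-\sqrt{n})\boldsymbol{)}$, where $c = o(n-f)/n$ as in Corollary~\ref{thm:s:oq:q}. Since $o$ and the ratio $f/n$ are fixed constants chosen in the regime $o > n/(n-f)$ (the regime in which the commit Chernoff bound of Lemma~\ref{lem:commit:alpha} has $\delta = 1-q/\alpha \in (0,1)$), the quantity $c$ is a fixed constant strictly greater than $1$, independent of $n$. For large $n$ the factor $1-\mathit{exp}(-\sqrt{n})$ tends to $1$, so $\alpha = \Theta(\sqrt{n})$, and moreover $\alpha - q = \boldsymbol{(}c\,(1-\mathit{exp}(-\sqrt{n})) - 1\boldsymbol{)}\,l\sqrt{n}$ has leading coefficient $c-1>0$, giving $\alpha - q = \Theta(\sqrt{n})$ as well.

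From these two facts the asymptotics are immediate: $(\alpha-q)^2/(2\alpha) = \Theta(n)/\Theta(\sqrt{n}) = \Theta(\sqrt{n})$, so $\mathit{exp}(-(\alpha-q)^2/(2\alpha)) = \mathit{exp}(-\Theta(\sqrt{n}))$, while the second term $\mathit{exp}(-\sqrt{n})$ is trivially $\mathit{exp}(-\Theta(\sqrt{n}))$. Bounding each of the two terms by a common $\mathit{exp}(-\Theta(\sqrt{n}))$ (taking the slower of the two exponential rates inside the $\Theta$), the single-replica failure probability is at most $2\,\mathit{exp}(-\Theta(\sqrt{n}))$, and the union bound over the $n-f$ correct replicas already carried out in Theorem~\ref{thm:termination} yields a total failure probability of at most $2(n-f)\,\mathit{exp}(-\Theta(\sqrt{n}))$, which is exactly the complement of the stated bound.

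The main delicacy is parameter bookkeeping rather than any deep probabilistic argument: one must confirm that the constants $o$ and $l$, jointly constrained by Theorem~\ref{cor:epsilon} (keeping the \emph{prepare} bound at $1-\mathit{exp}(-\sqrt{n})$) and by Lemma~\ref{lem:commit:alpha} (requiring $\alpha > q$), simultaneously force $c>1$ as a constant independent of $n$. This is the step that could fail, because if $c-1$ were allowed to shrink with $n$ then $\alpha - q$ would degrade to $o(\sqrt{n})$, the \emph{commit} exponent $(\alpha-q)^2/(2\alpha)$ would no longer be $\Theta(\sqrt{n})$, and the clean $\Theta(\sqrt{n})$ rate would be lost. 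Once $c>1$ is pinned down as a fixed constant, the remainder of the argument is the routine $\Theta$-arithmetic sketched above.
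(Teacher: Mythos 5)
Your proposal is correct and follows essentially the same route as the paper: both start from Theorem~\ref{thm:termination} and reduce the claim to checking that the exponent $(\alpha-q)^2/(2\alpha)$ is $\Theta(\sqrt{n})$, which holds because $\alpha$ and $\alpha - q$ are both $\Theta(\sqrt{n})$ when $c = o(n-f)/n$ is a constant exceeding $1$ (the paper instantiates this via $f < n/3$, giving $\alpha \approx 2s/3$ and exponent $(2oq/3-q)^2/(4oq/3) = \Theta(\sqrt{n})$, while you keep $c$ symbolic). Your explicit attention to the constraint $c>1$, which the paper leaves implicit, is a minor improvement in rigor rather than a different argument.
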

\begin{proof}
According to Theorem~\ref{thm:termination}, every correct replica terminates, i.e., it decides a value, with a probability of at least $1 - (n-f)\times\left(\mathit{exp}( -(\alpha-q)^2/(2\alpha) ) - \mathit{exp}(-\sqrt{n})\right)$, where $\alpha = (s/n)\times(n-f)\times(1-\mathit{exp}(-\sqrt{n}))$.
As $f = [0, n/3)$, we have:
\begin{align*}
\alpha > (2s/3)(1-\mathit{exp}(-\sqrt{n})) \approx 2s/3.  
\end{align*}
Since $s=o \times q$, and $q \approx \sqrt{n}$, we have:
\begin{align*}
\mathit{exp}\big( -(\alpha-q)^2/(2\alpha) \big)
&= \mathit{exp}\big( -(2oq/3 - q)^2/(4oq/3) \big)
\\& \approx \mathit{exp}( -\sqrt{n} ). 
\end{align*}
Consequently, every correct replica terminates with a probability of at least $1 - 2(n-f)\mathit{exp}\boldsymbol{(}-\Theta(\sqrt{n})\boldsymbol{)}$.
\end{proof}

\begin{theorem}[Theorem~\ref{thm:termination:infty}]
In \probft, every correct replica eventually decides a value with probability one.
\end{theorem}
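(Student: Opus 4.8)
The plan is to show that even though each individual view after GST with a correct leader only terminates a given correct replica with high (but not certainty) probability, the presence of infinitely many such views drives the overall probability of deciding to one. The key structural fact I would exploit is the round-robin leader rotation: since $\mathtt{leader}(v) = (v-1 \bmod n) + 1$, every correct replica serves as leader infinitely often as $v \to \infty$. Combined with the partial synchrony assumption that the system is synchronous after GST, this guarantees an infinite sequence of views, each occurring after GST, whose leader is correct.

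First I would fix an arbitrary correct replica $i$ and assume, toward establishing the claim, that $i$ has not yet decided. I would enumerate the infinite subsequence of post-GST views $v_1 < v_2 < \cdots$ whose leaders are correct (such a subsequence exists by the round-robin argument above). In each such view $v_k$, Theorem~\ref{cor:termination} gives a uniform lower bound: $i$ decides in $v_k$ with probability at least $p := 1 - 2(n-f)\,\mathit{exp}\boldsymbol{(}-\Theta(\sqrt{n})\boldsymbol{)}$, so the probability of \emph{not} deciding in $v_k$ is at most $1 - p \leq 2(n-f)\,\mathit{exp}\boldsymbol{(}-\Theta(\sqrt{n})\boldsymbol{)} < 1$ for $n$ sufficiently large (and more generally this quantity is strictly below $1$ under the protocol's parameter constraints). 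Crucially, because the adversary is \emph{static} and the \vrf samples in distinct views use distinct seeds $v \,||\, \textsc{T}$, the termination events across different views are driven by fresh, view-specific randomness, so I can treat the failure-to-decide events in successive correct-leader views as (at worst) admitting independent lower-bound escape probabilities. Then the probability that $i$ fails to decide in the first $K$ correct-leader views after GST is at most $(1-p)^K$.

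Taking $K \to \infty$, this bound tends to $0$, so $\Pr(i \text{ never decides}) = 0$, i.e. $i$ decides with probability $1$. A union bound over the finitely many correct replicas (or simply noting the statement is per-replica) then yields that every correct replica decides with probability $1$. This is exactly the familiar pattern of converting a per-round success probability bounded away from zero into almost-sure eventual success via a Borel--Cantelli / geometric-tail argument.

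\textbf{The main obstacle} I anticipate is justifying the independence (or suitable negative dependence) of the per-view termination events needed to multiply the escape probabilities into $(1-p)^K$. The analysis leading to Theorem~\ref{cor:termination} conditions on a single view's sampling and is itself delicate because of the dependencies among \textsc{Prepare} and \textsc{Commit} quorum formations. What rescues the argument is that these dependencies are confined \emph{within} a view: the static adversary fixes $\Pi_F$ once, and the pseudorandomness and collision-resistance of the \vrf ensure the recipient samples in different views are generated from independent seeds, so conditioning on the outcome of earlier views does not raise the conditional failure probability of a later correct-leader view above $1-p$. I would therefore state the argument in terms of this conditional upper bound $\Pr(i \text{ fails in } v_{k} \mid \text{failures in } v_1,\dots,v_{k-1}) \leq 1-p$, which suffices for the geometric decay without needing full independence, and then conclude that the infinite product of failure probabilities vanishes.
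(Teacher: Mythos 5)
Your proposal is correct and follows essentially the same route as the paper's proof: invoke Theorem~\ref{cor:termination} for a per-view decision probability $p$, note that the round-robin rule yields infinitely many post-GST views with correct leaders, and conclude that the failure probability decays as $(1-p)^k \to 0$ (the paper phrases this as the number of needed views following a geometric distribution). If anything, your treatment is slightly more careful than the paper's, since you explicitly justify multiplying the per-view failure bounds via conditional probabilities and view-specific \vrf seeds, a point the paper glosses over by simply asserting the geometric distribution.
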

\begin{proof}
Given a correct leader that proposes a value in a view $v$ after GST, every correct replica decides a value with high probability according to Theorem~\ref{cor:termination}.
However, a correct replica might not decide a value due to not receiving enough messages to form quorums.
Note that there will be an infinite number of views whose leaders are correct after view $v$.
As the number of views required to decide a value follows a geometric distribution~\cite{probBook} with parameter $p \approx 1 - 2(n-f)\mathit{exp}(-\sqrt{n})$, the probability of deciding a value by a correct replica in $k$ views whose leaders are correct is as follows:
\begin{align*}
    & \lim\limits_{k\rightarrow\infty} \Pr(\text{deciding a value in $k$ views with correct leaders})  \approx \lim\limits_{k\rightarrow\infty} 1 - (1-p)^k = 1.
\end{align*}   
\end{proof}

\subsection{Probability of Agreement within a View}\label{sec:safety_within_a_view}
In \probft, different replicas may decide different values since quorum intersections are not deterministic as in PBFT.
However, the protocol must ensure the probability of agreement violation is low.
To cause disagreement in a view of \probft, it is required that at least two distinct correct replicas decide two different values in that view.
Before addressing possible scenarios, it is essential to clarify that since the leader signs every proposal in \probft, a prerequisite for disagreement in a view is to have a faulty leader.

As discussed in Section~\ref{sec:correctness:proofs}, there are three scenarios in which agreement can be violated within a view.
The third scenario depicted in Figure~\ref{fig:3} is optimal.
Hence, we only consider this case in our analysis.

\begin{lemma}\label{lem:agreement:violation}
Given a Byzantine leader who may send at least two proposals~$\mathit{val}$ and~$\mathit{val}'$ and intends to maximize the probability of disagreement, the probability of forming a quorum by a correct replica for~$\mathit{val}$ and a quorum by another correct replica for~$\mathit{val}'$ is at most $\mathit{exp}\boldsymbol{(}-\Theta(\sqrt{n})\boldsymbol{)}^2$ if $r \leq n/o$.
\end{lemma}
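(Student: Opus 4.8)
The plan is to reduce the joint event to a product of two single-value quorum-formation probabilities, each controlled by the \emph{upper} Chernoff bound in exactly the regime where forming a quorum is unlikely. Working in the optimal (third) case, the leader sends $\mathit{val}$ only to $\Pi_C^1 \cup \Pi_F$ and $\mathit{val}'$ only to $\Pi_C^2 \cup \Pi_F$, so at most $r$ replicas may multicast messages carrying a given value (here $r \le (n+f)/2$). Fix a correct replica $p_1$ and let $X$ count how many of these $r$ senders of $\mathit{val}$ place $p_1$ in their VRF sample. By Lemma~\ref{lem:expect}, $\mathbb{E}[X] = rs/n = roq/n$, and the hypothesis $r \le n/o$ gives $\mathbb{E}[X] \le q$. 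Thus $p_1$ forming a quorum for $\mathit{val}$ is the upper-tail event $\{X \ge q\} = \{X \ge (1+\delta)\mathbb{E}[X]\}$ with $\delta = n/(ro) - 1 \ge 0$.

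First I would apply the upper Chernoff bound~\ref{ineq:chernof}, which after substituting $\mathbb{E}[X] = q/(1+\delta)$ yields
\[
\Pr(X \ge q) \le \exp\!\Big(-\frac{\delta^2\, q}{(1+\delta)(2+\delta)}\Big).
\]
This is precisely the expression appearing in Theorem~\ref{thm:view:change} with $\delta = 2n/(o(n+f))-1$, and because $o$ is strictly below $2n/(n+f)$ the quantity $\delta$ is a positive constant, so the exponent is $\Theta(q) = \Theta(\sqrt{n})$ (recall $q = l\sqrt{n}$). Hence $p_1$ forms a quorum for $\mathit{val}$ with probability at most $\exp(-\Theta(\sqrt{n}))$, and symmetrically a correct $p_2$ forms a quorum for $\mathit{val}'$ with probability at most $\exp(-\Theta(\sqrt{n}))$.

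To combine the two events I would show their joint probability is bounded by the product. The correct senders of $\mathit{val}$ ($\Pi_C^1$) and of $\mathit{val}'$ ($\Pi_C^2$) are disjoint, while $\Pi_F$ feeds both sender groups. Reusing the negative-association machinery from Lemma~\ref{lem:commit:alpha} --- Theorem~\ref{thm:na:rv} for sample-membership indicators drawn without replacement, together with closure under products and disjoint monotone aggregation --- the indicator that $p_1$ receives a given sender's message and the indicator that $p_2$ receives a sender's message are negatively associated; this holds even for the shared $\Pi_F$ senders, since for a single sample the events ``contains $p_1$'' and ``contains $p_2$'' are themselves negatively associated. Consequently the two quorum-formation events $A$ and $B$ are negatively associated, so $\Pr(A \cap B) \le \Pr(A)\,\Pr(B) \le \exp(-\Theta(\sqrt{n}))^2$, as claimed.

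The main obstacle is the dependence in this last step: because VRF samples are drawn without replacement and the Byzantine replicas appear in both sender groups, independence fails, and the joint bound must be routed through negative association rather than obtained by naive multiplication. A secondary point requiring care is quantitative: the estimate is only exponentially small when $\delta$ is bounded away from $0$, so the hypothesis must effectively read $ro \le (1-\varepsilon)n$ for a constant $\varepsilon>0$ (equivalently $o < 2n/(n+f)$ strictly); at the boundary $r = n/o$ one has $\delta=0$ and the bound degenerates. This lemma supplies one factor of $\exp(-\Theta(\sqrt{n}))$ per decided value, and since deciding additionally requires a commit quorum on top of the prepare quorum, applying it across both phases is what will ultimately yield the $\exp(-\Theta(\sqrt{n}))^4$ bound of Theorem~\ref{thm:agreement:in:view}.
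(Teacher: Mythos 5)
Your proposal follows essentially the same route as the paper's proof: the same random variable $X$ counting senders whose VRF sample contains the fixed replica, the same expectation $rs/n$ from Lemma~\ref{lem:expect}, the same upper Chernoff bound~\ref{ineq:chernof} with $\delta = n/(or)-1$ yielding the single-replica bound $\mathit{exp}(-\Theta(\sqrt{n}))$, and the same negative-association argument to multiply the two quorum-formation probabilities. If anything, you are more careful than the paper on two points it glosses over --- spelling out why the two quorum events are negatively associated despite the shared Byzantine senders, and flagging that the bound degenerates at $r = n/o$ where $\delta = 0$, so $\delta$ must be bounded away from zero for the exponent to be $\Theta(\sqrt{n})$.
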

\begin{proof}
According to our discussion in Section~\ref{sec:correctness:proofs}, the best strategy for the Byzantine leader is the one depicted in Figure~\ref{fig:3}.
Consequently, we can consider two sets $\Pi^1$ and $\Pi^2$, each with a size of $r = ((n-f)/2) + f$.
The leader sends a value~$\mathit{val}$ to~$\Pi^1$ and another value~$\mathit{val}'$ to~$\Pi^2$.  
Assume that~$i$ and~$j$ are two correct replicas such that~$i\in \Pi^1$ and~$j \in \Pi^2$.
We need to compute an upper bound for the probability that $i$ forms a probabilistic quorum for~$\mathit{val}$ while $j$ forms a probabilistic quorum for~$\mathit{val}'$.
With this aim, let $X$ be a random variable identifying the number of replicas within $\Pi^1$ that have~$i$ in their samples.
Note that $i$ can form a probabilistic quorum for~$\mathit{val}$ if $X \geq q$.
From Lemma~\ref{lem:expect}, $\mathop{\mathbb{E}}[X] = s\times r/n$.
As $s = o \times q$, $\mathop{\mathbb{E}}[X] = o\times q \times r/n$.
By assuming $\delta = (n/(o\times r)) -1$, when~$r \leq n/o$, we have:
\begin{align*}
    & \Pr(\text{$i$ forms a probabilistic quorum for~$\mathit{val}$})
    \\&\qquad \leq \  \Pr(X \geq q)
    \\&\qquad = \Pr\big(X \geq (1+\delta)o\times q\times r/n \big)
    \\&\qquad = \Pr\big(X \geq (1+\delta)\mathop{\mathbb{E}}[X] \big) 
    \\&\qquad \leq \mathit{exp}\big( -\delta^2\mathop{\mathbb{E}}[X]/(\delta+2) \big) \qquad (\text{using the Chernoff bound~\ref{ineq:chernof}})
    \\&\qquad = \mathit{exp}\big( -\delta^2\times o\times q\times r/(n(\delta+2)) \big). 
\end{align*}
Note that $r = \Theta(n)$, $\delta = \Theta(1)$, and $q = \Theta(\sqrt{n})$.
Hence, 
\begin{align*}
    \Pr(\text{$i$ forms a probabilistic quorum for~$\mathit{val}$})
    \leq \mathit{exp}\boldsymbol{(}-\Theta(\sqrt{n})\boldsymbol{)}.
\end{align*}
Note that forming a probabilistic quorum by~$i$ and forming a probabilistic quorum by~$j$ are negatively associated. 
Hence, we have:
\begin{align*}
    & \Pr(\text{$i$ and $j$ form probabilistic quorums})
    \\&\qquad \leq \Pr(\text{$i$ forms a probabilistic quorum}) \times 
    \Pr(\text{$j$ forms a probabilistic quorum})
    \\&\qquad \leq \mathit{exp}\boldsymbol{(}-\Theta(\sqrt{n})\boldsymbol{)}^2.
\end{align*}

\end{proof}

\begin{theorem}[Theorem~\ref{thm:agreement:in:view}]
Given a Byzantine leader who may send several proposals, the probability of agreement violation in a view is at most $\mathit{exp}\boldsymbol{(} -\Theta(\sqrt{n}) \boldsymbol{)}^4$. 
\end{theorem}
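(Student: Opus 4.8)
The plan is to reduce to the optimal Byzantine-leader strategy of Figure~\ref{fig:3} and then decompose an agreement violation into four quorum-formation events, each of which is an upper-tail deviation controlled by the core estimate of Lemma~\ref{lem:agreement:violation}. By the optimality argument already established (observations (1)--(3) preceding Theorem~\ref{thm:m+1:to:m}, together with Theorems~\ref{thm:m+1:to:m} and~\ref{cor:increase:t}), it suffices to analyze the configuration in which the leader sends exactly two proposals $\mathit{val}_1$ and $\mathit{val}_2$ to $\Pi^1 = \Pi_C^1 \cup \Pi_F$ and $\Pi^2 = \Pi_C^2 \cup \Pi_F$, where $\Pi_C^1, \Pi_C^2$ partition $\Pi_C$ into equal halves. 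I would first record that this reduced configuration has $r := |\Pi^1| = |\Pi^2| = (n+f)/2$, so the hypothesis $r \leq n/o$ of Lemma~\ref{lem:agreement:violation} amounts to $o \leq 2n/(n+f)$, which is the regime in which the agreement bound is meaningful.

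Next, I would observe that an agreement violation in the view means two correct replicas $i \in \Pi_C^1$ and $j \in \Pi_C^2$ decide $\mathit{val}_1$ and $\mathit{val}_2$, respectively. By the protocol, $i$ deciding $\mathit{val}_1$ entails both event $E_1$ (\emph{$i$ forms a prepare quorum for $\mathit{val}_1$}) and event $E_2$ (\emph{$i$ forms a commit quorum for $\mathit{val}_1$}); symmetrically, $j$ deciding $\mathit{val}_2$ entails $E_3$ and $E_4$ for its prepare and commit phases. Hence the violation event is contained in $E_1 \cap E_2 \cap E_3 \cap E_4$, and it remains to bound this joint probability by $\mathit{exp}(-\Theta(\sqrt n))^4$.

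The core step is to treat the prepare and commit phases separately and then combine them. For the prepare phase, Lemma~\ref{lem:agreement:violation} gives directly $\Pr(E_1 \cap E_3) \leq \mathit{exp}(-\Theta(\sqrt n))^2$, since the senders of $\mathit{val}_1$- and $\mathit{val}_2$-prepare messages lie in $\Pi^1$ and $\Pi^2$ (each of size $r \leq n/o$), their \vrf samples are negatively associated, and forming a quorum is an upper-tail event whose expected contributing count is $o q r/n \leq q$. For the commit phase I would condition on the entire prepare phase: whatever its realization, the replicas able to send $\mathit{val}_1$-commit (resp. $\mathit{val}_2$-commit) messages form a set of size at most $r$, so by monotonicity (Theorem~\ref{cor:increase:t}) fewer senders only lower the quorum probability, and the same upper-tail/negative-association argument bounds the conditional probability of $E_2 \cap E_4$ by $\mathit{exp}(-\Theta(\sqrt n))^2$ uniformly over the conditioning. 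Because the commit samples use the seed $v \,\|\, \text{``commit''}$, which by \vrf pseudorandomness is independent of the prepare seed $v \,\|\, \text{``prepare''}$, this conditional bound survives integration against $E_1 \cap E_3$, yielding
\begin{align*}
\Pr(E_1 \cap E_2 \cap E_3 \cap E_4)
&= \mathbb{E}\big[ \mathbf{1}_{E_1 \cap E_3}\,\Pr(E_2 \cap E_4 \mid \text{prepare phase}) \big] \\
&\leq \mathit{exp}(-\Theta(\sqrt n))^2 \cdot \Pr(E_1 \cap E_3)
\leq \mathit{exp}(-\Theta(\sqrt n))^4.
\end{align*}

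The step I expect to be the main obstacle is the commit-phase bound, specifically making the conditioning rigorous: the set of commit senders per value is itself random and correlated with the prepare phase, and the Byzantine replicas belong to both $\Pi^1$ and $\Pi^2$, so one must argue that the worst case is still captured by ``at most $r$ senders per value'' and that the independence of the commit seed lets the conditional estimate be pulled out of the expectation. Invoking Theorem~\ref{cor:increase:t} to reduce to the maximal sender set, and reusing the negative-association machinery of Lemma~\ref{lem:agreement:violation} within the commit phase rather than redoing a fresh dependency analysis, is what keeps this argument clean.
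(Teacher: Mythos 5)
Your proposal follows essentially the same route as the paper's proof: reduce to the optimal two-proposal configuration of Figure~\ref{fig:3}, bound the joint prepare-quorum event for the two deciding replicas by $\mathit{exp}\boldsymbol{(}-\Theta(\sqrt{n})\boldsymbol{)}^2$ via Lemma~\ref{lem:agreement:violation}, and combine it with an analogous bound for the joint commit-quorum event to obtain the fourth power. The only difference is that where the paper simply asserts $\Pr(A \cap B) \leq \Pr(A)\times\Pr(B)$ for the prepare-phase and commit-phase events, you justify this factorization by conditioning on the prepare phase, bounding the commit senders per value by $r$, and invoking the independence of the commit-phase \vrf samples \--- a more careful treatment of the step the paper glosses over.
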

\begin{proof}
Given a Byzantine leader who sends at least two proposals~$\mathit{val}$ and~$\mathit{val}'$ and intends to maximize the probability of disagreement, the agreement can be violated if at least two replicas~$i$ and~$j$ decide different values.
Without loss of generality, assume that~$i$ decides~$\mathit{val}$ while~$j$ decides~$\mathit{val}'$.
Note that a replica can decide a value if it forms two consecutive quorums.
Let $A$ (resp. $B$) be an event that~$i$ forms a \textit{prepare} (resp. \textit{commit}) quorum for~$\mathit{val}$ and~$j$ forms a \textit{prepare} (resp. \textit{commit}) quorum for~$\mathit{val}'$.
According to Lemma~\ref{lem:agreement:violation}, the probability of happening~$A$ is bounded by $\mathit{exp}\boldsymbol{(}-\Theta(\sqrt{n})\boldsymbol{)}^2$.
Note that the probability of happening event~$B$ is less than the probability of happening event~$A$. 
Besides, note that event~$B$ might happen if event~$A$ has happened.
Hence, we have:
\begin{align*}
    \Pr(A \text{ and } B \text{ happen}) 
    \leq Pr(A)\times\Pr(B)
    \leq \mathit{exp}\boldsymbol{(}-\Theta(\sqrt{n})\boldsymbol{)}^4.
\end{align*}
\end{proof}

\subsection{Probability of Agreement with a View Change}\label{sec:safety_view_change}
\begin{lemma}\label{lem:d:3}
    The probability of deciding a value $\mathit{val}$ by a correct replica when $r \leq n/o$ replicas have prepared~$\mathit{val}$ is at most $\mathit{exp}\boldsymbol{(} -\delta^2\times o\times q\times r/(n(\delta+2)) \boldsymbol{)} $, where $\delta = (n/(o\times r)) -1$.
\end{lemma}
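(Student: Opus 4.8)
The plan is to reduce the statement to the single-quorum tail computation already carried out in the proof of Lemma~\ref{lem:agreement:violation}, now applied to the \emph{commit} phase. The key observation is that a correct replica decides $\mathit{val}$ only after forming a \emph{commit} quorum for $\mathit{val}$, i.e., after collecting $q$ \textsc{Commit} messages carrying $\mathit{val}$ whose VRF proofs place the deciding replica in each sender's sample. A \textsc{Commit} message carrying $\mathit{val}$ can be produced only by a replica that prepared $\mathit{val}$, and by hypothesis there are exactly $r$ such replicas. Hence the probability of deciding is bounded above by the probability of assembling this commit quorum out of those $r$ potential senders, and it suffices to bound the latter.

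First I would fix the deciding replica~$i$ and, for each of the $r$ replicas that prepared~$\mathit{val}$, consider its VRF-selected commit sample of size $s = o \times q$ drawn uniformly at random from~$\Pi$. Let $X$ count how many of these $r$ samples contain~$i$. Since $i$ lies in a single such sample with probability $s/n$ and the samples of distinct replicas are produced by independent VRF invocations, $X$ is a sum of $r$ i.i.d. Bernoulli indicators, and by Lemma~\ref{lem:expect} we have $\mathop{\mathbb{E}}[X] = s \times r / n = o \times q \times r / n$. Because $i$ can form a commit quorum for~$\mathit{val}$ only when at least $q$ of these senders contact it, we obtain $\Pr(i \text{ decides } \mathit{val}) \leq \Pr(X \geq q)$.

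The final step is a direct Chernoff upper-tail estimate, exactly as in Lemma~\ref{lem:agreement:violation}. Setting $\delta = n/(o\times r) - 1$, the assumption $r \leq n/o$ guarantees $\delta \geq 0$, and one checks that $(1+\delta)\mathop{\mathbb{E}}[X] = q$. Applying the upper-tail bound~\ref{ineq:chernof} then gives
\begin{align*}
\Pr(X \geq q) = \Pr\boldsymbol{(} X \geq (1+\delta)\mathop{\mathbb{E}}[X] \boldsymbol{)} \leq \mathit{exp}\boldsymbol{(} -\delta^2\mathop{\mathbb{E}}[X]/(\delta+2) \boldsymbol{)} = \mathit{exp}\boldsymbol{(} -\delta^2 \times o \times q \times r/(n(\delta+2)) \boldsymbol{)},
\end{align*}
which is precisely the claimed bound.

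Unlike the termination argument, I expect no obstacle from statistical dependence here: since we only need the \emph{commit}-phase quorum and we condition on the fixed set of $r$ preparers, the relevant indicators are independent and the plain Chernoff bound applies without invoking negative association. The only point requiring care is the reduction itself \--- arguing that deciding implies forming the commit quorum and that at most~$r$ replicas can contribute matching \textsc{Commit} messages for~$\mathit{val}$ \--- after which the calculation is identical in form to the ones in Lemma~\ref{lem:agreement:violation} and Theorem~\ref{thm:1}.
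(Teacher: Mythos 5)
Your proof is correct and matches the paper's own argument essentially step for step: bounding the decision probability by the probability that at least $q$ of the $r$ preparers include the deciding replica in their commit samples, computing $\mathop{\mathbb{E}}[X] = o\times q\times r/n$ via Lemma~\ref{lem:expect}, and applying the upper-tail Chernoff bound~\ref{ineq:chernof} with $\delta = n/(o\times r)-1$. Your added remarks \--- that $r \leq n/o$ guarantees $\delta \geq 0$ and that no negative-association machinery is needed here since the senders' samples are independent \--- are sound clarifications of points the paper leaves implicit.
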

\begin{proof}
    Assume that a correct replica~$i$ decides a value $\mathit{val}$ when a set of replicas $R$ with a size of~$r$ have prepared $\mathit{val}$.
    Let $X$ be a random variable identifying the number of replicas within $R$ that have~$i$ in their commit samples.
    Note that $i$ can decide $\mathit{val}$ if $X \geq q$.
    From Lemma~\ref{lem:expect}, $\mathop{\mathbb{E}}[X] = s\times r/n$.
    As $s = o \times q$, $\mathop{\mathbb{E}}[X] = o\times q \times r/n$.
    By assuming $\delta = (n/(o\times r)) -1$, we have:
    \begin{align*}
        \Pr(\text{$i$ decides $\mathit{val}$})
        & \leq \  \Pr(X \geq q)
        \\&=\    \Pr\big(X \geq (1+\delta)o\times q\times r/n \big)
        \\&=\    \Pr\big(X \geq (1+\delta)\mathop{\mathbb{E}}[X] \big) 
        \\& \leq\ \mathit{exp}\big( -\delta^2\mathop{\mathbb{E}}[X]/(\delta+2) \big) \qquad (\text{using the Chernoff bound~\ref{ineq:chernof}})
        \\&=\    \mathit{exp}\big( -\delta^2\times o\times q\times r/(n(\delta+2)) \big). 
    \end{align*}
\end{proof}

\begin{theorem}[Theorem~\ref{thm:view:change}]
    The probability of proposing a value~$\mathit{val}'$ when another value $\mathit{val}$ has been decided by a correct replica in a prior view is at most $3 \times \mathit{exp}\boldsymbol{(}-(q\times \delta^2)/((\delta+1)\times(\delta+2)) \boldsymbol{)} $, where $\delta = 2n/(o\times (n+f)) -1$.
\end{theorem}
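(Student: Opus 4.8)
The plan is to reduce the event ``the leader of some later view proposes $\mathit{val}'\neq\mathit{val}$'' to the event ``a correct replica decided $\mathit{val}$ even though only few replicas prepared it,'' and then bound the latter with Lemma~\ref{lem:d:3}. First I would recall the view-change mechanism: the leader of any view $v''>v$ collects a deterministic quorum of $\lceil(n+f+1)/2\rceil$ \textsc{NewLeader} messages and is constrained (via \texttt{validNewLeader} and \texttt{safeProposal}) to output the value prepared in the most recent view by the most replicas. Hence, if at least $\lceil(n+f+1)/2\rceil$ replicas had prepared $\mathit{val}$, the leader would be forced to re-propose $\mathit{val}$. Consequently, $\mathit{val}'$ can be proposed only if the number $r$ of replicas that prepared $\mathit{val}$ satisfies $r<\lceil(n+f+1)/2\rceil$, i.e.\ essentially $r\le(n+f)/2$.

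The key observation is that a correct replica \emph{did} decide $\mathit{val}$, which by Lemma~\ref{lem:d:3} is unlikely when only $r$ replicas prepared it. Writing $\delta_r = n/(o r)-1$, the bound of Lemma~\ref{lem:d:3} simplifies cleanly: since $o q r/n = q/(\delta_r+1)$, the exponent $\delta_r^2\, o q r/(n(\delta_r+2))$ collapses exactly to $q\delta_r^2/((\delta_r+1)(\delta_r+2))$, giving probability $\mathit{exp}(-q\delta_r^2/((\delta_r+1)(\delta_r+2)))$. Because $\delta^2/((\delta+1)(\delta+2))$ is increasing in $\delta>0$ while $\delta_r$ decreases as $r$ grows, this bound is weakest (largest) at the \emph{largest} admissible $r$, namely the forcing threshold $r=(n+f)/2$. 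Substituting $r=(n+f)/2$ yields $\delta_r = 2n/(o(n+f))-1 = \delta$, hence the per-scenario bound $\mathit{exp}(-q\delta^2/((\delta+1)(\delta+2)))$, matching the exponent in the statement.

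Finally, I would obtain the constant factor $3$ by a union bound over the distinct ways $\mathit{val}'$ can become the leader's proposal, as enumerated before the theorem: the two ranges $1\le w\le f$ and $f+1\le w<\lceil(n+f+1)/2\rceil$ for the number $w$ of correct preparers, together with the correct-versus-Byzantine-leader sub-cases. Each of these reduces to the single underlying event ``$\mathit{val}$ decided while at most $(n+f)/2$ replicas prepared it,'' so each is bounded by the per-scenario quantity above; these collapse to at most three distinguishable events, and summing their bounds supplies the factor $3$.

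The main obstacle I expect is not the calculation but the case analysis that legitimizes the uniform reduction to Lemma~\ref{lem:d:3}. In particular, I would need to argue that Byzantine replicas misreporting their prepared value in \textsc{NewLeader} messages (and Byzantine \textsc{Commit} senders contributing to the decision of $\mathit{val}$) cannot, in the worst case, push the effective preparer count past the threshold $r=(n+f)/2$ while still allowing $\mathit{val}'$ to be proposed. Establishing that every admissible adversarial strategy in each of the three cases is dominated by this threshold event \--- so that the monotone bound of Lemma~\ref{lem:d:3} applies uniformly \--- is the delicate step; once it is in place, the simplification of the exponent and the union bound are routine.
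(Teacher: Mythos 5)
Your proposal is correct and follows essentially the same route as the paper's proof: both reduce the event to ``a correct replica decided $\mathit{val}$ although at most $(n+f)/2$ replicas prepared it,'' apply Lemma~\ref{lem:d:3} at the threshold $r=(n+f)/2$ (where your $\delta_r$ becomes the $\delta$ of the statement), and multiply by $3$ for the number of cases. Two of your steps are welcome explicit versions of what the paper leaves implicit: the algebraic collapse of the exponent of Lemma~\ref{lem:d:3} to $q\delta^2/((\delta+1)(\delta+2))$, and the monotonicity argument that $\delta^2/((\delta+1)(\delta+2))$ is increasing in $\delta$ while $\delta_r$ decreases in $r$, so the largest admissible $r$ dominates all smaller ones. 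The one place you are weaker than the paper is the constant $3$: the paper obtains it from three explicitly named, disjoint scenarios for how $\mathit{val}'$ can become proposable \--- $\mathit{val}'$ was prepared by at least $\lceil(n+f+1)/2\rceil$ replicas in some view $v'<v$; $\mathit{val}'$ was prepared by at least $\lceil(n+f+1)/2\rceil$ replicas in view $v$ itself; or at least $\lceil(n+f+1)/2\rceil$ replicas prepared nothing and the leader proposes its own value \--- each of which forces $\mathit{val}$ to have at most $(n-f)/2$ correct preparers (plus at most $f$ Byzantine ones, giving the $(n+f)/2$ threshold). Your ``at most three distinguishable events,'' obtained from the $w$-ranges and the correct-versus-Byzantine leader sub-cases, are never pinned down, so that step of your argument would need the paper's enumeration (or an equivalent precise one) to be made rigorous; the ``delicate step'' you flag at the end is exactly what this enumeration settles.
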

\begin{proof}
    Assume that a value~$\mathit{val}$ is decided by a correct replica in view~$v$.
    Also, assume that the leader~$l$ of view~$v+1$ is Byzantine.
    Recall that~$l$ needs to collect $\lceil (n+f+1)/2\rceil$ \textsc{NewView} messages to be allowed to propose a value.
    There are only three possible scenarios where~$l$ can propose another valid value~$\mathit{val}'$ in view~$v+1$:
    \begin{enumerate}
        \item Value~$\mathit{val}'$ was proposed in a view~$v' < v$, and there are at least $\lceil (n+f+1)/2\rceil$ replicas~$P$ such that each correct replica~$i \in P$ prepared~$\mathit{val}'$ in view~$v'$ but has not prepared any other value in a view after~$v'$.
        \item There are at least $\lceil (n+f+1)/2\rceil$ replicas that have prepared~$\mathit{val}'$ in view~$v$.
        \item There are at least $\lceil (n+f+1)/2\rceil$ replicas~$P$ such that each correct replica~$i \in P$ has not prepared any value in prior views; hence, the leader proposes its value. 
    \end{enumerate}
    
    Note that in each of these scenarios, value~$\mathit{val}$ must be prepared by at most~$(n-f)/2$ correct replicas.
    We now show that the probability of each scenario is negligible.
    Since~$f$ Byzantine replicas might have prepared~$\mathit{val}$, the number of replicas that have prepared~$\mathit{val}$ is less than~$((n-f)/2) + f = (n+f)/2$.
    From Lemma~\ref{lem:d:3}, the probability of deciding~$\mathit{val}$ when $r=(n+f)/2$ replicas have prepared~$\mathit{val}$ is at most $p=\mathit{exp}\boldsymbol{(}-(q\times \delta^2)/((\delta+1)\times(\delta+2)) \boldsymbol{)} $, where $\delta = 2n/(o\times (n+f)) -1$.
    Hence, the probability of each scenario is negligible and bounded by~$p$. 
    Since these scenarios are disjoint, the probability of proposing value~$\mathit{val}'$ is bounded by~$3p$.
\end{proof}

\begin{corollary}[Corollary~\ref{thm:livenss:safety}]
\probft guarantees safety with a probability of $1 - \mathit{exp}(-\Theta(\sqrt{n}))$.
\end{corollary}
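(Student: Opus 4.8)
The plan is to derive the corollary by combining the two agreement bounds already established---Theorem~\ref{thm:agreement:in:view} for a single view and Theorem~\ref{thm:view:change} across views---and by recalling that safety here comprises Validity together with Probabilistic Agreement. Since Validity holds deterministically (every decided value has passed $\mathtt{safeProposal}$, which enforces $\mathtt{valid}(x)$), I would reduce the corollary to bounding the probability that Probabilistic Agreement is violated, i.e.\ that two correct replicas decide distinct values.

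First I would split an agreement violation into two exhaustive cases according to where the two conflicting decisions occur. In the \emph{intra-view} case, two correct replicas decide different values within one and the same view; by Theorem~\ref{thm:agreement:in:view} this has probability at most $\mathit{exp}\boldsymbol{(}-\Theta(\sqrt{n})\boldsymbol{)}^4$, since such a violation forces a faulty leader's optimal equivocation together with the formation of two prepare and two commit quorums for distinct values. In the \emph{inter-view} case, some correct replica decides $\mathit{val}$ in a view $v$ while another decides $\mathit{val}' \neq \mathit{val}$ in a later view $v' > v$; any such later decision requires the leader of the relevant view to have proposed $\mathit{val}'$, which is exactly the event bounded by Theorem~\ref{thm:view:change} at $3\,\mathit{exp}\boldsymbol{(}-(q\,\delta^2)/((\delta+1)(\delta+2))\boldsymbol{)}$ with $\delta = 2n/(o(n+f))-1$.

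The next step is to read off the asymptotics. Because $q = l\sqrt{n} = \Theta(\sqrt{n})$ and $\delta = \Theta(1)$ (as $o$ is a constant and $f < n/3$), the inter-view bound simplifies to $\mathit{exp}\boldsymbol{(}-\Theta(\sqrt{n})\boldsymbol{)}$, and likewise $\mathit{exp}\boldsymbol{(}-\Theta(\sqrt{n})\boldsymbol{)}^4$ is of the form $\mathit{exp}\boldsymbol{(}-\Theta(\sqrt{n})\boldsymbol{)}$. A union bound over the two cases then yields a total violation probability of at most $\mathit{exp}\boldsymbol{(}-\Theta(\sqrt{n})\boldsymbol{)}^4 + 3\,\mathit{exp}\boldsymbol{(}-\Theta(\sqrt{n})\boldsymbol{)} = \mathit{exp}\boldsymbol{(}-\Theta(\sqrt{n})\boldsymbol{)}$, with the larger (inter-view) term dominating. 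Taking the complement gives safety with probability at least $1 - \mathit{exp}(-\Theta(\sqrt{n}))$.

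The main obstacle I anticipate is justifying the inter-view case without a union bound over the \emph{infinitely many} views that may follow $v$, which would destroy the exponential bound. The resolution---and the point requiring care---is that Theorem~\ref{thm:view:change} does not bound a per-view event but the probability that a conflicting proposal is \emph{possible at all}: each of its three scenarios presupposes that at most $(n-f)/2$ correct replicas prepared $\mathit{val}$, whereas a correct replica already decided $\mathit{val}$, and Lemma~\ref{lem:d:3} shows that deciding $\mathit{val}$ while so few replicas prepared it is itself an $\mathit{exp}(-\Theta(\sqrt{n}))$ event. Hence the bad inter-view event is determined \emph{once} by the deciding view $v$---through the number of correct replicas that prepared $\mathit{val}$, which thereafter locks every subsequent leader via the $\mathtt{validNewLeader}$ and $\mathtt{safeProposal}$ machinery---so a single application of Theorem~\ref{thm:view:change} suffices and the $\sqrt{n}$-exponential bound is preserved.
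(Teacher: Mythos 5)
Your proposal is correct and takes essentially the same approach as the paper: the paper's own proof is a one-line combination of Theorems~\ref{thm:agreement:in:view} and~\ref{thm:view:change}, which is precisely the intra-view/inter-view decomposition you spell out. The extra care you take---checking the asymptotics of the view-change bound and arguing why no union bound over infinitely many later views is needed---goes beyond what the paper records but is consistent with, and fills in, its intended argument.
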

\begin{proof}
    From Theorems~\ref{thm:agreement:in:view} and \ref{thm:view:change}, it follows that \probft guarantees safety with a probability of $1 - \mathit{exp}(-\Theta(\sqrt{n}))$.
\end{proof}
\end{document}